\definecolor{niceRed}{RGB}{190,38,38}
\definecolor{niceBlue}{HTML}{0466a7}
\def\BIBand{and}%
\newtheoremstyle{spaced}%
  {10pt}   
  {10pt}   
  {\itshape} 
  {}       
  {\bfseries} 
  {.}      
  { }      
  {\thmname{#1}\thmnumber{ #2}\thmnote{ (#3)}} 
\theoremstyle{spaced}
\newtheorem{definition}{Definition}[section]
\newtheorem{lemma}{Lemma}[section]
\newtheorem{theorem}{Theorem}[section]
\newtheorem{proposition}{Proposition}[section]
\newtheorem{corollary}{Corollary}[section]
\newtheorem{remark}{Remark}[section]
\definecolor{lightgray}{gray}{0.7}
\definecolor{synblue}{RGB}{0,66,118}
\definecolor{synred}{RGB}{143,53,71}
\definecolor{syngreen}{RGB}{34,143,156}
\definecolor{darkviolet}{rgb}{0.58, 0.0, 0.83}
\definecolor{customblue}{HTML}{4C72B0}
\definecolor{customred}{HTML}{C44E52}
\newcommand{\Acal}{\mathcal{A}}
\newcommand{\Bcal}{\mathcal{B}}
\newcommand{\Ecal}{\mathcal{E}}
\newcommand{\Ical}{\mathcal{I}}
\newcommand{\Mcal}{\mathcal{M}}
\newcommand{\Ncal}{\mathcal{N}}
\newcommand{\Ucal}{\mathcal{U}}
\newcommand{\Vcal}{\mathcal{V}}
\newcommand{\Xcal}{\mathcal{X}}
\newcommand{\R}{\mathbb{R}}
\DeclareMathOperator{\E}{\mathbb{E}}
\DeclareMathOperator{\proj}{\Pi}
\DeclareMathOperator*{\argmin}{argmin}
\newcommand{\norm}[1]{\left\| #1 \right\|}
\newcommand{\rr}{\right)}
\renewcommand{\ll}{\left(}
\newcommand{\game}{\mathcal{G}} 
\newcommand{\nfgdiscr}{\game = \ll \players, \pures, \pay \rr} 
\newcommand{\players}{\mathcal{N}} 
\newcommand{\n}{n} 
\newcommand{\mi}{{-i}} 
\newcommand{\pures}{\mathcal{A}} 
\newcommand{\pay}{u} 
\theoremstyle{THkey}}
\title{Agentic Markets: Game Dynamics and Equilibrium in Markets with Learning Agents}
\newif\ifuniqueAffiliation
\author{
	Martin Bichler \\
	\small Department of Computer Science\\
	\small Technical University of Munich\\
	\small \texttt{m.bichler@tum.de} \\
	\And
	Julius Durmann\\
	\small Department of Computer Science\\
	\small Technical University of Munich \\
	\small \texttt{julius.durmann@tum.de}\\
	\And
	Matthias Oberlechner\\
	\small Department of Computer Science\\
	\small Technical University of Munich \\
	\small \texttt{matthias.oberlechner@tum.de}\\
}
\begin{document}
\maketitle
\pagenumbering{roman} 

\begin{abstract}
Autonomous and learning agents increasingly participate in markets - setting prices, placing bids, ordering inventory. Such agents are not just aiming to optimize in an uncertain environment; they are making decisions in a game-theoretical environment where the decision of one agent influences the profit of other agents. While game theory usually predicts outcomes of strategic interaction as an equilibrium, it does not capture how repeated interaction of learning agents arrives at a certain outcome. This article surveys developments in modeling agent behavior as dynamical systems, with a focus on projected gradient and no-regret learning algorithms. In general, learning in games can lead to all types of dynamics, including convergence to equilibrium, but also cycles and chaotic behavior. It is important to understand when we can expect efficient equilibrium in automated markets and when this is not the case. Thus, we analyze when and how learning agents converge to an equilibrium of a market game, drawing on tools from variational inequalities and Lyapunov stability theory. Special attention is given to the stability of projected dynamics and the convergence to equilibrium sets as limiting outcomes. Overall, the paper provides mathematical foundations for analyzing stability and convergence in agentic markets driven by autonomous, learning agents.
\end{abstract}
\keywords{algorithmic markets, game theory, dynamical systems} 

\newpage
\setlength{\cftbeforesecskip}{6pt}  
\setlength{\cftbeforesubsecskip}{4pt}  
\setcounter{tocdepth}{2}
\tableofcontents

\newpage
\pagenumbering{arabic}


\section{Introduction}

Markets increasingly rely on autonomous agents - algorithmic systems that make pricing, bidding, and ordering decisions with minimal or no human intervention \citep{acharya2025agentic}. From dynamic pricing on online retail platforms to bidding in ad auctions and order allocation in supply chains, agentic AI systems are becoming pervasive \citep{economist2018, fountaine2019building, sharma2022role}. These agents operate in environments characterized by repeated strategic interactions, which are best modeled as games. We will refer to such environments as \textit{agentic markets}.

While traditional game theory focuses on equilibrium outcomes, it assumes that agents are rational and play an equilibrium strategy from the start. However, in real-world markets, artificial agents would not even have the necessary information about the costs or values of others to derive an equilibrium strategy. Instead, there is evidence that online learning algorithms \citep{shalev-shwartz2007OnlineLearningTheory} are widely used to act in such environments and set prices or submit bids in display ad auctions. Game theory offers limited insight into the \textit{dynamics} of how these outcomes are reached (if at all) by learning agents. 
A growing body of experimental and theoretical work shows that learning dynamics can result in complex behaviors such as cycles, divergence, or even chaos \citep{mertikopoulos2018cycles, bailey2018multiplicative, cheung2020chaos}. \citet{sanders2018prevalence} argues that complex non-equilibrium behavior, exemplified by chaos, may be the norm in complicated games with many players. Consequently, the mere existence of a Nash equilibrium does not guarantee that agentic markets will implement equilibrium outcomes.

This gap calls for a dynamical systems perspective on agentic markets. To understand how algorithmic agents interact, adapt, and converge (or not) to equilibrium, we must go beyond static equilibrium concepts. In particular, we require tools to characterize and certify the stability of learning dynamics, especially under constraints such as those arising from probability simplices in mixed extensions of finite, normal-form games. The latter are important, because every continuous game can be discretized into a game with a finite number of actions. 

In this paper, we develop a framework for analyzing game dynamics in agentic markets that is thoroughly based on dynamical systems theory. We show how gradient-based algorithms such as projected gradient ascent can be modeled as dynamical systems on constrained domains. Projected gradient ascent belongs to the class of follow-the-regularized-leader (FTRL) algorithms, which are practically relevant and lend themselves to a formal analysis.  

It cannot be expected that uncoupled dynamics lead to Nash equilibrium in all games, and there are several counterexamples \citep{daskalakis2010learning, flokas2020no, hart2003uncoupled, milionis2022nash}. Therefore, it is important to study specific models of agentic markets such as oligopoly competition or auctions, and understand whether learning algorithms can be expected to converge in specific environments. 

Of particular interest in this study is the use of Lyapunov functions to certify stability. While classical Lyapunov theory applies to smooth, unconstrained systems, markets with autonomous agents often involve discrete-time dynamics, projection onto polytopes (e.g., a probability simplex), and equilibria at the boundary of these polytopes. To address these challenges, we discuss extensions of Lyapunov theory, such as those for projected dynamical systems, and their role in understanding convergence to equilibrium. We provide several new results and a coherent framework to study multi-agent systems as dynamical systems based on an underlying game description. We use simple matrix games to illustrate the various concepts. The article combines the literature on online learning, game theory, and dynamical systems in a systematic way and shows the potential but also the limitations of this approach. The analysis is timely because multi-agent systems are widely used but the outcomes are not well understood \citep{julian2019multi}.

Our goal is to equip researchers and practitioners with the theoretical foundations needed to analyze when algorithmic markets are likely to stabilize, and when regulatory intervention may be necessary. The broader question is no longer merely whether an equilibrium exists, but whether and how learning agents will find it.

\paragraph{Organization}
The article is structured as follows.
Section 2 reviews foundational concepts in online learning, including feedback models and regret minimization. Section 3 discusses normal-form games as a central game representation, the Nash equilibrium, and its connection to variational inequalities. 
Section 4 develops the mathematical tools needed to analyze games as dynamical systems, focusing on Lyapunov stability theory for continuous-time unconstrained dynamical systems. Games with continuous actions and utility functions such as the Tullock contest can be analyzed with these tools. However, they find limitations in the analysis of games with discontinuous utility functions such as first-price or all-pay auctions. A workaround avoiding these discontinuities is to discretize these games and analyze the mixed extension of the resulting finite, normal-form game. However, equilibria in such games often lie at the boundaries or vertices of the probability simplex making it a projected dynamical system.
Section 5 discusses discrete-time projected dynamical systems and the difficulties arising in the stability analysis of these systems. We discuss problems in the analysis of such projected dynamical systems and possibilities to study the stability of the Nash equilibria. 
Finally, Section 6 synthesizes the insights and outlines open questions on the stability and regulation of agentic markets.

\section{Online Learning}\label{sec:online-learning}

A key challenge for agents in algorithmic markets is setting prices or submitting bids that dynamically respond to market conditions while maximizing expected profit. A fundamental dilemma in designing pricing algorithms is whether to prioritize short-term gains by exploiting known high-yield prices or to explore alternative prices that may yield better long-term outcomes. Online learning algorithms are designed to balance this exploration-exploitation trade-off \citep{bubeck2011introduction}.  

In agentic markets, these algorithms typically operate with \textit{bandit feedback}, meaning that after setting a price, a seller observes only the profit associated with that specific price, without knowing the payoffs of other possible choices. The \textit{multi-armed bandit model} has long been recognized as a natural framework for algorithmic pricing, with early applications dating back to \citet{rothschild1974two}. Today, multi-armed bandit algorithms for pricing are widely studied in the academic literature \citep{trovo2015multi, den2015dynamic, bauer2018optimal, mueller2019low, elreedy2021novel, taywade_multi-armed_2023, qu24}, and practitioners provide extensive resources on implementing these algorithms.  

\subsection{Feedback Models and Algorithms}

Online learning and optimization algorithms are central to pricing strategies on online platforms \citep{mueller2019low, elreedy2021novel, taywade_multi-armed_2023, qu24}. Online optimization concerns sequential decision-making in an uncertain environment, aiming to optimize a performance metric over time.  
A generic model of online learning can be formalized as follows (see Algorithm \ref{alg:online-learning}). At each time step $t = 1,2,\dots$, the agent selects an action $a_t \in \mathcal{A}$ based on a strategy $x_t \in \mathcal{X}$, then receives a payoff $u_t(a_t)$. In algorithmic pricing, this action represents the price set by an agent, and the agent updates its pricing strategy based on observed feedback.  

\begin{algorithm}
\caption{Online Learning}\label{alg:online-learning}
\SetAlgoNlRelativeSize{0} 
\SetKwInOut{Require}{Require}
\Require{action set $\mathcal{A}$, sequence of payoff functions $u_t: \mathcal{A} \to \mathbb{R}$}
\For{$t = 1, 2, \dots$}{
    select action $a_t \in \mathcal{A}$ according to strategy $x_t$\; 
    realize payoff $u_t(a_t)$ and observe feedback $f_t$\;
    update strategy $x_t \leftarrow x_{t+1}$ using feedback $f_t$\;
}
\end{algorithm}

Agents receive different types of feedback in online learning, including:  
\begin{itemize}
    \item \textit{Bandit feedback}: The agent observes only the payoff of the chosen action.  
    \item \textit{Gradient feedback}: The agent receives some information about how small changes to the action would have influenced the payoff.  
\end{itemize}

\textit{Bandit feedback} is a reasonable model of practical pricing problems. After setting a price in an oligopoly, a seller observes only its realized profit, without knowing how alternative prices would have performed. This is especially true in online marketplaces with many competing sellers and dynamic demand patterns. \textit{Gradient feedback }is a useful abstraction, and gradient-based algorithms turn out to be very effective. Online algorithms are often surprisingly similar even though the feedback mechanism is different, as we will discuss below. 

\subsection{Regret and No-Regret Algorithms}

A key performance measure in online learning is \textit{regret}, which quantifies the loss incurred by an algorithm relative to the best fixed action in hindsight. Formally, the \textit{(external) regret} of an algorithm after $T$ rounds is defined as:  

\begin{equation}
R(T) = \max_{a^* \in \mathcal{A}} \sum_{t=1}^{T} u_t(a^*) - \sum_{t=1}^{T} u_t(a_t)
\end{equation}

where $a^*$ is the optimal fixed action chosen in hindsight, and $a_t$ is the action selected by the algorithm at time $t$. Regret can be analyzed in the adversarial model, where an adversary selects payoffs strategically, potentially reacting to the agent’s past choices and algorithmic strategy.  

An online learning algorithm is called a \textit{no-regret algorithm} if its regret grows sublinearly in $T$, i.e.,  

\begin{equation}
\lim_{T \to \infty} \frac{R(T)}{T} = 0.
\end{equation}

This ensures that, on average, the algorithm performs as well as the best fixed action in the long run.  
As an example, \textit{Exponential Weights (Exp3)} is a well-known no-regret algorithm in the adversarial setting. Exp3 maintains a probability distribution over actions, updating the probabilities based on observed rewards. Actions with higher cumulative rewards are selected with increasing probability, ensuring the algorithm adapts to the underlying payoff structure over time.  

\subsection{Gradient Ascent and Projected Gradient Ascent} 
Gradient ascent is a first-order iterative optimization method used to maximize a function \(u(x)\). Starting from an initial guess \(x_0\), the algorithm updates the iterate by moving in the direction of the gradient:
\begin{equation}
    x_{t+1} = x_t + \eta\,\nabla u(x_t),
\end{equation}
where \(\eta > 0\) is the step size or learning rate. This update rule is based on the observation that \(\nabla u(x_t)\) points in the direction of the steepest ascent, ensuring local improvement in the objective.

In many online learning and pricing applications, however, the decision variable \(x\) is required to lie in a constrained set \(\mathcal{X}\) (e.g., the probability simplex or a compact set of admissible prices). In such cases, after taking a gradient step, the updated point may fall outside \(\mathcal{X}\). To handle this, the \emph{projected gradient ascent} method is used. Its update rule is given by
\begin{equation}
    x_{t+1} = \Pi_{\mathcal{X}}\Bigl(x_t + \eta\,\nabla u(x_t)\Bigr),
\end{equation}
where \(\Pi_{\mathcal{X}}(\cdot)\) denotes the projection onto the feasible set \(\mathcal{X}\). A key property of the projection operator is its non-expansiveness in the Euclidean norm for non-empty, closed, convex sets \(\mathcal{X}\), i.e.,
\[
\|\Pi_{\mathcal{X}}(y) - \Pi_{\mathcal{X}}(z)\| \le \|y-z\|,
\]
which ensures that the projection step does not increase the distance between iterates. This property will be useful in the convergence analysis.

Both methods are widely used in online learning and game theory. In settings with unconstrained decision spaces, gradient ascent can directly be applied. When the feasible set is a nontrivial convex set (as is often the case with probability distributions in pricing or bidding scenarios), projected gradient ascent provides a natural and effective way to incorporate constraints while still following the gradient direction.
The choice of the learning rate \(\eta\) is critical in both methods. If \(\eta\) is too large, the iterates may overshoot or oscillate, while a very small \(\eta\) can lead to slow convergence. 

Projected gradient ascent can be seen as a specific instance within the broader \textit{Follow-The-Regularized-Leader (FTRL)} framework. FTRL is a general approach in online optimization where, at each iteration, the algorithm selects the next action by minimizing the sum of past losses and a regularization term. Exp3 is also an FTRL algorithm. We will focus on the  popular projected gradient ascent algorithm and the resulting game dynamics in this paper, because it is easy to describe and analyze.


\section{Game Theoretical Foundations}\label{sec:game-theory}

In what follows, we will introduce basics of game-theory, the Nash equilibrium, and variational inequalities, which will be important for the remainder of the article. 

\subsection{Games and the Nash Equilibrium}

We begin with the concepts of a finite normal-form game and Nash equilibria. 
A normal-form game is a representation in game theory that defines the strategies available to each player, their corresponding payoffs, and the resulting outcomes in a simultaneous and strategic interaction. Formally, we have the following definition.

\begin{definition}[(Finite) Normal-form Game\index{normal-form games}]\label{def:normalform}
  A \emph{normal-form game} is a tuple $(\players, \Acal, u)$, where:
\begin{itemize}
    \item $\players = \{1, \dots, \n\}$ is a finite set of players,
    \item $\Acal = \Acal_1 \times \cdots \times \Acal_\n$ is the space of action profiles, where $\Acal_i$ is the action space of player $i$, which may be finite or continuous (e.g., $\Acal_i \subseteq \mathbb{R}^k$ for some $k \geq 1$),
    \item $u = (u_1, \dots, u_\n)$ is a tuple of payoff (or utility) functions, where $u_i: \Acal \to \mathbb{R}$ specifies the payoff of player $i$ for each action profile.
\end{itemize}
If all action spaces $\Acal_i$ are finite, i.e., $|\Acal_i| < \infty$ for all $i \in \players$, the game is called a \emph{finite normal-form game}.
\end{definition}

All agents $i = 1, \dots, \n$ choose their actions $a_i \in \Acal_i$, which together form an action profile $a = (a_1, \dots, a_\n) \in \Acal$. 
We often abbreviate the profile as $(a_i, a_{-i})$, where $a_{-i} = (a_1, \dots, a_{i-1}, a_{i+1}, \dots, a_\n)$ represents the actions of all agents other than $i$. 
Similarly, the combined space of actions for all agents except $i$ is denoted by $\Acal_{-i} = \Acal_1 \times \cdots \times \Acal_{i-1} \times \Acal_{i+1} \times \cdots \times \Acal_\n$.

Agents may also play a distribution over actions, known as a mixed strategy, denoted by $x_i \in \Xcal_i = \Delta(\Acal_i) $, where $\Delta(\Acal_i)$ is the probability simplex over $\Acal_i$. .

\begin{definition}[Mixed Extension] \label{def:mixed_extension}
    Let $ \Gamma = (\Ncal, \Acal, u) $ be a finite normal-form game. In the mixed extension of the game, players choose a strategy $x_i \in \Delta(\Acal_i) \subset \R^{d_i}$. The (expected) utility is then given by
    \begin{equation*}
        u_i(x) = \sum_{a_1 \in \Acal_1} \cdots \sum_{a_n \in \Acal_n} u_i(a_1, \dots, a_n) \cdot (x_{1,a_1} \cdots x_{n,a_n})
    \end{equation*}
\end{definition}

For two-player matrix games, the notation simplifies. Let the vectors $x_1 \in \Delta(\Acal_1) \subseteq \R^m$ and $x_2 \in \Delta(\Acal_2) \subseteq \R^n$ represent the mixed strategy profiles. The expected utilities can be written in matrix-vector form with matrices $A_1, A_2 \in \R^{m \times n}$:
\begin{equation*}
  u_1(x_1, x_2) = x_1^T A_1 x_2, \qquad u_2(x_1, x_2) = x_1^T A_2 x_2
\end{equation*}

A Nash equilibrium is a situation in a strategic interaction where each player's strategy is optimal given the strategies chosen by all other players, and no player has an incentive to unilaterally deviate from their chosen strategy.

\begin{definition}[Nash Equilibrium (NE)]
    In a normal-form game $\nfgdiscr$, a strategy profile $x^* = (x_1^*, \ldots, x_\n^*)$ is a \emph{Nash equilibrium} if, for every player $i \in \Ncal$, we have:
    \begin{equation} \tag{NE}
        u_i(x_i^*, x_\mi^*) \geq u_i(x_i, x_\mi^*), \quad \forall x_i \in \Xcal_i.
    \end{equation}
    If the inequality is strict, we talk about a strict NE.
\end{definition}

It is well-known that computing Nash equilibria is PPAD-hard in the worst case \citep{daskalakis2009complexity}. There are alternative solution concepts describing supersets of strategy profiles including the Nash equilibria of a game. Most notably the correlated equilibrium (CE) and the coarse correlated equilibrium (CCE). 
It is well-known that $NE \subseteq CE \subseteq CCE$. While NEs are PPAD-hard to learn in the worst case, CEs and CCEs are not \citep{fosterCalibratedLearningCorrelated1997}. The constraints form a polytope and a (C)CE profile can be selected via linear programming. Importantly, the empirical distribution of play of no-(external)-regret algorithms converges to a CCE. However, CCEs might contain dominated strategies and there might be many strategy profiles that qualify as a CCE. We focus on the NE in this paper.

From now on, we focus on continuous games with continuous action sets and continuously differentiable payoff function \citep{mertikopoulosLearningGamesContinuous2019}. Such games describe a very broad class including mixed extensions of finite, normal-form games but also a Cournot competition or congestion games. Auctions have discontinuities in the utility function. However, when we look at the discretized auction game with a finite number of actions or bids and their mixed extension, it is again a continuous game.   

\begin{definition}[Continuous game]\label{def:gamegradient}
    A continuous game $G=(\Ncal, \Xcal, u)$ is a normal-form game with compact convex action sets $\Xcal_i \subseteq \R^{d_i}$ and continuously differentiable utility functions $u_i: \Xcal \rightarrow \R$ for each player $i$.
\end{definition}

The \emph{game gradient} is defined as the column vector
    \begin{equation}
      F(x) = \left( \frac{\partial u_1(x)}{\partial x_1}, \frac{\partial u_2(x)}{\partial x_2}, \dots, \frac{\partial u_n(x)}{\partial x_n} \right)^T.
    \end{equation}
    
    The \emph{game Jacobian} is the Jacobian matrix of the game gradient, given by
    \begin{equation}
      J(x) = \nabla F(x),
    \end{equation}
    where the $(i, j)$-th entry (or block) of $J(x)$ is
    \begin{equation}
      J_{ij}(x) = \frac{\partial}{\partial x_j} \left( \frac{\partial u_i(x)}{\partial x_i} \right) = \frac{\partial^2 u_i(x)}{\partial x_j \partial x_i}.
    \end{equation}

The diagonal entries correspond to the Hessian of $ u_i $ w.r.t. $ x_i $. In all other entries, we have the Jacobian w.r.t. $ x_j $ of the gradients $ \nabla_{x_i} u_i $. The Jacobian matrix often has a block structure since the $x_i$ can be multidimensional, in particular in mixed-extensions of finite, normal-form games.

If the Jacobian has a symmetric structure, then it is a \textit{potential game}. This is because agents' payoffs align with a common potential function. Several papers analyzed the convergence of learning algorithms in such potential games \citep{mondererPotentialGames1996a}. Another condition for which we know that learning algorithms converge is that of strategic complements \citep{milgrom1991adaptive}. Positive cross-derivatives are a sufficient condition for \textit{strategic complements}: 
\begin{equation*}
  \frac{\partial^2 u_i(x)}{\partial x_i \partial x_j} \geq 0, \quad \forall i \neq j.   
\end{equation*}
This ensures that an increase in player $j$'s strategy $x_j$ increases the marginal utility of player $i$'s strategy $x_i$.

\subsection{Variational Inequalities and Nash Equilibria}

For continuous games, one can write down the first-order conditions for a Nash equilibrium.
Aggregating these conditions in the game gradient over all players, we have, for any \( x = (x_1, \dots, x_n) \in \mathcal{X} \),
\begin{equation*}
  \sum_{i=1}^{n} \langle \nabla_{x_i} u_i(x_i^*, x_{-i}^*),\, x_i - x_i^* \rangle \le 0,
\end{equation*}
or, equivalently, \(\langle F(x^*),\, x - x^* \rangle \le 0 \) for all \( x \in \mathcal{X}\), which is a variational inequality. 

\begin{definition}
  Consider the function $ F: \Xcal \rightarrow \R^n $ where $ \Xcal \subseteq \R^n $ is a non-empty closed set. A (finite) \textit{variational inequality} $ VI(\Xcal,F) $ is the problem of finding a vector $ x^* \in \Xcal $ such that
  \begin{equation}\label{eq:vi} \tag{VI}
    \langle F(x^*), x-x^* \rangle \leq 0 \quad \forall x \in \Xcal
  \end{equation}
  We call $ x^* $ the solution and $ \Xcal $ the feasible set of $ VI(\Xcal,F) $.
\end{definition}

Often, the set $ \Xcal $ is convex, which will be assumed in the following. For example, $x_i \in \Delta(\Acal_i) $, where $\Delta(\Acal_i) = \Xcal_i$ denotes the probability simplex over $\Acal_i$ in a finite, normal-form game.

To connect the notion of Nash equilibria with variational inequalities, we have to make the assumption that the underlying utility functions of the game are (individually) concave. 

\begin{definition}
	A game is called \textit{(pseudo-)concave} if it satisfies the individual concavity assumption, that is to say that
	\begin{equation*}
		u_i(x_i,x_{-i}) \text{ is (pseudo-)concave in } x_i \quad \forall x_{-i} \in \Xcal_{-i}, \, i \in \Ncal
	\end{equation*}
\end{definition}

\begin{lemma}[\citet{mertikopoulosLearningGamesContinuous2019}]
    If $x^* \in \Xcal$ is a Nash equilibrium, it is a solution to $VI(\Xcal, F)$. 
    If the game is (pseudo-)concave, any solution to $VI(\Xcal, F)$ is also a Nash equilibrium.
\end{lemma}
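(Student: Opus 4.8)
The plan is to prove the two implications separately, using first-order optimality conditions and the Cartesian-product structure $\Xcal = \prod_{i \in \Ncal} \Xcal_i$ of the strategy space.

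For the first implication (Nash equilibrium $\Rightarrow$ solution of $VI(\Xcal,F)$), I would fix a player $i$ and an arbitrary deviation $x_i \in \Xcal_i$, and consider the one-dimensional restriction $g(t) = u_i\bigl(x_i^* + t(x_i - x_i^*),\, x_{-i}^*\bigr)$ for $t \in [0,1]$. This map is well-defined because $\Xcal_i$ is convex, and differentiable because $u_i$ is continuously differentiable (Definition \ref{def:gamegradient}). The Nash condition $u_i(x_i^*,x_{-i}^*) \ge u_i(x_i,x_{-i}^*)$ says that $g$ attains its maximum on $[0,1]$ at $t=0$, so $g'(0^+) \le 0$; by the chain rule this is exactly $\langle \nabla_{x_i} u_i(x^*),\, x_i - x_i^* \rangle \le 0$. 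Summing these inequalities over all players yields $\langle F(x^*),\, x - x^* \rangle \le 0$ for every $x \in \Xcal$, i.e. $x^*$ solves $VI(\Xcal,F)$. Note that this direction uses no concavity assumption.

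For the converse (solution of $VI(\Xcal,F)$ $\Rightarrow$ Nash equilibrium, under pseudo-concavity), I would use the product structure to decouple the variational inequality player by player. Given a solution $x^*$, fix $i$ and any $x_i \in \Xcal_i$; then the profile $(x_i, x_{-i}^*)$ belongs to $\Xcal$, and substituting it into $VI(\Xcal,F)$ annihilates every block except the $i$-th, leaving $\langle \nabla_{x_i} u_i(x^*),\, x_i - x_i^* \rangle \le 0$. The last step is to upgrade this first-order inequality to the global best-response inequality $u_i(x_i, x_{-i}^*) \le u_i(x_i^*, x_{-i}^*)$, and this is precisely where individual (pseudo-)concavity of $u_i$ in $x_i$ enters: in the concave case via the gradient inequality $u_i(x_i,x_{-i}^*) - u_i(x_i^*,x_{-i}^*) \le \langle \nabla_{x_i} u_i(x^*),\, x_i - x_i^* \rangle$, and in the pseudo-concave case directly from the definition, which states that a non-positive directional-derivative term at $x_i^*$ forbids any increase of $u_i$. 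Since $i$ and $x_i$ were arbitrary, $x^*$ is a Nash equilibrium.

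I do not expect a genuine obstacle here; the proof is essentially routine. The points needing care are: the legitimacy of the differentiation, guaranteed by the $C^1$ assumption in Definition \ref{def:gamegradient}; the role of the product structure of $\Xcal$, which is what lets the VI split into per-player conditions and, read in reverse, reassemble them; and stating precisely which notion of pseudo-concavity is used, since that hypothesis is exactly what makes the converse go through — without it, a VI solution is merely a first-order stationary point of each player's problem, not necessarily an actual best response.
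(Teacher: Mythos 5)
Your proposal is correct and follows exactly the route the paper itself sketches: the forward direction is the aggregation of the per-player first-order optimality conditions $\langle \nabla_{x_i} u_i(x^*), x_i - x_i^*\rangle \le 0$ (which the paper derives in the paragraph preceding the lemma), and the converse is the standard decoupling of the VI via deviations of the form $(x_i, x_{-i}^*)$ followed by the (pseudo-)concavity gradient inequality. The paper cites \citet{mertikopoulosLearningGamesContinuous2019} rather than writing out the converse, but your argument is the standard one and is complete and correct.
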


The mixed extension of any normal-form game introduced above is concave, i.e. the individual utility functions are concave. In fact, they are even linear in mixed extensions. We assume individual concavity of the utility functions from now on, although this assumption needs to be checked for every game.  

For interior points, we can characterize the problem in a different way. The following result connects variational inequalities to the first-order conditions in unconstrained optimization in such cases.

\begin{theorem}
    Let $\mathcal{X} \subset \mathbb{R}^n$ be a convex set with nonempty interior, and let $F: \Xcal \to \mathbb{R}^n$ be a mapping. 
    Suppose that $x^* \in \operatorname{int}(\mathcal{X})$. 
    If $x^*$ is a solution of the variational inequality VI$(\mathcal{X},F)$, then $F(x^*) = 0$.
\end{theorem}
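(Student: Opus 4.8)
The plan is to argue by contradiction, exploiting the fact that an interior point can be perturbed in \emph{every} direction while staying feasible. Suppose $F(x^*) \neq 0$. Since $x^* \in \operatorname{int}(\mathcal{X})$, there is a radius $\epsilon > 0$ such that the Euclidean ball $B(x^*, \epsilon) \subseteq \mathcal{X}$. First I would pick a step size $t > 0$ small enough that $t\,\norm{F(x^*)} < \epsilon$, so that the point $x \coloneqq x^* + t\,F(x^*)$ lies in $B(x^*,\epsilon)$ and hence in $\mathcal{X}$; this $x$ is therefore an admissible competitor in the variational inequality \eqref{eq:vi}.

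Next I would substitute this $x$ into the VI condition. By construction $x - x^* = t\,F(x^*)$, so
\[
  \langle F(x^*),\, x - x^* \rangle = t\,\langle F(x^*), F(x^*) \rangle = t\,\norm{F(x^*)}^2 > 0,
\]
using $t > 0$ and $F(x^*) \neq 0$. This contradicts the defining inequality $\langle F(x^*), x - x^* \rangle \le 0$ of $VI(\mathcal{X}, F)$, which must hold for all $x \in \mathcal{X}$. Hence the assumption $F(x^*) \neq 0$ is untenable, and $F(x^*) = 0$.

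There is essentially no hard step here; the proof is a one-line perturbation argument. The only point requiring a little care is the sign convention: this paper writes the VI as $\langle F(x^*), x - x^* \rangle \le 0$ (the maximization/ascent convention coming from the first-order conditions for Nash equilibria), so the perturbation must be taken in the direction $+F(x^*)$ rather than $-F(x^*)$; with the opposite convention one would perturb the other way. One may equivalently phrase the argument without contradiction: for any unit vector $v$, feasibility of $x^* \pm \epsilon v$ gives $\pm\epsilon\,\langle F(x^*), v\rangle \le 0$, whence $\langle F(x^*), v\rangle = 0$ for all $v$, i.e. $F(x^*) = 0$. Note also that convexity of $\mathcal{X}$ is not actually needed for this particular statement — only that $x^*$ is interior — but it is harmless to keep it as a standing hypothesis.
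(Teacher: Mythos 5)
Your proof is correct and follows essentially the same argument as the paper: both exploit interiority to make $x = x^* + t\,F(x^*)$ feasible for small $t>0$ and then read off $t\,\|F(x^*)\|^2 \le 0$ from the variational inequality. Your version is merely phrased as a contradiction and spells out the choice of $t$ slightly more explicitly; the observations about the sign convention and the dispensability of convexity are accurate asides.
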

The proof can be found in any textbook on variational inequalities.
\begin{proof}
    Suppose first that $x^* \in \operatorname{int}(\mathcal{X})$ is a solution of $\text{VI}(\mathcal{X},F)$, i.e.,
    \( \langle F(x^*), x - x^* \rangle \leq 0 \, \forall\, x \in \mathcal{X}\).
    Since $x^*$ is an interior point of $\mathcal{X}$, there exists some $t > 0$ such that \(x = x^* + t \, F(x^*) \in \Xcal \). 
    Substitute $x = x^* + t\,F(x^*)$ into the variational inequality to obtain
    \[
    \langle F(x^*), (x^* + t\,F(x^*)) - x^* \rangle =  t\,\langle F(x^*), F(x^*) \rangle = t \Vert F(x^*) \Vert^2 \geq 0
    \]
    Combined with the variational inequality, we find that $F(x^*) = 0$.
\end{proof}

\subsection{Existence and Uniqueness of the Nash Equilibrium}\label{sec:unique}

Now that we have introduced the Nash equilibrium, we discuss existence and uniqueness. We reason that, under mild assumptions, there must be at least one Nash equilibrium. In order to show this, we draw on the connection between Nash equilibria and solutions of the variational inequality $VI(\Xcal, F)$ established in the previous section. We relate the solutions of $VI(\Xcal, F)$ with the equilibrium points of a specific mapping which allows us to apply Brouwer's fixed point theorem.

\begin{theorem}[Brouwer's Fixed Point Theorem]
    {Let $\Xcal \subseteq \mathbb{R}^n$ be a non-empty, compact, and convex set. If $g: \Xcal \to \Xcal$ is a continuous function, then there exists a point $x^* \in \Xcal$ such that: $g(x^*) = g^*$.}
\end{theorem}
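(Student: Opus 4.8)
The plan is to prove Brouwer's theorem by reducing to the standard simplex and then running a combinatorial argument via Sperner's Lemma, which is the route most congenial to the game-theoretic flavour of this paper. First I would reduce the general statement (with $g(x^*)=x^*$) to the case $\Xcal = \Delta^k$, the standard $k$-simplex. A non-empty compact convex set $\Xcal \subseteq \R^{\n}$ of affine dimension $k$ is homeomorphic to the closed unit ball $\bar B^k$ — use a radial homeomorphism based at a point in the relative interior of $\Xcal$, working inside the affine hull — and $\bar B^k$ is in turn homeomorphic to $\Delta^k$. Fixed points are preserved under conjugation: if $h:\Xcal\to\Delta^k$ is a homeomorphism and $g:\Xcal\to\Xcal$ is continuous, then $h\circ g\circ h^{-1}$ is a continuous self-map of $\Delta^k$ whose fixed points pull back to fixed points of $g$. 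So it suffices to treat $g:\Delta^k\to\Delta^k$.

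Next I would establish Sperner's Lemma: for any simplicial subdivision of $\Delta^k$ with a labeling $\ell$ of its vertices by $\{0,1,\dots,k\}$ such that every vertex lying on the face $\conv\{e_j : j \in S\}$ receives a label in $S$, the number of sub-simplices whose vertices carry all $k+1$ labels is odd, in particular nonzero. I would prove this by induction on $k$ via a parity/door-counting argument: call a $(k-1)$-face of a sub-simplex a \emph{door} if its vertex labels are exactly $\{0,\dots,k-1\}$; show that a sub-simplex has exactly one door if it is fully labeled, exactly two doors if its label set is exactly $\{0,\dots,k-1\}$, and none otherwise; count sub-simplex/door incidences two ways, noting that interior doors are each shared by two sub-simplices (even contribution) while boundary doors must all lie in the single facet $\{x_k=0\}$ of $\Delta^k$ (the Sperner boundary condition forbids the others) and there they are exactly the fully-labeled sub-simplices of the induced $(k-1)$-dimensional Sperner problem, whose count is odd by the inductive hypothesis.

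Then I would deduce Brouwer directly. Take subdivisions of $\Delta^k$ with mesh tending to $0$. For a vertex $v$, since $\sum_i g(v)_i = \sum_i v_i = 1$ there is an index $i$ with $v_i>0$ and $g(v)_i \le v_i$ (otherwise $g(v)_i>v_i$ on the support of $v$ would force $\sum_i g(v)_i > 1$); set $\ell(v)$ equal to such an $i$. This is a valid Sperner labeling, because $v$ on $\conv\{e_j:j\in S\}$ has $v_i=0$ for $i\notin S$, so the chosen label with $v_i>0$ lies in $S$. Sperner's Lemma gives, in each subdivision, a fully-labeled sub-simplex with vertices $v^{(0)},\dots,v^{(k)}$ satisfying $g(v^{(j)})_j \le v^{(j)}_j$. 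By compactness of $\Delta^k$ and mesh $\to 0$, along a subsequence these sub-simplices shrink to a common point $x^*$; continuity of $g$ gives $g(x^*)_j \le x^*_j$ for every $j$, and since both sides sum to $1$, all inequalities are equalities, i.e. $g(x^*)=x^*$.

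The main obstacle is the combinatorial core — stating and proving Sperner's Lemma cleanly and checking that the coordinate-comparison labeling meets the boundary condition — together with the topological reduction, where one must be careful that a lower-dimensional compact convex set is homeomorphic to a simplex of exactly its affine dimension (hence the passage to affine hulls and relative interiors). The remaining limiting argument is a routine compactness-and-continuity step. (An alternative I would keep in reserve is the analytic route: reduce to $\bar B^{\n}$, show there is no $C^1$ retraction $\bar B^{\n}\to S^{\n-1}$ by a change-of-variables/volume-polynomial argument, upgrade to continuous retractions by mollification, and note that a fixed-point-free continuous self-map of $\bar B^{\n}$ would produce such a retraction by radial projection.)
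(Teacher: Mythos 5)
Your proof is correct. The paper itself does not prove this statement at all --- Brouwer's theorem is quoted there as a classical black-box result, used only to derive existence of solutions to $VI(\Xcal,F)$ via the fixed-point reformulation in Lemma 3.2 --- so there is no in-paper argument to compare against. What you supply is the standard combinatorial proof via Sperner's Lemma, and all three stages are sound: the reduction to $\Delta^k$ via a radial homeomorphism in the affine hull correctly handles lower-dimensional $\Xcal$ and correctly notes that fixed points are preserved under conjugation; the door-counting parity induction for Sperner's Lemma is the standard one, and your observation that boundary doors can only live in the facet opposite $e_k$ (since the Sperner condition excludes label $j$ from the facet opposite $e_j$) is exactly the point that makes the induction close; and the labeling $\ell(v)=i$ with $v_i>0$ and $g(v)_i\le v_i$ exists by the averaging argument you give, satisfies the Sperner boundary condition, and yields in the limit $g(x^*)_j\le x^*_j$ for all $j$, hence equality since both sides sum to $1$. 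One cosmetic remark on the statement itself rather than your proof: the conclusion as printed in the paper reads $g(x^*)=g^*$, which is a typo for $g(x^*)=x^*$; your proof (rightly) establishes the latter.
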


To use this result, we show that $ VI(\Xcal,F) $ is equivalent to a fixed point problem. 

\begin{lemma}[\citet{geigerVariationsungleichungen2002}] \label{thm:vi_fp}
    Consider the problem $ VI(\Xcal,F) $ where $ \Xcal $ is a non-empty, closed, and convex set. The point $ x^* \in \Xcal $ is a solution of $ VI(\Xcal,F) $ if and only if $ x^* $ is a fixed point of the mapping $ P(x) := \proj_{\Xcal}(x-\gamma F(x)) $, i.e., $ x^* = P(x^*) $ for some $ \gamma > 0 $.
\end{lemma}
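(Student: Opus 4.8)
The plan is to prove the two directions of the equivalence by unpacking the definition of the Euclidean projection $\proj_{\Xcal}$ onto a non-empty, closed, convex set via its variational characterization. The key tool is the standard obtuse-angle (projection) inequality: for any $y \in \R^n$, the point $p = \proj_{\Xcal}(y)$ is the unique element of $\Xcal$ satisfying $\langle y - p,\, x - p \rangle \le 0$ for all $x \in \Xcal$. I would first state and briefly justify this characterization (it follows from the fact that $p$ minimizes the convex function $x \mapsto \tfrac12\|x - y\|^2$ over $\Xcal$, whose first-order optimality condition on a convex set is exactly this inequality), since everything else is a short substitution.

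For the forward direction, suppose $x^*$ solves $VI(\Xcal, F)$, i.e. $\langle F(x^*),\, x - x^* \rangle \le 0$ for all $x \in \Xcal$. Multiplying by $\gamma > 0$ and rearranging gives $\langle (x^* - \gamma F(x^*)) - x^*,\, x - x^* \rangle \le 0$ for all $x \in \Xcal$. By the projection characterization applied with $y = x^* - \gamma F(x^*)$ and $p = x^*$ (note $x^* \in \Xcal$), this says precisely that $x^* = \proj_{\Xcal}(x^* - \gamma F(x^*)) = P(x^*)$, so $x^*$ is a fixed point.

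For the reverse direction, suppose $x^* = P(x^*) = \proj_{\Xcal}(x^* - \gamma F(x^*))$ for some $\gamma > 0$. Then $x^* \in \Xcal$ (projections land in $\Xcal$), and the projection inequality with $y = x^* - \gamma F(x^*)$, $p = x^*$ gives $\langle (x^* - \gamma F(x^*)) - x^*,\, x - x^* \rangle \le 0$, i.e. $-\gamma \langle F(x^*),\, x - x^* \rangle \le 0$, for all $x \in \Xcal$. Dividing by $\gamma > 0$ yields $\langle F(x^*),\, x - x^* \rangle \le 0$ for all $x \in \Xcal$, so $x^*$ solves $VI(\Xcal, F)$.

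There is no real obstacle here — the whole lemma is essentially a restatement of the projection optimality condition — so the only thing to be careful about is sign conventions: the paper writes the variational inequality with ``$\le 0$'' (a maximization/ascent convention, since $F$ is the game gradient), whereas many textbooks use ``$\ge 0$''; I would make sure the obtuse-angle inequality is stated with the orientation consistent with this excerpt's $VI(\Xcal, F)$ so that the substitution goes through cleanly, and note that the statement holds for \emph{every} $\gamma > 0$ (not just some particular value), which strengthens the ``for some $\gamma > 0$'' phrasing in the lemma.
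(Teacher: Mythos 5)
Your approach---reducing both directions to the obtuse-angle characterization of the Euclidean projection, $\langle y - \proj_{\Xcal}(y),\, x - \proj_{\Xcal}(y)\rangle \le 0$ for all $x \in \Xcal$---is exactly the standard proof of this lemma; the paper itself gives no proof and defers to the cited source, where this is the argument. So the strategy is right. However, your forward direction contains a concrete sign error that your closing caveat anticipates but does not actually resolve. With the paper's convention $\langle F(x^*), x - x^*\rangle \le 0$, multiplying by $\gamma > 0$ gives $\langle \gamma F(x^*), x - x^*\rangle \le 0$, i.e.\ $\langle (x^* + \gamma F(x^*)) - x^*,\, x - x^*\rangle \le 0$, which by the projection characterization yields $x^* = \proj_{\Xcal}(x^* + \gamma F(x^*))$ --- note the plus sign. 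Your claimed rearrangement $\langle (x^* - \gamma F(x^*)) - x^*,\, x - x^*\rangle \le 0$ equals $-\gamma\,\langle F(x^*), x - x^*\rangle \le 0$, which is the \emph{reverse} of what the paper's VI provides; you have multiplied a nonpositive quantity by $-\gamma < 0$ and kept the direction of the inequality.

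The underlying issue is that the lemma as printed is internally inconsistent with the paper's own VI: the map $P(x) = \proj_{\Xcal}(x - \gamma F(x))$ is the correct fixed-point reformulation for the minimization convention $\langle F(x^*), x - x^*\rangle \ge 0$ used in the cited source, whereas the paper's ascent-oriented VI ($\le 0$, with $F$ the game gradient) pairs with $P(x) = \proj_{\Xcal}(x + \gamma F(x))$ --- which is also what matches the projected gradient ascent dynamics $x_{t+1} = \proj_{\Xcal}(x_t + \eta F(x_t))$ used throughout the rest of the paper. Your proof is valid for exactly one consistent pairing of signs; as written it mixes the two. To repair it, either carry out the argument for $P(x) = \proj_{\Xcal}(x + \gamma F(x))$, or flip the sign in the VI. Once the orientation is fixed, the rest --- the reverse direction and your observation that the equivalence in fact holds for \emph{every} $\gamma > 0$, not just some --- is correct.
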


The existence follows now directly from the previous theorem, where $ P(x) $ is the mapping $ f $, and $\proj_X$ is a projection operator. Continuity follows from continuity of $ F $ and $\proj_{\Xcal}$. If we are interested in \textit{uniqueness}, we can make an additional assumption on $ F $.

\begin{lemma}[\citet{dupuis1993dynamical}]
    Let $ \Xcal \subseteq \R^n $ be a closed convex set and $ F: \Xcal \rightarrow \R^n $ strictly monotone. Then $ VI(\Xcal,F) $ has exactly one solution.
\end{lemma}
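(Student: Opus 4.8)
The plan is to split the claim into existence and uniqueness, with the substantive new content being uniqueness. For existence I would invoke the fixed-point characterization of Lemma~\ref{thm:vi_fp}: $x^*$ solves $VI(\Xcal,F)$ if and only if $x^*$ is a fixed point of $P(x)=\proj_{\Xcal}(x-\gamma F(x))$ for some $\gamma>0$; since $\proj_{\Xcal}$ is continuous (indeed non-expansive) and $F$ is continuous, $P$ is a continuous self-map of $\Xcal$, and if $\Xcal$ is additionally compact, Brouwer's fixed point theorem yields a fixed point, hence a solution. I would flag that with $\Xcal$ merely closed and convex one genuinely needs a coercivity-type hypothesis on $F$ for existence, and that the intended setting here is the compact action sets of a continuous game, where existence is automatic.

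For uniqueness, suppose $x_1^*$ and $x_2^*$ are both solutions of $VI(\Xcal,F)$. Evaluating the variational inequality for $x_1^*$ at the feasible point $x=x_2^*$, and the one for $x_2^*$ at $x=x_1^*$, gives
\[
\langle F(x_1^*),\, x_2^* - x_1^* \rangle \le 0, \qquad \langle F(x_2^*),\, x_1^* - x_2^* \rangle \le 0 .
\]
Adding the two inequalities and regrouping yields $\langle F(x_1^*)-F(x_2^*),\, x_1^*-x_2^* \rangle \ge 0$. Strict monotonicity of $F$, read in the sign convention that matches the ``$\le 0$'' form of \eqref{eq:vi} (i.e.\ $\langle F(x)-F(y),\,x-y\rangle<0$ whenever $x\neq y$, equivalently $-F$ strictly monotone in the classical sense), is incompatible with this inequality unless $x_1^*=x_2^*$. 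Hence the solution is unique, and together with existence we conclude there is exactly one solution.

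The only delicate point I anticipate is bookkeeping the direction of the inequalities: because the paper writes the variational inequality with ``$\le 0$'' so that solutions correspond to payoff maximizers and Nash equilibria, the relevant monotonicity notion is the one making $-F$ monotone in the usual sense, and one must state ``strictly monotone'' with the sign that actually produces the contradiction above; with the opposite sign convention the two added inequalities are vacuously consistent and nothing follows. Everything else is a two-line computation, and I would remark that the argument uses only feasibility of the two candidate solutions and strict monotonicity, so it is independent of differentiability or dimension.
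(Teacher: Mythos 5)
Your proof is correct and follows essentially the route the paper itself takes: existence via the fixed-point characterization of Lemma~\ref{thm:vi_fp} together with Brouwer's theorem (as sketched in the paragraph preceding the lemma), and uniqueness by the standard two-solution argument, which you carry out with the correct sign bookkeeping for the paper's ``$\le 0$'' convention in Definition~\ref{def:monotonicity}. The paper does not spell out the uniqueness step (it defers to \citet{dupuis1993dynamical}), so your explicit derivation of $\langle F(x_1^*)-F(x_2^*),\,x_1^*-x_2^*\rangle \ge 0$ and the resulting contradiction with strict monotonicity fills that in faithfully; your caveat that existence on a merely closed (unbounded) convex set requires compactness or a coercivity condition is also well taken, since Brouwer's theorem as stated in the paper needs compactness and the intended application is to compact action sets.
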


Strict monotonicity is central for this and is defined as follows.

\begin{definition}\label{def:monotonicity}
    Let $ \Xcal \subset \R^n $. A function $ F: \Xcal \rightarrow \R^n $ is 
    \begin{enumerate}[label=\roman*)]
        \item \textit{monotone} if
        \begin{equation*}
            \langle F(x) - F(y), x-y \rangle \leq 0 \quad \forall x,y \in \Xcal
        \end{equation*}
        \item \textit{strictly monotone} if
        \begin{equation*}
             \langle F(x) - F(y), x-y \rangle  < 0 \quad \forall x \ne y \in \Xcal
        \end{equation*}
        \item \textit{strongly monotone} if there exists an $\alpha > 0$ such that
        \begin{equation*}
            \langle F(x) - F(y), x-y \rangle \leq - \alpha \Vert x - y \Vert^2  \quad \forall x,y \in \Xcal
        \end{equation*}
        \item \textit{pseudomonotone} if 
            \begin{equation*}
            \langle F(y), x-y \rangle \leq 0 \quad \Rightarrow \quad \langle F(x), x-y \rangle\leq 0, \quad \forall x,y \in \Xcal.
            \end{equation*}
        \end{enumerate}
\end{definition}

The monotonicity conditions in  \citet{mertikopoulosLearningGamesContinuous2019} or \citet{rosenExistenceUniquenessEquilibrium1965} correspond to \textit{strict} monotonicity. 
To verify monotonicity we can use different characterizations using the Jacobian $ J(x) = \nabla F(x)$ of $ F $.

\begin{theorem}[\citep{geigerVariationsungleichungen2002}] \label{thm:mon_jac}
    Let $ \Xcal \subseteq \R^n $ be an open and convex set and assume that $ F: \Xcal \rightarrow \R^n $ is continuously differentiable. Then we have
    \begin{enumerate}[label=\roman*)]
        \item $ F $ is monotone (on $ \Xcal $) if and only if $ J(x) $ is negative semi-definite for all $ x \in \Xcal $. 
        \item If $ J(x) $ is negative definite for all $ x \in \Xcal $, then $ F $ is strictly monotone on $ \Xcal $. 
    \end{enumerate}
\end{theorem}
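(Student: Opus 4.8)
The plan is to use the fundamental-theorem-of-calculus representation of $F(x)-F(y)$ along the segment joining $y$ to $x$, which lies in $\Xcal$ by convexity. For the ``if'' direction of (i), fix $x,y \in \Xcal$ and set $\gamma(t) = y + t(x-y)$ for $t \in [0,1]$; since $F$ is $C^1$, $J$ is continuous, and
\begin{equation*}
  F(x) - F(y) = \int_0^1 J(\gamma(t))\,(x-y)\,\diff t .
\end{equation*}
Pairing with $x-y$ and pulling the inner product inside the integral gives
\begin{equation*}
  \langle F(x)-F(y),\, x-y \rangle = \int_0^1 \big\langle J(\gamma(t))(x-y),\, x-y \big\rangle \,\diff t ,
\end{equation*}
and every integrand is $\le 0$ by negative semi-definiteness of $J$, so the left-hand side is $\le 0$; this is exactly monotonicity in the sign convention of Definition \ref{def:monotonicity}.

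For the ``only if'' direction of (i), I would recover the quadratic form of $J(x)$ as a limit of difference quotients. Fix $x \in \Xcal$ and an arbitrary direction $v \in \R^n$. Because $\Xcal$ is open, $x + tv \in \Xcal$ for all sufficiently small $t>0$; applying monotonicity to the pair $x+tv$ and $x$ yields $\langle F(x+tv)-F(x),\,tv\rangle \le 0$, hence $\tfrac{1}{t}\langle F(x+tv)-F(x),\,v\rangle \le 0$. Letting $t \downarrow 0$ and using differentiability of $F$, so that $\tfrac{1}{t}\big(F(x+tv)-F(x)\big) \to J(x)v$, we obtain $\langle J(x)v,\, v\rangle \le 0$. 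Since $v$ and $x$ were arbitrary, $J(x)$ is negative semi-definite on all of $\Xcal$.

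For (ii), I would reuse the same integral identity: if $x \neq y$, then $x - y \neq 0$, so by negative definiteness the map $t \mapsto \langle J(\gamma(t))(x-y),\, x-y\rangle$ is strictly negative on $[0,1]$; being continuous on a compact interval it attains a strictly negative maximum, so the integral is strictly negative, giving $\langle F(x)-F(y),\, x-y\rangle < 0$, i.e.\ strict monotonicity.

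I do not expect a genuine obstacle here, as the argument is standard. The only points requiring care are: keeping track of the reversed sign convention used in this paper (negative rather than positive (semi-)definiteness); invoking openness of $\Xcal$ in the converse of (i) so that $x+tv$ remains feasible; and noting that continuous differentiability of $F$ is what legitimizes both the integral representation (via continuity of $J$) and the difference-quotient limit. One could alternatively prove the converse of (i) by differentiating $g(t) = \langle F(\gamma(t)),\, x-y\rangle$, but the direction-wise difference quotient is cleaner.
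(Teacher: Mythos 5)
Your proof is correct. The paper does not prove this theorem itself --- it is quoted from \citet{geigerVariationsungleichungen2002} --- and your argument is the standard one found there: the line-integral representation $F(x)-F(y)=\int_0^1 J(y+t(x-y))(x-y)\,\diff t$ for the sufficiency directions, and the difference-quotient limit $\tfrac{1}{t}\langle F(x+tv)-F(x),v\rangle\to\langle J(x)v,v\rangle$ for the converse of (i), with openness and convexity of $\Xcal$ used exactly where they are needed. You also correctly track the paper's reversed sign convention (monotone means $\langle F(x)-F(y),x-y\rangle\le 0$, so the Jacobian condition is negative rather than positive semi-definiteness, understood as a condition on the quadratic form, i.e.\ on the symmetric part of the generally non-symmetric game Jacobian).
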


\subsection{Variational Stability and Strict Nash Equilibrium} \label{sec:variational_stability}

As we discussed earlier, if the individual utility functions are concave and $ x^* $ solves the variational inequality $ VI(\Xcal,F) $ (see equation \ref{eq:vi}), these solutions are equivalent to the Nash equilibria of the corresponding game. Apart from this Stampaccia-type VI, we can also consider the Minty-type variational inequality, which is sometimes referred to as weak solution or the dual VI \citep{geigerVariationsungleichungen2002}.

	\begin{definition}[Minty-type VI (MVI)]
		The point $ x^* \in \Xcal $ with $\Xcal$ being convex and compact is a \textit{weak} solution if it satisfies
		\begin{equation} \tag{MVI}
			\langle F(x), x-x^* \rangle \leq 0 \quad \forall x \in \Xcal
		\end{equation}
	\end{definition}

It is known that if $F$ is continuous on $\Xcal$, then each solution to the MVI is a solution to a Stampaccia-type VI \citep{cavazzutiNashEquilibriaVariational2002}. A strict version of this definition is referred to as variational stability \cite{mertikopoulosLearningGamesContinuous2019}:

 \begin{definition}
     A point $ x^* \in \Xcal $  with $\Xcal$ being convex and compact is variationally stable if there exists a neighborhood $ U $ of $ x^* $ such that
     \begin{equation}\label{eq:vs} \tag{VS}
         \langle F(x), x-x^* \rangle \leq 0 \quad \forall x \in U
     \end{equation}
 with equality if and only if $ x^*=x $.
 \end{definition} 

A number of results about variationally stable points in games were shown by \citet{mertikopoulosLearningGamesContinuous2019}: 
  \begin{enumerate}[label=\roman*)]
      \item If the game is (pseudo-)concave and $ x^* $ is variationally stable, then $ x^* $ is an isolated Nash equilibrium of $ G $. 
      \item In finite games: $ x^* $ being variationally stable is equivalent to $ x^* $ being a strict Nash equilibrium. 
      \item If $ x^* $ is globally variationally stable, it is the game's unique Nash equilibrium. 
  \end{enumerate}

Importantly, \citet{mertikopoulosLearningGamesContinuous2019} show that variational stability guarantees that the induced sequence of no-regret learning converges globally to globally variationally stable equilibria, and it converges locally to locally variationally stable equilibria.
\citet{giannou2021survival} analyze no-regret learning with various types of feedback during the learning (e.g., gradient or bandit feedback). They establish that a Nash equilibrium is stable and attracting with arbitrarily high probability \textit{if and only if} it is strict, i.e. it is variationally stable. This is an important insight that we will leverage below.

Global variational stability follows from strict monotonicity, can be shown directly using the definition, or specifically for finite normal-form games by considering all pure deviations from the pure equilibrium strategy:

\begin{proposition}
\label{prop:variational-stability-by-pure-deviations}
	A PNE $a^* \in \Acal$ (in mixed notation: $x^* \in \Delta(\Acal)$) of a finite game in mixed-extension is globally variationally stable if and only if, for any \emph{pure} deviation $x \in \Delta(\Acal)$
	\begin{equation*}
		\sum_{i = 1}^{n} \left(u_i(a) - u_i(a_i^*, a_{-i})\right) < 0 \quad \forall a \neq a^*.
	\end{equation*}
\end{proposition}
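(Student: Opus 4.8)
The plan is to reduce global variational stability of the vertex $x^*$ to an elementary fact about multilinear functions on a product of simplices.

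The first step is to establish the identity $\langle F(x), x - x^*\rangle = \sum_{i=1}^{n}\bigl(u_i(x) - u_i(a_i^*, x_{-i})\bigr)$ for all $x \in \Delta(\mathcal{A})$. In the mixed extension $u_i(x) = \sum_{a_i} x_{i,a_i}\,u_i(a_i, x_{-i})$ is multilinear, so the $a_i$-component of $\nabla_{x_i} u_i(x)$ is $u_i(a_i, x_{-i})$; since $x^*$ is the pure profile $a^*$, its $i$-th block satisfies $x^*_{i,a_i} = 1$ exactly when $a_i = a_i^*$, and the inner product reduces to the claimed form. Writing $g(x) := \langle F(x), x - x^*\rangle$, global variational stability of $x^*$ is precisely the statement that $g(x) \le 0$ for all $x \in \Delta(\mathcal{A})$ with equality if and only if $x = x^*$. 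Evaluated at a pure profile $x = a$, this $g$ equals $\sum_i\bigl(u_i(a) - u_i(a_i^*, a_{-i})\bigr)$, the quantity in the proposition, so the \emph{only if} direction is immediate: restrict the variational-stability inequality to the vertices $a \neq a^*$.

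For the \emph{if} direction I would use that $g$ is multilinear in $x = (x_1,\dots,x_n)$ — the subtracted term $u_i(a_i^*, x_{-i})$ does not depend on $x_i$ — so $g$ coincides with its own multilinear extension, $g(x) = \sum_{a \in \mathcal{A}} g(a)\,\prod_{i=1}^{n} x_{i,a_i} = \mathbb{E}_{a \sim x}\!\left[g(a)\right]$, with the coordinates $a_i$ drawn independently from $x_i$. Since $g(a^*) = \sum_i\bigl(u_i(a^*) - u_i(a^*)\bigr) = 0$ and the hypothesis gives $g(a) < 0$ for every $a \neq a^*$, the expectation is at most $0$, and it vanishes only if $\prod_i x_{i,a_i} = 0$ for every $a \neq a^*$. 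I would conclude by showing this degeneracy occurs only at $x = x^*$: if $x \neq x^*$, some player $j$ has an action $b_j \neq a_j^*$ with $x_{j,b_j} > 0$, and picking an action from the nonempty support of each remaining $x_i$ gives a profile $a \neq a^*$ with $\prod_i x_{i,a_i} > 0$, a contradiction. Hence $g(x) < 0$ for every $x \neq x^*$, which is exactly global variational stability.

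The step I expect to require the most care is the identity above — differentiating the multilinear payoff and tracking the inner product correctly — together with the support argument that upgrades the weak inequality to strictness; the rest is just the observation that a multilinear function on a product of simplices is an expectation over independent vertex draws and is therefore pinned down by its vertex values. (Incidentally, the hypothesis already forces $a^*$ to be a \emph{strict} PNE, since taking $a$ that differs from $a^*$ in a single coordinate $i$ collapses the sum to $u_i(a_i, a_{-i}^*) - u_i(a_i^*, a_{-i}^*)$.)
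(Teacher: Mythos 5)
Your proof is correct and follows essentially the same route as the paper's: derive the identity $\langle F(x), x - x^*\rangle = \sum_{i}\bigl(u_i(x) - u_i(a_i^*, x_{-i})\bigr)$ from multilinearity, then expand it as an expectation of the vertex values over the product distribution induced by $x$. Your closing support argument is in fact slightly more careful than the paper's, which asserts strict negativity of the final sum without separating out the $a = a^*$ term (which vanishes) or noting that strictness requires $x$ to put positive mass on some $a \neq a^*$, i.e., $x \neq x^*$.
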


The proof can be found in the appendix.

Strictly monotonous games are globally variationally stable \citep{mertikopoulosLearningGamesContinuous2019}, and certain gradient dynamics converge to the unique equilibrium of such games. Variational stability is a weaker condition that still implies convergence to the unique Nash equilibrium under certain dynamics if it holds globally \citep{mertikopoulos2016learning, mertikopoulosLearningGamesContinuous2019}.


\section{Games as Dynamical Systems}\label{sec:unconstrained-dynamics}
A Nash equilibrium is commonly considered to be the outcome of the interaction between rational agents.
However, it is acknowledged that the analytical solution requires extensive information about the game (e.g. payoffs of all players) and that there must be some sort of coordination between the agents in the case where there are multiple Nash equilibria. Very much like a variational inequality, the Nash equilibrium describes a static concept, but it says little how agents find equilibrium. 

If learning agents that selfishly seek to maximize their payoff \textit{without prior information about other's values or costs} reach a Nash equilibrium, this is a strong rationale for the Nash equilibrium as a prediction. A principled way to study the learning process of independent agents is via dynamical systems.

A \textit{dynamical system} is a mathematical framework used to describe the evolution of a state variable $x(t) \in \mathbb{R}^n$ over time $t$. In this section, we focus on continuous-time dynamics. 
In continuous time, we will consider autonomous dynamical systems represented by ordinary differential equations (ODEs) of the form
\begin{equation}
  \dot{x}(t) = f(x(t)),
\end{equation}
where $f: \mathbb{R}^n \to \mathbb{R}^n$ is a continuous function, and $\dot{x}(t)$ denotes the time derivative of $x(t)$. 
The learning algorithms that we introduced in Section \ref{sec:online-learning} induce discrete-time dynamics. However, with sufficiently small step sizes, they approximate the continuous dynamics, which can then serve as a model. A dynamical system in which the rules governing its evolution do not depend explicitly on time is also called an \textit{autonomous system}. Here, the system is only determined by its current state.

Gradient-based learning algorithms in games can be described as a dynamical system where the state variables $x$ are a combination of all strategies and the function $f$ is associated with the game gradient $F$ (see Definition \ref{def:gamegradient}): 
\begin{equation}
  \dot{x}(t) = F(x(t)).
\end{equation}
In this chapter of our survey, we will first focus on unconstrained dynamics where no modifications to the game gradient are required. This is useful for many continuous games. If the state space $\Xcal$ of the dynamical system is constrained, we will need projections to modify our dynamics accordingly, e.g., $f(x) = \proj(F(x))$ for some projection $\proj$. We will cover this in Section \ref{sec:projected-dynamics}. 

In the following, we will introduce concepts for the analysis of stability in dynamical systems, starting with local stability. We will mostly rely on the canonical continuous-time view, but we note that many of the following concepts and theorems also apply to the discrete-time case. 

\subsection{Local Stability} 

Stability captures the scenario that a trajectory does not move far away from a given equilibrium point. This strengthens the meaning of an equilibrium point. 
One way to characterize stability is the notion of \textit{Lyapuov stability}.
We denote the flow (also: “trajectory”, “orbit”) of the dynamical system $\dot{x} = f(x)$, starting at $x(0) = x_0$, by $\varphi_t(x_0)$.

\begin{definition}[Lyapunov stability]
    An equilibrium point $x^*$ of a flow $\varphi_t$ is (Lyapunov) \emph{stable} if, for each neighborhood $\Ucal$ of $x^*$, there exists a (potentially smaller) neighborhood $\Vcal$ such that
    \begin{equation}
        \forall x \in \Vcal, \forall t \geq 0: \quad \varphi_t(x) \in \Ucal 
    \end{equation}
\end{definition}

\begin{figure}[h]
    \centering
  
        \centering
    \begin{tikzpicture}[scale=0.7]
        
        \filldraw[black] (0,0) circle (2pt) node[above right] {\footnotesize $x^*$};
        
        \begin{scope}
            \clip plot[smooth cycle, tension=0.7] coordinates { (3,0) (2.6,1.8) (1.5,2.7) (0,3) (-1.5,2.7) (-2.6,1.8) (-3,0) (-2.6,-1.8) (-1.5,-2.7) (0,-3) (1.5,-2.7) (2.6,-1.8)};
            \fill[niceBlue!30, opacity=0.1] (-3,-3) rectangle (3,3);
        \end{scope}
        \draw[dashed, thick,niceBlue] plot[smooth cycle, tension=0.7] coordinates { (3,0) (2.6,1.8) (1.5,2.7) (0,3) (-1.5,2.7) (-2.6,1.8) (-3,0) (-2.6,-1.8) (-1.5,-2.7) (0,-3) (1.5,-2.7) (2.6,-1.8)};
        \node[niceBlue] at (2.5, 2.5){\footnotesize $\Ucal$};
        
        \begin{scope}
            \clip plot[smooth cycle, tension=0.8] coordinates { (1.5,0) (1.3,0.9) (0.8,1.4) (0,1.5) (-0.8,1.4) (-1.3,0.9) (-1.5,0) (-1.3,-0.9) (-0.8,-1.4) (0,-1.5) (0.8,-1.4) (1.3,-0.9)};
            \fill[red!30, opacity=0.1] (-1.5,-1.5) rectangle (1.5,1.5);
        \end{scope}
        \draw[dashed, thick, niceRed] plot[smooth cycle, tension=0.8] coordinates { (1.5,0) (1.3,0.9) (0.8,1.4) (0,1.5) (-0.8,1.4) (-1.3,0.9) (-1.5,0) (-1.3,-0.9) (-0.8,-1.4) (0,-1.5) (0.8,-1.4) (1.3,-0.9)};
        \node[niceRed] at (1.5, 1.5){\footnotesize $\Vcal$};
        
        \filldraw[black] (0.5, 0.7) circle (2pt) node[above left] {\footnotesize $x$};
        \draw[thick, ->, smooth] plot coordinates {(0.5, 0.7) (1,1) (2, -1) (0, -1) (-0.5, 1)}
        node[right] {\footnotesize $\varphi$};

    \end{tikzpicture}
    \caption{Visualization of Lyapunov stability.}
    \label{fig:lyap_stability}
\end{figure}
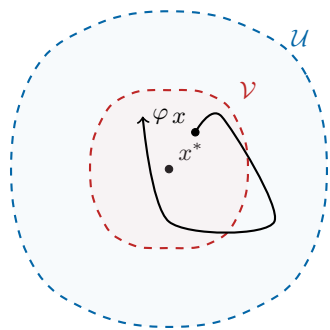
    
\begin{definition}[Lyapunov function \citep{meiss_differential_2017}]
    \label{def:lyapunov-function}
    A continuous function $V: \mathbb{R}^n \to \mathbb{R}$ is a (strong) Lyapunov function for an equilibrium $x^*$ if, on an open neighborhood $\Ucal$ of $x^*$,
    
    \begin{enumerate}[label=\roman*)]
        \item $V(x^*)=0$ and $V(x)>0$, and \label{lyapunov:positive-definiteness}
        \item $V(\varphi_t(x)) < V(x)$ for all $t > 0$. \label{lyapunov:decrescence}
    \end{enumerate}

    It is a weak Lyapunov function if the inequality in condition \ref{lyapunov:decrescence} is not strict. A sufficient condition for condition \ref{lyapunov:decrescence} is that the Lie derivative 
    \begin{equation*}
        \frac{d V}{d t} = \nabla V(x) \cdot f(x)
    \end{equation*}
    is negative.
\end{definition}

In a similar manner, Lyapunov functions can also be defined for discrete-time systems. In this case, the second condition is replaced by the condition that the Lyapunov function does not increase along trajectories of the dynamics, $V(x^{t+1}) - V(x^t) \leq 0$, with equality only for $x^t = x^{t+1}$.
In addition, asymptotic stability can be considered:
\begin{definition}[Asymptotic stability]
An equilibrium point \({x}^*\) is \textit{asymptotically stable} if:
\begin{enumerate}[label=\roman*)]
    \item It is Lyapunov stable.
    \item Trajectories starting sufficiently close to the equilibrium satisfy:
        \[
            \|{x}(t=0) - {x}^*\| < \delta \Rightarrow \lim_{t \to \infty} {x}(t) = {x}^*.
        \]
\end{enumerate}
\end{definition}
This means that trajectories not only remain close but also converge to the equilibrium as \(t \to \infty\).
Asymptotic stability implies Lyapunov stability, but not vice versa. A Lyapunov function implies asymptotic stability if the function value decreases strictly over time ($\dot{V}(x) < 0$ for all $x \in U \setminus \{x^*\}$) (a strong Lyapunov function).
Exponential stability is an even stronger condition that ensures convergence occurs at an exponential rate. 

\citet{meiss_differential_2017} provides the corresponding stability result:

\begin{theorem}[Lyapunov stability]
    Let $x^*$ be an equilibrium point of the flow $\varphi_t(x)$. If $V$ is a weak Lyapunov function, $x^*$ is stable. If $V$ is a strong Lyapunov function, then $x^*$ is asymptotically stable.
\end{theorem}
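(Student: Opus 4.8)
The plan is to prove the two implications separately, using only the definition of a (weak/strong) Lyapunov function together with standard properties of the flow (continuity of $\varphi_t(x)$ in $x$, which also yields global existence for bounded orbits). For the \emph{weak} case, fix an arbitrary neighborhood $\Ucal$ of $x^*$; I would first shrink it so that $V$ is defined on $\Ucal$ and strictly positive on $\Ucal \setminus \{x^*\}$, then choose $r>0$ with the closed ball $\bar{B}_r(x^*) \subseteq \Ucal$. Set $m := \min\{V(x) : \|x-x^*\| = r\}$, which is attained on the compact sphere and is $>0$ by positive-definiteness of $V$; by continuity of $V$ and $V(x^*)=0$, pick $\delta \in (0,r)$ with $V < m$ on $\Vcal := B_\delta(x^*)$. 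The crux is that any trajectory starting in $\Vcal$ never leaves $B_r(x^*)$: if it reached the sphere $\|x-x^*\|=r$ at a first time $t_1$, then $s \mapsto V(\varphi_s(x_0))$ is non-increasing on $[0,t_1]$ — apply condition \ref{lyapunov:decrescence} at each point $\varphi_{s_1}(x_0) \in \bar{B}_r(x^*) \subseteq \Ucal$ with elapsed time $s_2 - s_1 > 0$ — so $V(\varphi_{t_1}(x_0)) \le V(x_0) < m$, contradicting $V(\varphi_{t_1}(x_0)) \ge m$. Hence $\varphi_t(\Vcal) \subseteq B_r(x^*) \subseteq \Ucal$ for all $t \ge 0$, which is exactly Lyapunov stability.

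For the \emph{strong} case, Lyapunov stability holds a fortiori (a strong Lyapunov function is weak), so reuse the $\Vcal$ and $r$ above: for $x_0 \in \Vcal$ the orbit stays in the compact set $\bar{B}_r(x^*)$. Then $t \mapsto V(\varphi_t(x_0))$ is non-increasing and bounded below by $0$, so it converges to some $c \ge 0$; it suffices to show $c=0$ and that this forces $\varphi_t(x_0) \to x^*$. To obtain $c=0$ I would invoke the $\omega$-limit set $\Omega$ of the orbit, which is non-empty, compact and invariant because the orbit is bounded; continuity of $V$ gives $V \equiv c$ on $\Omega$. If $c>0$ then $x^* \notin \Omega$; taking any $y \in \Omega$ and using invariance, $\varphi_t(y) \in \Omega$ for all $t$, hence $V(\varphi_t(y)) = c = V(y)$ for every $t>0$, which contradicts the strict inequality in condition \ref{lyapunov:decrescence} (legitimate since $y \in \Omega \subseteq \bar{B}_r(x^*) \subseteq \Ucal$). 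So $c=0$. Finally, if $\varphi_t(x_0) \not\to x^*$, compactness gives a subsequence $\varphi_{t_n}(x_0) \to y \ne x^*$, so $V(y) > 0$; but $V(y) = \lim_n V(\varphi_{t_n}(x_0)) = c = 0$, a contradiction. Thus $\varphi_t(x_0) \to x^*$, giving asymptotic stability.

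The main obstacle I anticipate is handling the strict-inequality formulation of condition \ref{lyapunov:decrescence} rather than the Lie-derivative condition $\dot V < 0$: with the latter one gets a uniform bound $\dot V \le -a < 0$ on a compact annulus $\{c \le V \le V(x_0)\}$ bounded away from $x^*$ and concludes $V(\varphi_t(x_0)) \le V(x_0) - at \to -\infty$ immediately, whereas with only ``$V(\varphi_t(x)) < V(x)$'' one genuinely needs the $\omega$-limit / LaSalle-type argument above, whose validity rests on the invariance of $\Omega$ (and hence on continuous dependence of $\varphi_t$ on initial data). A secondary point requiring care is the ``first exit time'' step — that a continuous trajectory leaving $B_r(x^*)$ must first meet the sphere — and the bookkeeping that every point at which the Lyapunov property is invoked indeed lies in the open neighborhood $\Ucal$ on which $V$ is a Lyapunov function.
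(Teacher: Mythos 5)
Your proof is correct: the minimum-of-$V$-on-a-sphere sublevel-set argument for stability and the $\omega$-limit/LaSalle argument for asymptotic stability are exactly the standard route, and they mirror the proofs the paper gives for the discrete-time counterparts (Theorems 5.1 and 5.2, proved in Appendix A via the same $m=\min V$ construction and the invariance principle); the continuous-time statement itself is only cited from Meiss. Your care with the first-exit-time step and with applying the strict decrease only at points of $\Ucal\setminus\{x^*\}$ addresses the genuine subtleties of the strict-inequality formulation of condition (ii).
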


Finding a Lyapunov function can be a challenging task. For unconstrained dynamical systems, we can rely on linearizations of the dynamical system to show local stability. The local dynamics of a game near an equilibrium can be analyzed via approximation of the (continuously differentiable) game gradient $F$. Let 
\begin{equation*}
    \begin{split}
	F(x) &= F(x_0) + \nabla_x F(x_0) (x - x_0) + \mathcal{O}((x - x_0)^2) \\
    &= F(x_0) + J(x_0) (x - x_0) + \mathcal{O}((x - x_0)^2)
    \end{split}
\end{equation*}
be the multivariate Taylor expansion around $x_0$.
In a small neighborhood $U$, the dynamics can be approximated as
\begin{equation*}
	F(x) \approx F(x_0) + J(x_0) (x - x_0)
\end{equation*}
by dropping higher-order terms. Suppose $x_0$ is an interior equilibrium of $F$. Then $F(x_0)~=~0$ and $F(x) \approx J(x_0) (x - x_0)$. Thus
\begin{equation*}
	\frac{d}{dt}{(x - x_0)} = \dot{x} - \dot{x_0} = \dot{x} = J(x_0) (x - x_0),
\end{equation*}
so the variable $\Delta x = (x - x_0)$ evolves (approximately) according to the linear system $\dot{\Delta x} = J(x_0) \Delta x$. From linear system analysis, we get asymptotic stability of $\Delta x = 0$ (and thus $x_0$) if the eigenvalues of the Jacobian matrix $J(x)=\nabla F(x_0) \in \mathbb{R}^{n \times n}$ are all negative (see Theorem \ref{thm:mon_jac}). Note that eigenvalues can be complex and only the real part has to be negative.

\subsection{Global Asymptotic Stability and Basin of Attraction}\label{sec:globalstability}

Apart from local asymptotic stability around an equilibrium, we are interested in the region of convergence. 
Such conditions can hold in a specific region or globally. We will start with the latter case, \textit{global asymptotic stability}.

\begin{definition}[Global asymptotic stability (GAS)]
A point \( x^* \) is globally asymptotically stable (GAS) if:
\begin{enumerate}[label=\roman*)]
\item \( x^* \) is \textit{globally stable}, i.e., for all \( \varepsilon > 0 \), there exists \( \delta > 0 \) such that if \( \|x(0) - x^*\| < \delta \), then \( \|x(t) - x^*\| < \varepsilon \) for all \( t \geq 0 \).
\item \( x^* \) is \textit{globally attractive}, i.e., for all initial conditions \( x(0) \in \mathbb{R}^n: \lim_{t \to \infty} x(t) = x^* \).
\end{enumerate}
\end{definition}

Global asymptotic stability can be shown by a Lyapunov function $V$ that is globally positive definite and whose derivative is globally negative definite. "Globally" essentially means that the neighborhood becomes the entire set: $U = \Xcal$. Also, $V$ must be radially unbounded, i.e., \( \lim_{\norm{x} \to \infty} V(x) = \infty \).
For a linear system \( \dot{x} = Ax \), global asymptotic stability is ensured if the matrix \( A \) is Hurwitz, i.e., all eigenvalues of \( A \) have strictly negative real parts.

If a point is locally asymptotically stable but not globally, we seek to characterize its \textit{basin of attraction}. The basin of attraction for a given equilibrium $x^*$ is defined as the set of all initial states for which the dynamics converge to $x^*$.
While it is, in general, difficult to specify the entire basin of attraction, we can at least determine subsets of it by means of the following result.\footnote{For a more thorough introduction, see, e.g., \cite{khalil_nonlinear_2002}.}

\begin{corollary}[Estimating the basin of attraction] \label{cor:basin_attraction}
    For an equilibrium $x^*$ that is asymptotically stable according to the Lyapunov function $V$, let 
    \begin{equation*}
        \Vcal = \{x^*\} \cup \{x \vert V(x) > 0, \dot{V} < 0\} \quad \text{and} \quad \Ucal_c = \{x \vert V(x) \leq c\}.
    \end{equation*}
    If $\Ucal_c \subseteq \Vcal$ and $\Ucal_c$ is bounded, $\Ucal_c$ is a subset of the basin of attraction.
\end{corollary}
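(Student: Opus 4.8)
The plan is to show that $\Ucal_c$ is compact and positively invariant, that $V$ decreases strictly along every nonstationary trajectory inside it, and then to upgrade this monotone decrease into convergence to $x^*$ by two short compactness arguments. Throughout I use that, by the definition of $\Vcal$, every $x \in \Ucal_c$ with $x \neq x^*$ satisfies $V(x) > 0$ and $\dot V(x) = \nabla V(x)\cdot f(x) < 0$, whereas $V(x^*) = 0$ and $f(x^*) = 0$; hence $\dot V \le 0$ on all of $\Ucal_c$, with equality only at $x^*$. I also take $V$ to be continuously differentiable (so $\dot V$ is continuous) and assume the standard existence/uniqueness setting for $\dot x = f(x)$ under which the flow $\varphi_t$ is well defined.

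First, $\Ucal_c = V^{-1}\bigl((-\infty,c]\bigr)$ is closed because $V$ is continuous, and it is bounded by hypothesis, hence compact. Next I would prove positive invariance: for $x_0 \in \Ucal_c$ set $\tau = \sup\{T \ge 0 : \varphi_t(x_0) \in \Ucal_c \text{ for all } t \in [0,T]\}$. On $[0,\tau)$ the orbit lies in $\Ucal_c \subseteq \Vcal$, so $t \mapsto V(\varphi_t(x_0))$ is non-increasing and stays $\le V(x_0) \le c$. If $\tau$ were finite, continuity would give $V(\varphi_\tau(x_0)) \le c$, i.e.\ $\varphi_\tau(x_0) \in \Ucal_c$; this point is either $x^*$ (in which case the orbit is stationary and $\tau = \infty$) or a point where $\dot V < 0$, so $V(\varphi_t(x_0))$ continues to decrease and stay $\le c$ for $t$ slightly beyond $\tau$ — contradicting maximality. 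Since $\Ucal_c$ is compact and forward invariant, the solution through $x_0$ exists for all $t \ge 0$ (a confined orbit cannot blow up), and $V(\varphi_t(x_0))$ decreases monotonically to a limit $v_\infty \ge 0$.

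It remains to pin down $v_\infty$ and the limit point. If $v_\infty > 0$, then $L = \{x \in \Ucal_c : V(x) \ge v_\infty\}$ is a compact subset of $\Ucal_c$ not containing $x^*$, so the continuous function $\dot V$ is strictly negative on $L$ and therefore bounded above by some $-\alpha < 0$ on $L$; but the whole forward orbit of $x_0$ lies in $L$, forcing $V(\varphi_t(x_0)) \le V(x_0) - \alpha t \to -\infty$, which contradicts $V \ge 0$ on $\Ucal_c$. Hence $v_\infty = 0$. Finally, if $\varphi_t(x_0) \not\to x^*$, there are $\varepsilon > 0$ and $t_k \to \infty$ with $\varphi_{t_k}(x_0)$ in the compact set $\widehat L = \{x \in \Ucal_c : \|x - x^*\| \ge \varepsilon\}$, on which $V$ is continuous and strictly positive, hence $\ge \beta$ for some $\beta > 0$; this contradicts $V(\varphi_{t_k}(x_0)) \to v_\infty = 0$. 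Thus $\varphi_t(x_0) \to x^*$ for every $x_0 \in \Ucal_c$, i.e.\ $\Ucal_c$ is contained in the basin of attraction of $x^*$.

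\textbf{Main obstacle.} I expect the delicate step to be the rigorous positive-invariance argument: ruling out that a trajectory escapes through the level set $\{V = c\}$ and guaranteeing the solution is defined for all forward time. Both rely precisely on $\Ucal_c \subseteq \Vcal$ (so that $\dot V \le 0$ holds on \emph{all} of $\Ucal_c$, not merely near $x^*$) and on boundedness/compactness of $\Ucal_c$ — without the inclusion the boundary could be crossed where $\dot V > 0$, and without boundedness an orbit could exit every bounded set in finite time. A minor additional point is that the corollary only posits $V$ continuous, whereas the constants $\alpha,\beta$ above are obtained from compactness applied to the continuous function $\dot V$; one either assumes $V \in C^1$, or replaces the last two paragraphs by an $\omega$-limit-set argument (invariance of $\Omega(x_0)$ plus $V$ constant on it forces $\Omega(x_0) = \{x^*\}$), which instead requires uniqueness of solutions.
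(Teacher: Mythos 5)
Your proof is correct, but it takes a different route from the paper. The paper dispatches this corollary in one line by invoking LaSalle's invariance principle: $\Ucal_c \subseteq \Vcal$ makes $\Ucal_c$ positively invariant and compact, so trajectories converge to the largest invariant subset of $\{\dot V = 0\} \cap \Ucal_c$, which is $\{x^*\}$ by construction of $\Vcal$. You instead give a self-contained elementary argument that effectively re-proves the needed special case of LaSalle: you establish positive invariance by an exit-time contradiction (which the paper merely asserts), then rule out $v_\infty > 0$ via a uniform negative bound on $\dot V$ over the compact set $\{V \ge v_\infty\} \cap \Ucal_c$, and finally force convergence to $x^*$ by a second compactness argument. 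What your approach buys is transparency and independence from the $\omega$-limit-set machinery; what it costs is the extra regularity you correctly flag — your quantitative bounds need $\dot V$ continuous (hence $V \in C^1$), whereas the paper's Definition~\ref{def:lyapunov-function} only requires $V$ continuous with decrease along trajectories, and the LaSalle route handles that weaker setting. Your suggested fallback (invariance of the $\omega$-limit set plus constancy of $V$ on it) is essentially the paper's argument, so you have in effect identified both proofs.
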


If $c$ is too big, then $\Ucal_c$ may contain points where the dissipation property of the Lyapunov function, i.e., $\dot{V}<0$ outside equilibrium, fails. If this is not the case, then $\Ucal_c$ is a subset of the basin of attraction.

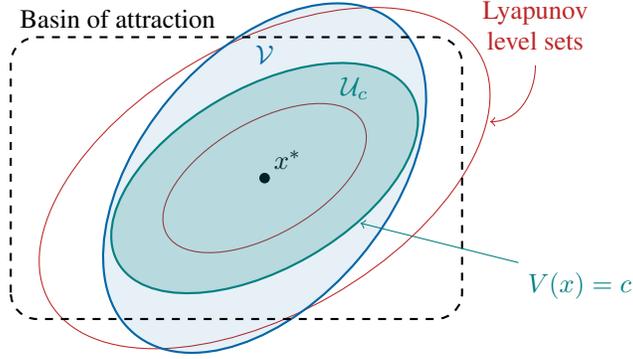
\begin{figure}[h]
    \centering
    \begin{tikzpicture}[x=0.75cm,y=0.75cm]
        
        \fill[black] (0,0) circle (2pt) node[above right] {$x^*$};
        
        \draw[niceRed, rotate around={30:(0,0)}] (0,0) ellipse (4.4 and 2.4);
        \draw[niceRed, rotate around={30:(0,0)}] (0,0) ellipse (2 and 1);
        \node[niceRed, align=center] at (4.8,2.7) {Lyapunov \\ level sets};
        \draw[->, niceRed, rounded corners] (4.8, 2.0) to[out=-90, in=0] (4.0, 1.0);
        
        \draw[niceBlue, thick, rotate around={-40:(0,0)}, fill=niceBlue, fill opacity=0.1] (0,0) ellipse (2.2 and 3.6);
        \node[niceBlue] at (0.0, 2.2) {$\Vcal$};
        
        \draw[teal, thick, rotate around={30:(0,0)}, fill=teal, fill opacity=0.2] (0,0) ellipse (3 and 1.6);
        \node[teal] at (1.6, 1.6) {$\Ucal_c$};
        
        \draw[draw=black, thick, dashed, rounded corners=10] (-4.5, -2.5) rectangle (3.5, 2.5);
        \node[black, anchor=south west] at (-4.5, 2.5) {Basin of attraction};
        
        \draw[<-, teal] (1.7, -0.8) -- (4.5, -1.5) node[below right, teal] {$V(x) = c$};
    \end{tikzpicture}
    \caption{Estimating the basin of attraction with Corollary \ref{cor:basin_attraction}.}
    \label{fig:basin-of-attraction}
\end{figure}

This result follows immediately from LaSalle's invariance principle \citep{lasalle_extensions_1960}, which we state below for completeness.
\begin{theorem}[LaSalle's Invariance Principle]
    Let $\Ical \subseteq \Xcal$ be a  compact \emph{invariant set} under the dynamics $f$, i.e., 
    \begin{equation*}
      x(0) \in \Ical \Rightarrow x(t) \in \Ical, \quad \forall t \geq 0.
    \end{equation*}
   Let $V$ be a continuously differentiable function with $\dot{V}(x) \leq 0$ on $\Ical$, and define $\mathcal{E} := \{x \in \Ical \vert \dot{V}(x) = 0\}$. 
   Then, each trajectory of the dynamical system approaches $\mathcal{M}$, the largest invariance set in $\mathcal{E}$.
\end{theorem}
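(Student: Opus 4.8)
The plan is the standard $\omega$-limit set argument, combining the monotonicity of $V$ along the flow with compactness of $\Ical$. Fix an initial condition $x_0 \in \Ical$ and write $x(t) = \varphi_t(x_0)$. Because $\Ical$ is compact and invariant under $f$, the forward orbit $\{x(t) : t \ge 0\}$ stays in $\Ical$, hence is bounded and defined for all $t \ge 0$. Its $\omega$-limit set $\omega(x_0) := \{\, y \in \Xcal : \varphi_{t_k}(x_0) \to y \text{ for some } t_k \to \infty \,\}$ is then non-empty, compact, connected, contained in $\Ical$, invariant, and satisfies $\operatorname{dist}(x(t),\omega(x_0)) \to 0$ as $t \to \infty$. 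These are the classical properties of $\omega$-limit sets of bounded orbits of a continuous flow, which I would either cite or establish quickly: non-emptiness and compactness from boundedness via a diagonal extraction, the distance claim by contradiction against that extraction, and invariance from continuity of $\varphi$ in the initial condition.

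Next I would use the hypothesis $\dot V \le 0$ on $\Ical$. Along the trajectory, $\frac{d}{dt} V(x(t)) = \nabla V(x(t)) \cdot f(x(t)) = \dot V(x(t)) \le 0$, so $t \mapsto V(x(t))$ is non-increasing; since $V$ is continuous on the compact set $\Ical$ it is bounded below on the orbit, hence $V(x(t)) \downarrow c$ for some $c \in \R$. By continuity of $V$, for every $y \in \omega(x_0)$ choosing $t_k \to \infty$ with $\varphi_{t_k}(x_0) \to y$ gives $V(y) = \lim_k V(\varphi_{t_k}(x_0)) = c$, so $V \equiv c$ on $\omega(x_0)$.

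Then I would show $\omega(x_0) \subseteq \Ecal$. Since $\omega(x_0)$ is invariant, for any $y \in \omega(x_0)$ the orbit $s \mapsto \varphi_s(y)$ stays in $\omega(x_0)$, on which $V$ is constant equal to $c$; differentiating $s \mapsto V(\varphi_s(y)) \equiv c$ at $s = 0$ yields $\dot V(y) = 0$, i.e. $y \in \Ecal$. Thus $\omega(x_0)$ is an invariant subset of $\Ecal$, so by maximality of $\Mcal$ we get $\omega(x_0) \subseteq \Mcal$. Combining with $\operatorname{dist}(x(t),\omega(x_0)) \to 0$ gives $\operatorname{dist}(x(t),\Mcal) \to 0$, which is exactly the assertion that every trajectory starting in $\Ical$ approaches $\Mcal$.

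The main obstacle is the \emph{invariance} of $\omega(x_0)$: non-emptiness, compactness, and the "orbit approaches its $\omega$-limit set" statement follow directly from boundedness, but invariance relies on continuous dependence of $\varphi_t$ on initial conditions (and that orbits do not escape in finite time, which is guaranteed here since $\Ical$ is compact and invariant). If I wanted a self-contained argument I would prove it by the usual selection trick: given $y \in \omega(x_0)$ with $\varphi_{t_k}(x_0) \to y$ and any fixed $s$, continuity of $\varphi_s$ gives $\varphi_{t_k + s}(x_0) = \varphi_s(\varphi_{t_k}(x_0)) \to \varphi_s(y)$ with $t_k + s \to \infty$, so $\varphi_s(y) \in \omega(x_0)$. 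Everything else is bookkeeping with continuity and compactness.
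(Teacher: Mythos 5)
Your argument is correct and follows the same $\omega$-limit-set strategy the paper uses for its discrete-time analogue in Appendix~A (the continuous-time statement itself is only cited to LaSalle, not proved in the text): orbit confined to the compact invariant set, $V$ non-increasing and bounded below so $V \to c$, hence $V \equiv c$ on the $\omega$-limit set, invariance forcing $\dot V = 0$ there, and maximality of $\Mcal$ concluding. Your explicit handling of the invariance of $\omega(x_0)$ via continuity of the flow is the right technical caveat and is consistent with the paper's appeal to the corresponding property in LaSalle's book.
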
 

By selecting $c$ appropriately, we can ensure that the only place where $\dot{V}(x)$ can be zero is at the equilibrium point $x^*$.
In summary, by using Lyapunov functions together with LaSalle's invariance principle, you can derive stability results that extend beyond just an infinitesimally small neighborhood of an equilibrium (i.e., local stability) to larger, well-defined regions, called regions of attraction. 

While asymptotic stability or Lyapunov stability describe the quality of convergence, an \textit{attractor} represents the set where trajectories end up. In other words, an \textit{attractor} is a set (often a point or a more complex structure) in a dynamical system such that trajectories starting from a certain neighborhood (the basin of attraction) asymptotically approach the attractor. While equilibrium points can be attractors, attractors are a broader concept in dynamical systems and can take other forms, such as periodic orbits or limit cycles, chaotic sets, or more complex structures. Chain recurrent sets capture all the types of recurrent behaviors—in particular, all the attractors (point equilibria, limit cycles, chaotic attractors, etc.) are contained within the chain recurrent set.  Attractors are asymptotically stable (or at least Lyapunov stable) in their basins, while chain recurrent classes need not be stable in any attracting sense. They can include, for example, saddle points.

Point attractors are equilibrium points, and this is what we will focus on. Trajectories may oscillate, spiral, or otherwise behave non-monotonically while converging to the attractor.  
A \textit{monotone attractor} is a special type of attractor where the distance between the trajectory and the attractor (e.g., an equilibrium point) is non-increasing over time.
More specifically, an equilibrium point \(x^* \in \mathcal{X} \subseteq \mathbb{R}^n\) is a \emph{monotone attractor} if there exists a neighborhood \(B(x^*, \delta) \subseteq \mathcal{X}\) such that for any trajectory \(x(t)\) with \(x(0) \in B(x^*, \delta)\), the Euclidean distance \(\|x(t) - x^*\|\) is a non-increasing function of time \(t \geq 0\):
\begin{equation*}
  \frac{d}{dt} \|x(t) - x^*\| \leq 0, \quad \forall t \geq 0.
\end{equation*}

If the distance \(\|x(t) - x^*\|\) is strictly decreasing (unless \(x(t) = x^*\)), then \(x^*\) is called a \emph{strictly monotone attractor}.

An equilibrium point \(x^* \in \mathcal{X}\) is a \textit{global monotone attractor} if the monotonicity property holds for all trajectories starting from any initial condition \(x(0) \in \mathcal{X}\). Specifically, the Euclidean distance \(\|x(t) - x^*\|\) satisfies:
\begin{equation*}
  \frac{d}{dt} \|x(t) - x^*\| \leq 0, \quad \forall t \geq 0, \, \forall x(0) \in \mathcal{X}.
\end{equation*}
If the distance is strictly decreasing globally, \(x^*\) is a \textit{strictly global monotone attractor}.

Every strictly asymptotically stable equilibrium is a strictly monotone attractor, but not every attractor is  asymptotically stable. A weak attractor may attract nearby trajectories, but the convergence may not be strict or asymptotically fast.
\citet{flokas2020no} show that mixed Nash equilibria cannot be stable and attracting under certain no-regret dynamics. 
For example, consider a continuous Matching Pennies game where each player chooses a strategy in the interval \([0,1]\). The payoff functions are defined as
\begin{equation*}
  u_1(x,y) = \sin\bigl(2\pi (y - x)\bigr), \quad u_2(x,y) = -\sin\bigl(2\pi (y - x)\bigr),
\end{equation*}
for all \(x,y \in [0,1]\). In this game, no pure strategy Nash equilibrium exists, and only a mixed Nash equilibrium is present, where each player randomizes uniformly over the interval $[0,1]$.

\subsection{Application to Continuous Games with an Interior Equilibrium}\label{sec:vs}
In the following, we will look at two specific games and use the introduced tools to analyze the resulting dynamical system of the continuous games and in particular their equilibria which lie in the interior of the action sets.

\subsubsection{Tullock Contest}

The Tullock Contest can be seen as a smooth approximation of the all-pay auction with differentiable utility functions for positive actions.
The utility function for player \( i \) in the Tullock contest is
\begin{equation*}
  u_i(x_i, x_{-i}) = \frac{x_i^r}{x_i^r + x_{-i}^r} V - x_i,
\end{equation*}
where  \( x_i \geq 0 \) is the bid of player \( i \), \( V > 0\) the value of the prize, and \( r > 0 \) a parameter of the contest success function. 
In the following analysis, we focus on the 2-player setting with \(V=1\) and \(r=2\). 
Usually, the utility function at $a = 0$ is given by $u_i(x_i,x_{-i}) = \tfrac{1}{n}$. 
To avoid this discontinuity in our analysis, we additionally assume that the feasible set $\Xcal_i$ only contains strict positive \( x_i \).

\paragraph{Nash Equilibrium} Since $\tfrac{\partial^2}{\partial x_i^2} u_i(x_i,x_{-i}) \leq 0$ the game is concave and every solution $a^*$ of the variational inequality is also a Nash equilibrium. The variational inequality is given by:
\[ \langle F(x^*), x-x^* \rangle = \left(\dfrac{2x_1 x_2^2}{(x_1^2+x_2^2)^2} - 1\right)(x_1 - x_1^*) + \left( \dfrac{2x_1^2 x_2}{(x_1^2+x_2^2)^2} - 1 \right) (x_2-x_2^*) \leq 0, \quad \forall x \in \Xcal. \]
Assuming that our equilibrium lies in the interior of our feasible set $\Xcal$,  \(F(x^*)=0\) has to hold. 
Solving for a symmetric solution, we can find the symmetric equilibrium \( x^*=(x_1^*, x_2^*) = (\tfrac 1 2, \tfrac 1 2)\).

\paragraph{Local Stability} Using the linear approximation of the system around the equilibrium $x^*$, we can analyze local stability by checking the eigenvalues of the Jacobian matrix $J(x^*) = \nabla F(x^*)$. In our example, we have 
\[J(x^*) = \begin{bmatrix}
-\frac{1}{2} & 0 \\
0 & -\frac{1}{2}
\end{bmatrix}.
\] 
The eigenvalues are negative, which implies local asymptotic stability. 

\paragraph{Variational Stability} To get global convergence results for some gradient-based learners, we would need global variational stability of the equilibrium (see Section~\ref{sec:variational_stability}), i.e., 
\[ \langle F(x), x-x^* \rangle = \left( \frac{2x_1 x_2^2}{(x_1^2 + x_2^2)^2} - 1 \right)  (x_1 - \tfrac 1 2)  + 
\left( \frac{2x_2 x_1^2}{(x_1^2 + x_2^2)^2} - 1 \right) (x_2 - \tfrac 1 2) < 0, \quad \forall x \in \Xcal\setminus\{x^*\}\]
Unfortunately, there are points for which the condition is violated (see Figure~\ref{fig:minty}). 
But we can get local variational stability for the set $ \Bcal_c(x^*) =\{ x \in \Xcal: \Vert x - x^* \Vert \leq c \}$ with $c=\tfrac 1 3$
Interestingly, (local) variational stability is equivalent to having a specific (local) Lyapunov function:

\begin{remark} \label{rem:vs_lyap}
    The negative definiteness of the derivative $\dot V(x)$ with $V(x) = \tfrac 1 2 \Vert x - x^* \Vert_2^2$ on some neighborhood $U$ of $x^*$ is equivalent to $x^*$ being variationally stable on $U$:
    \begin{equation*}
      0 > \dot V(x) = \nabla V(x)^T F(x) = \langle F(x), \nabla \tfrac 1 2 \Vert x - x^* \Vert_2^2 \rangle = \langle F(x), x-x^* \rangle, \quad \forall x \in U \setminus \{x^*\}.
    \end{equation*} 
\end{remark}

Therefore, if we have variational stability, we can use the squared Euclidean distance to the equilibrium as a Lyapunov function and use it to determine a basin of attraction. 

\paragraph{Region of Attraction} Using Corollary~\ref{cor:basin_attraction}, we can determine a region from which we always converge to the equilibrium. In this example, we have for instance $\Ucal_{c}$ with $c=\tfrac 1 9$, which corresponds to the ball around the equilibrium with radius $\tfrac 1 3$. 

\begin{figure}[h]
    \centering
    \includegraphics[width=\textwidth]{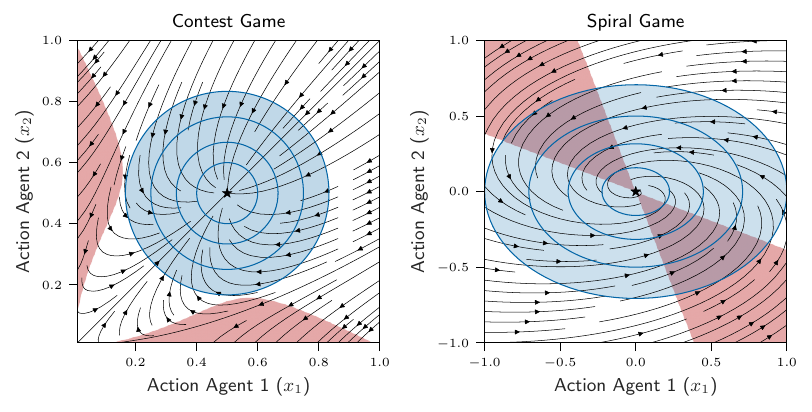}
    \caption{Gradient dynamics for contests and the spiral game. The red areas show violations of variational stability. The blue lines show the level sets of the respective Lyapunov function while the blue area represents the determined basin of attraction.}\label{fig:minty}
\end{figure}

\subsubsection{Spiral Game - Non-Trivial Lyapunov Function}
In another example, we want to show that variational stability might not even be satisfied locally, but we can still get (regional) convergence and determine a basin of attraction using the theory of Lyapunov functions.
To that end, we constructed a game which we call \textit{Spiral Game}. The continuous game has two players \(i \in \{1,2\}\) with feasible sets \(\Xcal_i = [-C, C]\) for some $C > 0$. The utility functions are given by 
\[u_1(x_1,x_2) = - \tfrac 1 2 x_1^2 - 4 x_1 x_2, \quad u_2(x_1,x_2) = - \tfrac 1 4 x_2^2 + x_1 x_2.\]

\paragraph{Nash Equilibrium} Again, the game is obviously concave and we can find an interior equilibrium by computing a $x^*$ such that $F(x^*) = 0$:
\[ F(x_1,x_2) = \begin{pmatrix} -x_1 - 4 x_2, \\ - \tfrac 1 2 x_2 + x_1 \end{pmatrix} = 0 \quad \Rightarrow \quad x^* = (0,0). \]

\paragraph{Local Stability} Using the Jacobian we can show again that the equilibrium is locally stable since the real parts of the eigenvalues are negative: 
\[J(x^*) = \begin{bmatrix}
- 1 & -4 \\
1 & -\frac{1}{2}
\end{bmatrix} \quad \Rightarrow \quad \lambda_{1,2} = - \dfrac{3}{4} \pm  \dfrac{3}{4} \sqrt{7} i. \]

\paragraph{Variational Stability} Interestingly, the game is not even locally variation stable. If we consider the points \(x = (x_1, -x_1)\), we can get arbitrarily close to the equilibrium but still violate the Minty condition: 
\[ \langle F(x), x-x^*\rangle = (- x_1 - 4 (-x_1))x_1 + (-\tfrac 1 2 (-x_1) + x_1)(-x_1) =  \tfrac 3 2 x_1^2 > 0\text{ for all } x_1 \neq 0. \] 

\paragraph{Lyapunov Function} As we have explained in Remark \ref{rem:vs_lyap}, (local) variational stability is stronger than Lyapunov stability, as it corresponds to a very specific form of a Lyapunov function. Allowing for more general functions, we can show that $V(x) = x_1^2 + 2x_2^2 $ is a Lyapunov function in a neighborhood of $x^*$.
Following Definition~\ref{def:lyapunov-function}, it is easy to see that \(V(x)\) is positive definite and for the derivative we have:
\[\dot V(x) = \nabla V(x)^TF(x) = 2x_1(-x_1 - 4 x_2) + 4x_2( \tfrac 1 2 x_2 + x_1) = -2(x_1 + x_2)^2 < 0, \quad \forall x \neq 0. \]
Note that we only consider a neighborhood of $x^*$ such that we do not have to consider projections at the boundary of $\Xcal$.
Since the Lyapunov conditions hold for all interior points, we just have to look for level sets that are contained within the feasible set to determine a \textit{region of attraction}. For this specific example with $\Xcal = [-1,1] \times [-1,1]$ a possible choice is $\Ucal_c$ with $c=1$.

If we are interested in global convergence or if we have an equilibrium on the boundary of the games action set, we have to consider projections. This is what we will do in the next section.

\section{Dynamics of Games with Constraints}\label{sec:projected-dynamics}

In the previous chapter, we assumed that the players' strategies can evolve freely on an unbounded space $\Xcal$. Recall that we modeled the unconstrained dynamics by $\dot{x} = f(x) = F(x)$ with the game gradient $F$. 

Constraints and projected dynamical systems arise, among other cases, for finite games and their mixed extensions. In finite games, a player's action space is some finite set of possible choices. This scenario not only covers many of the standard game-theoretical examples; it also gains practical relevance when we consider discretized versions of continuous games. For many algorithms, this even becomes a necessity since they are only defined on discrete action sets. 

More specifically, many algorithms work on the mixed extension of a finite game, meaning that an algorithm comes to choose a probability distribution over its possible actions. Unfortunately, this restricts the space of possible strategies to the simplex, so a new complexity arises from this approach. We now need to deal with projected dynamics, and several approaches that we discussed earlier, like the linearization of the dynamics and the analysis of the Jacobian, are no longer valid.

Despite these complexities, there are several benefits connected to games in mixed extensions. \citet{nash1950equilibrium} showed that a (possibly mixed) Nash equilibrium has to exist.
Furthermore, the utility functions are now continuous in the mixed strategies, even if the continuous counterpart of a game has discontinuities in the payoff function. Auctions are a classical example of this, where the allocation of the item(s) is often non-smooth. 
Not only are the utility functions of mixed-extensions continuous, but they are also linear (and thus concave) in the strategy of an agent. Thus, the set of solutions to a Stampaccia-type variational inequality is equivalent to the Nash equilibria of these games.

Due to their practical relevance and abundance in many scenarios, we dedicate the following chapter to projected game dynamics. We start with continuous-time considerations before we dive deeper into game dynamics in discrete time.

\subsection{Games with Continuous-Time Dynamics}

Now, we consider projected dynamics in compact and convex spaces, how they can be modeled, and what impact this has on the stability analysis.
Examples of dynamics in continuous time include: 
\begin{enumerate}[label=\roman*)]
    \item \textit{Globally Projected Dynamical System (GPDS) \citep{friesz1994day}:}
    \[
        \dot{x} = \proj_{\Xcal}\bigl(x - \eta F(x_t)\bigr) - x
    \]
    where
    \[
    \proj_{\Xcal}(x) = \argmin_{\hat{x} \in \Xcal} \lVert \hat{x} - x \rVert.
    \]
    
    \item \textit{Locally Projected Dynamical System (LPDS) \citep{nagurney2012projected}:}
    \[
        \dot{x} = \proj_{T_{\Xcal}(x)}\bigl(x, F(x)\bigr)
    \]
    where the projection onto the tangent cone \(T_{\Xcal}(x)\) is defined by
    \[
        T_{\Xcal}(x) = \overline{\{v \in \mathbb{R}^n : \exists \lambda > 0 \text{ such that } x + \lambda v \in \Xcal\}},
    \]
    and
    \[
        \proj_{T_{\Xcal}(x)}(v) = \argmin_{z \in T_{\Xcal}(x)} \|v - z\|.
    \]
\end{enumerate}

Recall from Chapter \ref{sec:game-theory} that the solutions of the variational inequality $VI(\Xcal, F)$ correspond to the Nash equilibria of the game for concave utility functions. \citet{dupuis1993dynamical} showed that these solutions also coincide with the equilibrium points of LPDS and GPDS for convex polyhedra $\Xcal$ such as the probability simplex in the mixed extension of a finite game.
\citet{dupuis1993dynamical} also established the existence and uniqueness of the solution path of the ordinary differential equation (ODE). Using projected dynamical systems, an equilibrium can now be seen as an end-product of a dynamic process, rather than an isolated solution of a time-independent and static variational inequality problem \citep{pappalardo2002stability}. Unfortunately, even if the equilibrium of an unconstrained dynamical system is asymptotically stable, this might not be the case if we consider constraints, as the example in Figure \ref{fig:unstable} illustrates.

\begin{figure}[hpt!]
  \begin{center}  
  \begin{tikzpicture}[scale=1.0]
    
        \draw[->] (-3,0) -- (3,0) node[right] {$X_2$};
        \draw[->] (0,-2) -- (0,2) node[above] {$X_1$};
      
        \coordinate (A) at (-2,-1);
        \coordinate (B) at ( 2,-1);
        \coordinate (O) at (0,0);
        
        \fill[gray!20] (A) -- (B) -- (O) -- cycle;
        
        \draw[thick] (A) -- (B) -- (O) -- cycle;
        
        \fill (A) circle (2pt) node[left] {$A$};
        \fill (B) circle (2pt) node[right] {$B$};
        \fill (O) circle (2pt) node[below left] {$O$};
      
        \draw[thick, niceBlue, domain=0:12.56, samples=200, variable=\t]
          plot ({3*1.5*(1-\t/12.56)*cos(\t r)}, {1.5*(1-\t/12.56)*sin(\t r)});
          
        \draw[->, thick, niceRed] (1.5,-0.7) -- (2.0,-0.6)
          node[right] {\footnotesize $\nabla u$};          
      \end{tikzpicture}
      \caption{An example where the equilibrium point of the dynamical system $(0,0)$ is globally asymptotically stable without the constraint set $A-O-B$, but not if we consider the constraints. The projected dynamics at the edge $O-B$ would drag the dynamics away from the equilibrium (Figure inspired by \citet{nagurney2012projected}). }\label{fig:unstable}
      \end{center}
    \end{figure}
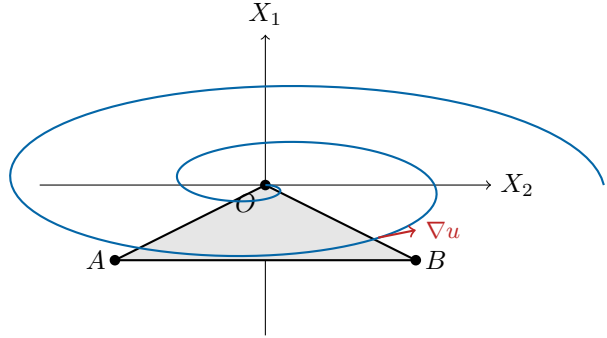

Under some conditions, \citet{dupuis1993dynamical} provide a convergence guarantee of the discrete-time projected dynamics to the continuous LPDS/GPDS as the learning rate $\eta$ approaches zero. These conditions include a convex constraint set as well as monotonicity of the game gradient. Unfortunately, even simple games such as the first-price auction might not be monotonic \citep{bichler2023convergence}. Another regularity condition demands that if the equilibrium is on the boundary, then the dynamics point inward of the feasible region, rather than along or out of the boundary. If the equilibrium is in the interior of the feasible region, then the monotonicity conditions are generally sufficient since regularity holds by default. If the equilibrium is on the boundary of the feasible region, then regularity is additionally required to ensure that the dynamics converge toward equilibrium. Their results can be summarized as follows:

\begin{theorem}[Solutions for LPDS \citep{nagurney2012projected}] \label{thm:nagurney}
Suppose that $x^*$ is an equilibrium point to $LPDS(F,\Xcal)$. Then
\begin{enumerate}[label=\roman*)]
\item if $F$ is locally pseudomonotone at $x^*$, then $x^*$ is a monotone attractor;
\item if $F$ is locally strictly pseudomonotone at $x^*$, then $x^*$ is a strictly monotone attractor;
\item if $F$ is locally strongly monotone at $x^*$, then $x^*$ is exponentially stable;  
\item if $F$ is pseudomonotone/strictly pseudomonotone/strongly monotone on $\Xcal$, $x^*$ is a global monotone attractor/strictly global monotone attractor/globally exponentially stable.
\end{enumerate}
\end{theorem}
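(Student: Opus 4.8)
The plan is to use the squared Euclidean distance $V(x) = \tfrac{1}{2}\norm{x - x^*}^2$ as a candidate Lyapunov function and to track its evolution along solutions of $LPDS(F,\Xcal)$. By \citet{dupuis1993dynamical} such solutions exist, are unique, and are absolutely continuous in $t$, so $t \mapsto V(x(t))$ is absolutely continuous and, for almost every $t$,
\begin{equation*}
    \dot V(x(t)) = \langle x(t) - x^*,\, \dot x(t)\rangle = \langle x(t) - x^*,\, \proj_{T_{\Xcal}(x(t))}(F(x(t)))\rangle .
\end{equation*}
The whole argument reduces to controlling the sign and the magnitude of this scalar.

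The first ingredient is a projection inequality: for every $x \in \Xcal$,
\begin{equation*}
    \langle x - x^*,\, \proj_{T_{\Xcal}(x)}(F(x))\rangle \;\le\; \langle x - x^*,\, F(x)\rangle .
\end{equation*}
I would prove this via Moreau's decomposition of $v := F(x)$ into its tangent and normal components, $v = \proj_{T_{\Xcal}(x)}(v) + w$ with $w$ in the normal cone $N_{\Xcal}(x)$ and the two pieces orthogonal. Since $x^* \in \Xcal$ and $w \in N_{\Xcal}(x)$ we have $\langle w, x^* - x\rangle \le 0$, i.e.\ $\langle w, x - x^*\rangle \ge 0$; subtracting this nonnegative term from $\langle v, x - x^*\rangle$ yields the inequality. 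For the polyhedral case relevant to mixed extensions of finite games (probability simplices) this decomposition is elementary.

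The second ingredient pins down the sign of $\langle F(x), x - x^*\rangle$. Since $x^*$ is an equilibrium of the LPDS it is, by \citet{dupuis1993dynamical} together with the VI characterisation of Section~\ref{sec:game-theory}, a solution of $VI(\Xcal,F)$, so $\langle F(x^*), x - x^*\rangle \le 0$ for all $x \in \Xcal$. If $F$ is (locally) pseudomonotone at $x^*$, this inequality propagates to $\langle F(x), x - x^*\rangle \le 0$ on the neighbourhood in question, hence $\dot V \le 0$ and $\norm{x(t)-x^*}$ is non-increasing, so $x^*$ is a monotone attractor, proving (i). Strict pseudomonotonicity upgrades this to $\langle F(x), x-x^*\rangle < 0$ for $x \ne x^*$, hence $\dot V < 0$ there, giving (ii). For strong monotonicity, adding $\langle F(x) - F(x^*), x-x^*\rangle \le -\alpha\norm{x-x^*}^2$ to the VI inequality gives $\langle F(x), x-x^*\rangle \le -\alpha\norm{x-x^*}^2$, so $\dot V \le -2\alpha V$; Gr\"onwall's inequality then yields $\norm{x(t)-x^*} \le \norm{x(0)-x^*}\,e^{-\alpha t}$, i.e.\ exponential stability, proving (iii). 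Part (iv) is the same computation with the neighbourhood replaced by all of $\Xcal$ and the monotonicity assumed globally.

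The main obstacle is the projection inequality: it is the one place where the geometry of the constraint set enters (through the tangent/normal-cone decomposition), and it must hold pointwise even though $x \mapsto \proj_{T_{\Xcal}(x)}(F(x))$ is discontinuous on the boundary of $\Xcal$; consequently all derivative statements above are read almost everywhere along the absolutely continuous solution rather than classically. A secondary care is the localisation in (i)--(iii): one must pick $\delta$ small enough that $B(x^*,\delta)$ lies inside the neighbourhood on which the monotonicity hypothesis holds, then use $\dot V \le 0$ to make $\{x : \norm{x - x^*} \le \norm{x(0)-x^*}\}$ forward-invariant so that the trajectory never leaves that neighbourhood, and in the non-strict case (i) keep in mind that "monotone attractor" asserts only non-expansion of the distance, not convergence.
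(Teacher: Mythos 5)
Your proof is correct and follows essentially the same route as the cited source (Nagurney and Zhang / Zhang and Nagurney), which the paper defers to without reproducing: the squared-distance Lyapunov function, the tangent-cone projection inequality $\langle x - x^*, \proj_{T_{\Xcal}(x)}(F(x))\rangle \le \langle x - x^*, F(x)\rangle$ obtained from the Moreau decomposition and the normal-cone inequality at $x^* \in \Xcal$, pseudomonotonicity propagating the VI inequality from $x^*$ to $x$, and Gr\"onwall for the exponential rate. The care you flag about almost-everywhere differentiation along the absolutely continuous solution and about forward invariance of sublevel sets in the local cases is exactly the right bookkeeping.
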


\citet{zhang1996stability} discuss stability analysis of an adjustment process for oligopolistic market equilibrium modeled as a projected dynamical system. 

In contrast to this oligopoly game, the equilibria of a game can be at the boundary of a constraint or at a vertex. For example, the feasible strategy profiles of the mixed extension of a finite, normal-form game need to satisfy a probability simplex. Pure-strategy Nash equilibria are the vertices of the simplex. We neither know if the game is monotone nor can we assume that it satisfies the regularity conditions in \citet{nagurney2012projected}. The analysis of continuous-time dynamics needs to consider the discontinuity of gradient dynamics at the boundaries, a deviation from standard Lyapunov theory. Non-smooth Lyapunov theory provides a way to deal with such discontinuities at the boundaries \citep{cortes_discontinuous_2008, hauswirth_projected_2021}. In order to deal with the discontinuities, the related literature often draws on differential inclusions. A \textit{differential inclusion} is a generalization of a differential equation where the derivative of the state is not given by a single vector but rather is allowed to belong to a set, e.g., a set of subdifferentials. 
Many classical results from ordinary differential equations extend to differential inclusions. For example, Lyapunov stability theory, LaSalle's invariance principle, and various existence and uniqueness results have analogs in the differential inclusion setting \citep{brogliato2006equivalence, cortes_discontinuous_2008, filippov_differential_1988, bacciotti_stability_1999}.   

\citet{mertikopoulos2016learning} analyzed projected dynamics, more specifically mirror ascent dynamics, which are related to but different from the projected gradient ascent algorithm that we focus on in this paper. 
They take a different approach by letting players update their strategies according to cumulative payoff scores that are regularized by a penalty function. 
Nonsteep penalty functions (e.g., quadratic penalties) and projections ensure that the strategies remain viable but again introduce discontinuities (or non-smooth behavior). The authors deal with these instances by considering piecewise smooth (or Carathéodory) solutions. This approach guarantees that the dynamics are well-defined almost everywhere, ensuring convergence and stability even when trajectories can hit the boundary of the probability simplex. 
In mixed extension games of finite, normal-form games, \citet{mertikopoulos2016learning} state that the Nash equilibria coincide with the equilibrium points of the dynamics. Moreover, they show that all Lyapunov-stable states are Nash equilibria, and that strict Nash equilibria are asymptotically stable. 
Similarly, \citet{mertikopoulosLearningGamesContinuous2019} analyze general continuous games on convex action spaces by relating the convergence behavior of the dual projected dynamical system with the concept of variational stability. If a game is strictly monotone, then its unique Nash equilibrium is globally variationally stable.
By basing part of their analysis on the "Fenchel coupling" rather than solely on the local behavior of the gradient, the authors effectively bypass the difficulties encountered at the boundary. This duality-based method allows them to extend stability and convergence results to settings where the projection dynamics are at play. The Fenchel coupling essentially relates the primal (strategy) space and the dual (score or payoff) space so that one can track the evolution of the system even in the presence of nonsmooth behavior. \citet{mertikopoulos2018riemannian} generalizes these ideas, situating them within a broader geometric framework that unifies and extends many classical and contemporary models of evolutionary game dynamics.

\subsection{Games with Discrete-Time Dynamics}

In the previous section, we used the continuous-time dynamics as an approximation of the dynamics arising from projected gradient ascent. Most algorithms are defined in discrete time, however, which is why we will focus on analyzing discrete-time dynamics in the remainder of the paper.
The dynamics of independent and projected gradient ascent algorithms will be described as
\begin{equation}
\label{eq:projected-dynamical-system-discrete-time}
    x_{t+1} = f(x_{t}) := \proj_{\Xcal}(x_t + \eta F(x_t)),
\end{equation}
where $\proj_{\Xcal}(x)$ is the projection of a point $x \in \R^n$ to $\Xcal$ and $F$ is again the game gradient. Assuming that the game gradient is continuous and $\Xcal$ is convex, which is the case for all discretized mixed-extension games, the dynamics function $f$ is continuous, too, since the projection is Lipschitz-continuous. We thus will assume continuity of $f$ in the following.

Below, we provide an overview of theorems that will enable us to analyze stability and the basin of attraction.
The results are based on the book by \citet{lasalle_stability_1986}, and we provide proofs for our versions in Appendix \ref{sec:proofs-stability-discrete-subspace-dynamics}. We note that all of these theorems are applicable to closed subsets of $\R^n$, including the simplex, by interpreting open sets, neighborhoods, and balls in the context of subspace topologies. We refer interested readers to Appendix \ref{sec:proofs-stability-discrete-subspace-dynamics}.

We begin with defining a Lyapunov function in this setting, which is given as the following.

\begin{definition}[Lyapunov function for discrete-time systems]
\label{def:discrete-time-Lyapunov-function}
    Let $\Ucal$ be any set in $\Xcal$. A function $V$ is a Lyapunov function for the discrete-time dynamical system (\ref{eq:projected-dynamical-system-discrete-time}) on $\Ucal$ if
    \begin{enumerate}[label=\roman*)]
        \item $V$ is continuous on $\Xcal$,
        \item $\Delta V(x) := V(f(x)) - V(x) \leq 0$ for all $x \in \Ucal$,
        \item $V(x^*) = 0$, and
        \item $V(x) > 0$ on $x \in B_\delta(x^*) \setminus \{x^*\}$ for some $\delta > 0$.
    \end{enumerate}
\end{definition}

Based on a Lyapunov function, one can verify stability or asymptotic stability under additional assumptions.

\begin{theorem}[Lyapunov stability for discrete-time systems]
    \label{thm:Lyapunov-stability-discrete-time-on-domain-D}
    Consider the discrete-time dynamical system (\ref{eq:projected-dynamical-system-discrete-time}) with continuous function $f$.
    If $V$ is a Lyapunov function for some neighborhood of $x^*$, then $x^*$ is a stable equilibrium.
\end{theorem}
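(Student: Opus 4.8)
The plan is to mimic the classical continuous-time Lyapunov argument, using the sublevel sets of $V$ as trapping regions. Fix an arbitrary neighborhood $\Ucal$ of $x^*$; we must produce a neighborhood $\Vcal$ with $f^k(x) \in \Ucal$ for all $k \geq 0$ whenever $x \in \Vcal$. First I would shrink $\Ucal$ if necessary so that it is contained in the neighborhood on which $V$ is a Lyapunov function and inside the ball $B_\delta(x^*)$ from condition (iv), and also pick $r>0$ with $\overline{B_r(x^*)} \subseteq \Ucal$ (working in the subspace topology of $\Xcal$, so "ball" means $B_r(x^*) \cap \Xcal$, which is fine since all the hypotheses are stated relative to $\Xcal$). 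On the compact sphere-like shell $S = \{x \in \Xcal : \|x - x^*\| = r\}$ (or, if $\Xcal$ has no such points, on $\partial B_r(x^*)$ relative to $\Xcal$), the continuous function $V$ attains a positive minimum $m := \min_{x \in S} V(x) > 0$, using positive-definiteness of $V$ near $x^*$.

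Next I would define the candidate neighborhood as a sublevel set: let $c$ be any value with $0 < c < m$, and set $\Vcal := \{x \in B_r(x^*) : V(x) < c\}$. By continuity of $V$ and $V(x^*)=0$, this is an open neighborhood of $x^*$. The key invariance claim is that $\Vcal$ is forward-invariant under $f$: if $x \in \Vcal$ then $V(f(x)) \leq V(x) < c$ by the descent condition (ii), so it only remains to check that $f(x)$ has not "jumped past" the shell $S$ to land outside $B_r(x^*)$ with small $V$-value — but any path from inside $B_r(x^*)$ to outside would have to cross $S$, where $V \geq m > c$; more carefully, since $f(x)$ satisfies $V(f(x)) < c < m$, and every point of $\Xcal$ at distance exactly $r$ has $V \geq m$, we get $\|f(x) - x^*\| \neq r$, and since $f(x)$ cannot be the image of a continuous iterate path (we argue pointwise, not along a continuum), I instead argue: the set $\{x \in \overline{B_r(x^*)} : V(x) < c\}$ equals $\{x \in B_r(x^*) : V(x) < c\}$ because on the boundary shell $V \geq m > c$; this set is open and forward-invariant, hence by induction $f^k(x) \in \Vcal \subseteq B_r(x^*) \subseteq \Ucal$ for all $k$. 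That establishes stability.

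The main obstacle — and the place where care with the subspace topology matters — is the step just flagged: in discrete time there is no continuity-of-trajectory argument to prevent a single step of $f$ from mapping a point inside $B_r(x^*)$ directly to a point outside it. The resolution is precisely to choose the sublevel threshold $c$ strictly below the minimum $m$ of $V$ on the relative boundary shell, so that the open sublevel set $\{V < c\}$ cannot "leak" across that shell: it is simultaneously contained in $B_r(x^*)$ and closed under $f$. This requires $S$ to be nonempty and compact; compactness follows since $S$ is a closed and bounded subset of $\Xcal \subseteq \R^n$, and if $r$ is chosen small enough that $S \cap \Xcal \neq \emptyset$ — which holds for all sufficiently small $r$ as long as $x^*$ is not isolated in $\Xcal$, and is vacuous (stability is immediate) if it is — the argument goes through. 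One should also note that no uniqueness or smoothness of $f$ is needed beyond continuity, which is already assumed; continuity of $V$ (condition (i)) and the pointwise descent inequality (condition (ii)) do all the work, with (iii)–(iv) supplying the positive-definiteness needed to make $m>0$.
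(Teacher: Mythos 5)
There is a genuine gap, and it sits exactly at the step you flagged. You correctly observe that in discrete time a single application of $f$ can map a point of $B_r(x^*)$ directly past the shell $S$, but your proposed resolution does not close this. Choosing $c<m=\min_{S}V$ only shows that the sublevel set \emph{restricted to} $\overline{B_r(x^*)}$ avoids the shell, i.e.\ $\{x\in\overline{B_r(x^*)}:V(x)<c\}=\{x\in B_r(x^*):V(x)<c\}$. It does not show that $f$ maps this set into itself: for $x\in\Vcal$ you only know $V(f(x))<c$, and nothing in the hypotheses forbids $f(x)$ from lying entirely outside $\overline{B_r(x^*)}$ at some distant point where $V$ happens to be below $c$. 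Positive definiteness of $V$ is only assumed on $B_\delta(x^*)$ (condition (iv)), so far from $x^*$ the global sublevel set $\{V<c\}$ may well have other components, and your argument gives no reason why $f(x)$ cannot land in one of them. Asserting that the set is ``forward-invariant'' is therefore circular — it is precisely the claim that needs proof.

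The paper closes this gap with a connectedness argument that you are missing: it takes $\Ucal_0$ to be the \emph{connected component} of the sublevel set $\{V<m/2\}$ containing $x^*$, notes $\Ucal_0\subset B_\eta(x^*)$ (a connected set containing $x^*$ that avoided the shell cannot reach outside the ball), and then uses continuity of $f$: the image $f(\Ucal_0)$ is connected, contained in $\{V<m/2\}$ by the descent condition, and contains $f(x^*)=x^*$, hence $f(\Ucal_0)\subseteq\Ucal_0$. This is where continuity of $f$ actually does work in the proof — contrary to your closing remark that conditions (i)--(ii) of the Lyapunov function ``do all the work.'' An alternative repair of your route would be a two-radius argument: use continuity of $f$ at $x^*$ to pick $r_1\le r$ with $f(B_{r_1}(x^*))\subseteq B_r(x^*)$, then take $c$ below the minimum of $V$ over the annulus $\{r_1\le\|x-x^*\|\le r\}$, so that $\Vcal=B_{r_1}(x^*)\cap\{V<c\}$ is genuinely trapped. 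Either device is needed; as written, your invariance claim fails.
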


\begin{theorem}[Asymptotic stability for discrete-time systems]
    \label{thm:asymptotic-stability-on-domain-D}
    If $V$ is a Lyapunov function such that $  \Delta V (x) < 0 $ for all $x \in \Ucal \setminus \{x^*\}$,
    then $x^*$ is asymptotically stable.
\end{theorem}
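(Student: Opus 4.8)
The plan is to first invoke the preceding theorem to obtain Lyapunov stability essentially for free, and then to upgrade it to attractivity by a discrete-time LaSalle-type argument on a compact neighbourhood. Since $V$ is in particular a Lyapunov function with $\Delta V \le 0$, Theorem~\ref{thm:Lyapunov-stability-discrete-time-on-domain-D} already yields that $x^*$ is stable; it remains to produce a $\delta > 0$ such that $\|x_0 - x^*\| < \delta$ implies $x_t \to x^*$. To set this up I would fix a radius $r > 0$ small enough that the closed ball $\bar B_r(x^*)$ is contained in $\Ucal$ and inside the ball on which $V$ is positive definite, so that on $\bar B_r(x^*) \setminus \{x^*\}$ we have both $V > 0$ and $\Delta V < 0$. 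Applying stability with $\varepsilon = r$, choose $\delta$ so that every trajectory starting in $B_\delta(x^*)$ stays in $B_r(x^*) \subseteq \bar B_r(x^*)$ for all $t \ge 0$; in particular the trajectory lives in a fixed compact set.

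Next I would analyse the scalar sequence $V(x_t)$. Since $\Delta V(x_t) \le 0$ it is non-increasing, and since $V \ge 0$ on $\bar B_r(x^*)$ it converges to some limit $c \ge 0$. The core of the argument is to show $c = 0$. Suppose $c > 0$. By continuity of $V$ and $V(x^*) = 0$ there is $\rho \in (0,r)$ with $V < c$ on $B_\rho(x^*)$; monotonicity of $V(x_t)$ then forces $x_t \notin B_\rho(x^*)$ for all $t$, so the whole trajectory stays in the compact annulus $A := \bar B_r(x^*) \setminus B_\rho(x^*)$. Because $f$ and $V$ are continuous, $\Delta V = V \circ f - V$ is continuous and hence attains a maximum $-\mu < 0$ on $A$ (strict negativity holds since $A \subseteq \Ucal \setminus \{x^*\}$). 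Then $V(x_t) \le V(x_0) - \mu t \to -\infty$, contradicting $V \ge 0$. Therefore $c = 0$, i.e.\ $V(x_t) \to 0$.

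Finally I would deduce $x_t \to x^*$ from $V(x_t) \to 0$. If not, some subsequence satisfies $\|x_{t_k} - x^*\| \ge \varepsilon_0 > 0$; by compactness of $\bar B_r(x^*) \setminus B_{\varepsilon_0}(x^*)$ extract a further subsequence converging to some $y$ with $y \ne x^*$ and $y \in \bar B_r(x^*) \setminus \{x^*\}$, whence $V(y) > 0$. But continuity of $V$ gives $V(y) = \lim V(x_{t_{k_j}}) = 0$, a contradiction. Hence $x_t \to x^*$, and together with the already-established stability this is precisely asymptotic stability. An equivalent route is to observe that the $\omega$-limit set of the bounded trajectory is nonempty, compact, and invariant under $f$; constancy of $V$ on it combined with $\Delta V < 0$ off $x^*$ forces it to equal $\{x^*\}$, which is the discrete-time analogue of LaSalle's invariance principle.

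I do not anticipate a serious conceptual obstacle; this is the standard discrete-time Lyapunov/LaSalle argument. The points that require care are: (i) choosing the compact neighbourhood so that positive definiteness of $V$, the sign condition $\Delta V < 0$, and containment in $\Ucal$ all hold simultaneously — the Lyapunov function definition only guarantees positivity on a possibly small ball $B_\delta(x^*)$, so $r$ must be taken below that $\delta$; (ii) justifying continuity of $\Delta V$, which is exactly where continuity of $f$ (equivalently, Lipschitz continuity of the projection together with continuity of $F$) enters; and (iii) reading all balls, neighbourhoods, and compactness claims in the subspace topology when $\Xcal$ is, e.g., the probability simplex, as flagged earlier in the paper. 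These are bookkeeping matters rather than genuine difficulties.
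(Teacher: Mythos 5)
Your proposal is correct and follows essentially the same route as the paper: Lyapunov stability is obtained from Theorem~\ref{thm:Lyapunov-stability-discrete-time-on-domain-D}, and attractivity then follows from a LaSalle-type invariance argument on a small positively invariant compact neighbourhood where $V>0$ and $\Delta V<0$ away from $x^*$. The only difference is presentational: the paper cites its discrete-time invariance principle (Theorem~\ref{thm:LaSalle-invariance-principle-discrete-on-domain-D}) directly, whereas you unpack that step by hand via the monotone sequence $V(x_t)$ and the uniform decrease $\Delta V\leq-\mu<0$ on a compact annulus, which is a valid self-contained rendering of the same argument.
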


Note that, if $\Ucal = \Xcal$ in the above theorem, we get global asymptotic stability of $x^*$. If asymptotic stability cannot be established globally, the following corollary can be used to approximate the basin of attraction instead. Both results can be derived from LaSalle's invariance principle, which we provide as Theorem \ref{thm:LaSalle-invariance-principle-discrete-on-domain-D} in Appendix \ref{sec:proofs-stability-discrete-subspace-dynamics}.

\begin{corollary}[Basin of attraction for discrete-time systems]
\label{cor:basin-of-attraction-discrete}
    For an equilibrium $x^*$ that is asymptotically stable according to the Lyapunov function $V$, let 
    \begin{equation*}
        \Vcal = \{x^*\} \cup \{x : V(x) > 0, \Delta V(x) < 0\} \quad \text{and} \quad \Ucal_{c} = \{x: V(x) \leq c\}.
    \end{equation*}
    If $\Ucal_{c} \subseteq \Vcal$ and $\Ucal_{c}$ is bounded, $\Ucal_{c}$ is a subset of the basin of attraction.
\end{corollary}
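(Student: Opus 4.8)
The plan is to derive the corollary directly from the discrete-time LaSalle invariance principle (Theorem~\ref{thm:LaSalle-invariance-principle-discrete-on-domain-D}), taking the sublevel set $\Ical := \Ucal_c$ as the candidate compact invariant set. Three things must be checked: that $\Ucal_c$ is compact and forward invariant under $f$, that $\Delta V \le 0$ throughout $\Ucal_c$, and that the largest invariant set contained in $\{x \in \Ucal_c : \Delta V(x) = 0\}$ is exactly $\{x^*\}$.

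Compactness is immediate: $V$ is continuous by Definition~\ref{def:discrete-time-Lyapunov-function}, so $\Ucal_c = V^{-1}((-\infty, c])$ is closed, and it is bounded by hypothesis. For forward invariance and dissipation, I would take an arbitrary $x \in \Ucal_c$ and use the inclusion $\Ucal_c \subseteq \Vcal$ to split into two cases. If $x = x^*$, then $f(x) = x^* \in \Ucal_c$ since $x^*$ is an equilibrium. Otherwise $x$ satisfies $V(x) > 0$ and $\Delta V(x) < 0$, so $V(f(x)) = V(x) + \Delta V(x) < V(x) \le c$ and again $f(x) \in \Ucal_c$; in particular $\Delta V \le 0$ on all of $\Ucal_c$. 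Next, setting $\Ecal := \{x \in \Ucal_c : \Delta V(x) = 0\}$, the same case analysis shows $\Delta V(x) < 0$ for every $x \in \Ucal_c \setminus \{x^*\}$, while $\Delta V(x^*) = V(f(x^*)) - V(x^*) = 0$, so $\Ecal = \{x^*\}$ (the case $c < 0$ is vacuous, as then $\Ucal_c = \emptyset$). Since $\{x^*\}$ is invariant, the largest invariant subset $\Mcal$ of $\Ecal$ equals $\{x^*\}$, and LaSalle's principle gives that every trajectory with $x_0 \in \Ucal_c$ approaches $\Mcal = \{x^*\}$, i.e.\ $x_t \to x^*$; hence $\Ucal_c$ lies in the basin of attraction of $x^*$.

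The main obstacle is the forward-invariance step: everything hinges on $\Ucal_c \subseteq \Vcal$, which is precisely what prevents a trajectory from escaping $\Ucal_c$ through a region where $V$ would increase (the failure mode noted in the text when $c$ is chosen too large). A secondary point to be careful about is translating the conclusion ``trajectories approach $\Mcal$'' in LaSalle's theorem into genuine pointwise convergence to $x^*$; this is legitimate here only because $\Mcal$ collapses to the singleton $\{x^*\}$, which again relies on $\Ucal_c \subseteq \Vcal$ to rule out non-equilibrium zeros of $\Delta V$ inside $\Ucal_c$. The remaining details — compactness and the $\Delta V \le 0$ bound — are routine consequences of continuity of $V$ and the definition of $\Vcal$.
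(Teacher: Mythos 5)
Your proof is correct and follows the same route as the paper: both establish that $\Ucal_c$ is a compact, positively invariant set on which $\Delta V \leq 0$, invoke the discrete-time LaSalle invariance principle, and identify $\Ecal = \{x^*\}$ to conclude convergence. The paper's version is just a two-line sketch of this argument; your write-up supplies the forward-invariance and $c<0$ details it leaves implicit.
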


In the following, we will apply the stability concepts above to different types of equilibria.

\subsection{Strict Nash Equilibria, Variational Stability, and Asymptotic Stability}

We start our analysis with strict Nash equilibria because they guarantee asymptotic stability, a result that we detail here.
Similar to the unconstrained continuous-time setting, there is a connection between variationally stable equilibria and the existence of a specific Lyapunov function (cf. Remark~\ref{rem:vs_lyap}). This connection allows us to get asymptotic stability of the equilibrium under the discrete-time dynamics \eqref{eq:projected-dynamical-system-discrete-time}.
But similar to fixed-point iterations of this form for variational inequalities, where we need \textit{strong} monotonicity to show convergence (reference), we need a stronger version of variational stability in our setting.

\begin{proposition} \label{prop:svs_implies_as}
    Let $G=(\Ncal, \Xcal, u)$ be a continuous game with an $L$-Lipschitz game gradient $F$. 
    If an equilibrium is \emph{strongly variationally stable}, i.e.,
    there exists a neighborhood $U$ of $x^*$ such that
    \begin{equation} \label{eq:strong_vs}
        \langle F(x), x - x^* \rangle \leq - \alpha \Vert x - x^* \Vert^2_2, \quad \forall x \in \Ucal
    \end{equation}
    for some $\alpha > 0$, then the equilibrium is asymptotically stable under the discrete-time dynamics \eqref{eq:projected-dynamical-system-discrete-time} with a step size $\eta < \tfrac{2 \alpha }{L^2} $.
\end{proposition}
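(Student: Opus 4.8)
The plan is to certify asymptotic stability with the discrete-time Lyapunov function of Definition~\ref{def:discrete-time-Lyapunov-function} given by $V(x) = \tfrac12\Vert x - x^*\Vert_2^2$ and then invoke Theorem~\ref{thm:asymptotic-stability-on-domain-D}. By Remark~\ref{rem:vs_lyap} this is precisely the function whose strict decrease encodes (strong) variational stability, so the choice is natural; $V$ is continuous, nonnegative, and vanishes only at $x^*$, hence conditions (i), (iii), (iv) of Definition~\ref{def:discrete-time-Lyapunov-function} hold for free and the whole proof reduces to exhibiting a neighborhood of $x^*$ on which $\Delta V(x) = V(f(x)) - V(x) < 0$ for $x \ne x^*$.

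As a preliminary step I would record that $x^*$ is a fixed point of the update map $f(x) = \Pi_{\Xcal}(x + \eta F(x))$: being a Nash equilibrium, $x^*$ solves $VI(\Xcal, F)$, so by the fixed-point characterization of Lemma~\ref{thm:vi_fp} (with the sign convention of the ascent dynamics \eqref{eq:projected-dynamical-system-discrete-time}) we have $x^* = \Pi_{\Xcal}(x^* + \eta F(x^*)) = f(x^*)$ for every $\eta > 0$, so $\Delta V(x^*) = 0$. The core computation is then, using non-expansiveness of $\Pi_{\Xcal}$ and $f(x^*) = x^*$,
\begin{align*}
\Vert f(x) - x^* \Vert^2
  &\le \Vert (x - x^*) + \eta\,(F(x) - F(x^*)) \Vert^2 \\
  &= \Vert x - x^* \Vert^2 + 2\eta\,\langle F(x) - F(x^*),\, x - x^* \rangle + \eta^2\,\Vert F(x) - F(x^*) \Vert^2 .
\end{align*}
The third term is at most $\eta^2 L^2 \Vert x - x^* \Vert^2$ by $L$-Lipschitzness of $F$. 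For the cross term I would use strong variational stability \eqref{eq:strong_vs} together with the equilibrium inequality $\langle F(x^*), x - x^* \rangle \le 0$; when $x^*$ is interior, $F(x^*) = 0$ and this gives $\langle F(x) - F(x^*), x - x^* \rangle = \langle F(x), x - x^* \rangle \le -\alpha \Vert x - x^* \Vert^2$, so that
\[
\Vert f(x) - x^* \Vert^2 \le \bigl(1 - 2\eta\alpha + \eta^2 L^2\bigr)\,\Vert x - x^* \Vert^2 .
\]
Since $1 - 2\eta\alpha + \eta^2 L^2 < 1$ exactly on $0 < \eta < 2\alpha/L^2$, the hypothesis on the step size makes $\kappa := 1 - 2\eta\alpha + \eta^2 L^2 < 1$, hence $\Delta V(x) \le \tfrac12(\kappa - 1)\Vert x - x^* \Vert^2 < 0$ for all $x \ne x^*$ in the neighborhood $U$ where \eqref{eq:strong_vs} holds. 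Taking $\Ucal$ to be a small closed ball around $x^*$ inside $U$ (forward-invariant because $V$ strictly decreases) and applying Theorem~\ref{thm:asymptotic-stability-on-domain-D} yields asymptotic stability; if \eqref{eq:strong_vs} holds globally, the same bound gives global asymptotic stability.

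The step I expect to be the main obstacle is the cross-term bound when $x^*$ lies on the boundary of $\Xcal$ (the relevant case for mixed extensions, where equilibria sit on faces or vertices of the simplex). There $F(x^*)$ need not vanish — it lies in the normal cone $N_{\Xcal}(x^*)$ — and the plain non-expansiveness estimate is no longer strict, since $\langle F(x) - F(x^*), x - x^* \rangle = \langle F(x), x - x^* \rangle - \langle F(x^*), x - x^* \rangle$ picks up the term $-\langle F(x^*), x - x^* \rangle \ge 0$, which is only $O(\Vert x - x^* \Vert)$ and would swamp the $O(\Vert x - x^* \Vert^2)$ gain from \eqref{eq:strong_vs}. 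The fix is to replace non-expansiveness with the firmly-non-expansive (three-point) inequality $\Vert \Pi_{\Xcal}(y) - x^* \Vert^2 \le \Vert y - x^* \Vert^2 - \Vert \Pi_{\Xcal}(y) - y \Vert^2$ with $y = x + \eta F(x)$, and to exploit that $F(x^*) \in N_{\Xcal}(x^*)$ means the projection clips precisely the offending normal component of the gradient step near $x^*$, so that $-\Vert \Pi_{\Xcal}(y) - y \Vert^2$ cancels $\eta\langle F(x^*), x - x^* \rangle$ up to higher order; carefully tracking this cancellation (and that a small ball around $x^*$ remains inside $U$) is where the real work lies.
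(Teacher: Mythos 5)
Your core computation is exactly the paper's: take $V(x)=\Vert x-x^*\Vert_2^2$ (the factor $\tfrac12$ is immaterial), use non-expansiveness of $\Pi_{\Xcal}$ together with the fact that $x^*$ is a fixed point of $f$, expand the square, bound the quadratic term by Lipschitzness and the cross term by \eqref{eq:strong_vs}, obtain $\Delta V(x)\le \eta(\eta L^2-2\alpha)\Vert x-x^*\Vert_2^2<0$ for $\eta<2\alpha/L^2$, and invoke Theorem~\ref{thm:asymptotic-stability-on-domain-D}. For interior equilibria your argument and the paper's coincide step for step.

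The obstacle you flag at boundary equilibria is genuine, and the paper's own proof does not resolve it: after applying non-expansiveness, the expansion of $\Vert \eta(F(x)-F(x^*))+(x-x^*)\Vert^2-\Vert x-x^*\Vert^2$ yields the cross term $2\eta\langle F(x)-F(x^*),\,x-x^*\rangle$, yet the paper writes $2\eta\langle F(x),\,x-x^*\rangle$. These agree only when $\langle F(x^*),x-x^*\rangle=0$ (e.g., when $F(x^*)=0$ at an interior equilibrium); in general the equilibrium VI gives $\langle F(x^*),x-x^*\rangle\le 0$, so the discarded term $-2\eta\langle F(x^*),x-x^*\rangle\ge 0$ is first order in $\Vert x-x^*\Vert$ and the substitution goes the wrong way for an upper bound --- exactly the difficulty you describe. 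Since the proposition is immediately applied to the Prisoner's Dilemma, whose equilibrium is a vertex with $F(x^*)\ne 0$, this is a real gap in the published argument rather than a hypothetical one. Your proposed repair via the firm non-expansiveness inequality is the right direction, but be aware that the cancellation between $\eta^2\Vert F(x)\Vert^2$ and $\Vert\Pi_{\Xcal}(y)-y\Vert^2$ is exact only at $x=x^*$; both quantities deviate from $\eta^2\Vert F(x^*)\Vert^2$ at first order in $\Vert x-x^*\Vert$, so you must show those first-order deviations cancel as well (or argue directly from the variational characterization $\langle y-\Pi_{\Xcal}(y),\,x^*-\Pi_{\Xcal}(y)\rangle\le 0$) before the $-2\eta\alpha\Vert x-x^*\Vert_2^2$ gain can dominate. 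You have not carried this out, so your proof, like the paper's, is complete only in the interior case.
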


\begin{proof}
    To apply Theorem~\ref{thm:asymptotic-stability-on-domain-D}, we have to find a Lyapunov function according to Definition~\ref{def:discrete-time-Lyapunov-function} with $\Delta V(x) < 0$. 
    Since the equilibrium is variationally stable, a natural candidate for a Lyapunov function is the squared Euclidean distance to the equilibrium $V(x):=\Vert x - x^* \Vert_2^2$ (cf. Remark~\ref{rem:vs_lyap}). 
    As all other properties of the Lyapunov function follow directly, it remains to show that $\Delta V(x)  < 0 $ for all $x \in \Ucal$ where $\Ucal$ is some neighborhood of $x^*$:
    \begin{align*}
        V(f(x)) - V(x) 
        &= \Vert \Pi_{\Xcal}(x + \eta F(x)) - x^* \Vert_2^2 - \Vert x - x^* \Vert_2^2 \\
        \intertext{Using the non-expansiveness of the projection on $\Xcal$ (compact \& convex) and the fact that $x^*$ is a fixed point under $f(\cdot)$ we get}
        &\leq \Vert \eta F(x_t) - \eta  F(x^*)  +  x - x^* \Vert_2^2 - \Vert x - x^* \Vert_2^2 \\
        &= \eta^2 \Vert F(x) - F(x^*) \Vert_2^2 + 2 \eta \langle F(x), x-x^* \rangle \\
        \intertext{With the  Lipschitz continuity of the game gradient and the strong variational stability of the equilibrium we have}
        &\leq \eta ^2 L^2 \Vert x - x^* \Vert^2 - 2 \eta \alpha \Vert x - x^* \Vert_2^2 
        = \eta (\eta L^2 - 2 \alpha) \Vert x - x^* \Vert_2^2\\
        &< 0 \quad \text{ for } \eta < \tfrac{2 \alpha }{L^2}, \quad \forall x \in \Ucal \setminus \{ x^* \}
    \end{align*}
\end{proof}

Let us use this result and analyze a well-known finite game, the Prisoner's Dilemma, to illustrate the case. 
Each player has two strategies: \( S_1 \) and \( S_2 \). The payoff matrices for Player 1 and Player 2 are:
\[
A_1 = \begin{bmatrix} 3 & 0 \\ 5 & 1 \end{bmatrix}, \quad A_2 = \begin{bmatrix} 3 & 5 \\ 0 & 1 \end{bmatrix}.
\]
The components of the game gradient of $F(x_1, x_2) = (F_1(x_1, x_2), F_2(x_1, x_2))$ are:
\[
F_1(x) = A_1 x_2, \quad F_2(x) = A_2^\top x_1.
\]
The unique Nash equilibrium is given by $x^* = (x_1^*, x_2^*)^T$ with $x_1^* = x_2^* = (0, 1)^T $.
It is easy to show that the equilibrium is strongly variationally stable:
\begin{align*}
    \langle F(x), x-x^*\rangle 
    &= \langle A_1 x_2, x_1 - x_1^* \rangle +  \langle A_2^T x_1, x_2 - x_2^* \rangle \\
    &= 6 x^{1}_{1} x^{1}_{2} + 5 x^{1}_{1} x^{2}_{2} - 5 x^{1}_{1} + 5 x^{2}_{1} x^{1}_{2} + 2 x^{2}_{1} x^{2}_{2} - x^{2}_{1} - 5 x^{1}_{2} - x^{2}_{2} \\
    \intertext{Using $x^{2}_{i} = 1 - x^{1}_{i}$ for $i = 1,2$ we can simplify the computations and get}
    &=  - 2 x^{1}_{1} x^{1}_{2} - x^{1}_{1} - x^{1}_{2} \leq - \left[(x^{1}_{1})^2 + (x^{1}_{2})^2 \right] \\
    &= - \tfrac 1 2  \left[ (x^{1}_{1})^2 + (1-x^{1}_{2})^2 + (x^{1}_{2})^2 + (1-x^{2}_{2})^2 \right] = -\tfrac 1 2 \Vert x-x^* \Vert_2^2
\end{align*}
Therefore, the equilibrium in the prisoners' dilemma is strongly variationally stable with $\alpha = \tfrac 1 2$ for all $x \in \Xcal$ and thereby globally asymptotically stable.

In general, not all equilibria of a game must be globally variationally stable, and it can be tedious to verify that a given equilibrium is strongly variationally stable. In these cases, the strictness of the equilibrium is enough to provide us with local properties.
From \citet[Proposition~5.2]{mertikopoulosLearningGamesContinuous2019} we already know that all strict Nash equilibria of a finite game are locally variationally stable and vice versa. Following \citep{cohen2017hedge} this result can be extended and one can show that all strict Nash equilibria are locally strongly variationally stable, which implies asymptotic stability under the discrete-time dynamics with sufficiently small step size:

\begin{proposition} \label{prop:sne_implies_as}
    Given a finite normal-form game. If $x^*$ is a strict equilibrium, then there exists a neighborhood of $x^*$, where the equilibrium is asymptotically stable under projected gradient ascent \eqref{eq:projected-dynamical-system-discrete-time} with sufficiently small step size.
\end{proposition}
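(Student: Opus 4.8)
The plan is to deduce this from Proposition~\ref{prop:svs_implies_as}: it suffices to show that every strict Nash equilibrium of a finite normal-form game (taken in mixed extension) is \emph{locally strongly variationally stable}, and that the game gradient is globally Lipschitz. The latter is immediate — in the mixed extension the expected payoffs are multilinear in $x$, so $F$ is a polynomial map and hence $L$-Lipschitz on the compact set $\Xcal = \prod_i \Delta(\Acal_i)$ for some finite $L$. Once local strong variational stability holds with constant $\alpha > 0$ on a (relative) neighborhood $U$ of $x^*$, Proposition~\ref{prop:svs_implies_as} yields asymptotic stability under \eqref{eq:projected-dynamical-system-discrete-time} for every step size $\eta < 2\alpha/L^2$.

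The core step is establishing local strong variational stability. Write $x^* = (e_{a_1^*}, \dots, e_{a_n^*})$ for the vertex of $\Xcal$ corresponding to the pure profile $a^*$, and recall that the $a_i$-component of $F_i$ is $F_{i,a_i}(x) = u_i(a_i, x_{-i})$, the expected payoff of the pure action $a_i$ against $x_{-i}$. Since each $x_i^*$ is a unit vector, I would expand
\[
  \langle F(x), x - x^* \rangle = \sum_{i=1}^n \sum_{a_i \neq a_i^*} \bigl( u_i(a_i, x_{-i}) - u_i(a_i^*, x_{-i}) \bigr)\, x_{i,a_i}.
\]
Strictness of $x^*$ means $u_i(a_i^*, a_{-i}^*) - u_i(a_i, a_{-i}^*) \geq \delta$ for all $i$ and all $a_i \neq a_i^*$, where $\delta > 0$ is the minimal payoff gap. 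By continuity of the multilinear maps $x_{-i} \mapsto u_i(a_i, x_{-i})$, there is a neighborhood $U$ of $x^*$ on which $u_i(a_i, x_{-i}) - u_i(a_i^*, x_{-i}) \leq -\delta/2$ for every $i$ and $a_i \neq a_i^*$. Hence on $U$,
\[
  \langle F(x), x - x^* \rangle \leq -\tfrac{\delta}{2} \sum_{i=1}^n \sum_{a_i \neq a_i^*} x_{i,a_i} = -\tfrac{\delta}{2} \sum_{i=1}^n (1 - x_{i,a_i^*}).
\]

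To close the argument I need to bound $\sum_i (1 - x_{i,a_i^*})$ from below by a multiple of $\Vert x - x^* \Vert_2^2$. The elementary estimate $\Vert x_i - x_i^* \Vert_2^2 = (1 - x_{i,a_i^*})^2 + \sum_{a_i \neq a_i^*} x_{i,a_i}^2 \leq 2(1 - x_{i,a_i^*})^2 \leq 2(1 - x_{i,a_i^*})$ (using $0 \leq 1 - x_{i,a_i^*} \leq 1$ on the simplex), summed over $i$, gives $\sum_i (1 - x_{i,a_i^*}) \geq \tfrac{1}{2}\Vert x - x^* \Vert_2^2$, so $\langle F(x), x - x^*\rangle \leq -\tfrac{\delta}{4}\Vert x - x^* \Vert_2^2$ on $U$; this is exactly \eqref{eq:strong_vs} with $\alpha = \delta/4$. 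Applying Proposition~\ref{prop:svs_implies_as} then finishes the proof, with any $\eta < \delta/(2L^2)$ admissible.

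I do not expect a genuine obstacle here. The only points requiring care are (i) reading "neighborhood of $x^*$" and the stability statements in the subspace topology of $\Xcal$ — precisely the caveat already flagged before Definition~\ref{def:discrete-time-Lyapunov-function} — and (ii) ensuring the neighborhood $U$ obtained from the continuity argument is the one on which Proposition~\ref{prop:svs_implies_as} is invoked, which is automatic. As an alternative to the explicit computation, one could instead cite \citet{cohen2017hedge} for the strong-variational-stability strengthening of \citet[Proposition~5.2]{mertikopoulosLearningGamesContinuous2019}, but writing it out as above keeps the argument self-contained.
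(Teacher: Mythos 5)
Your proof is correct and follows the same skeleton as the paper's: reduce to Proposition~\ref{prop:svs_implies_as} by showing that a strict equilibrium is locally \emph{strongly} variationally stable, with the Lipschitz continuity of $F$ being automatic for multilinear payoffs on the compact simplex product. The only difference is in how the key intermediate fact is obtained. The paper simply cites Proposition~1 of \citet{cohen2017hedge}, which gives $\langle F(x), x - x^*\rangle \leq -\alpha \Vert x - x^*\Vert_1$ on a neighborhood of $x^*$, and then converts $\Vert\cdot\Vert_1$ into $\Vert\cdot\Vert_2^2$ using the fact that all components of $x - x^*$ lie in $[-1,1]$. You instead reprove that lemma from scratch: the expansion $\langle F(x), x - x^*\rangle = \sum_i \sum_{a_i \neq a_i^*} \bigl(u_i(a_i,x_{-i}) - u_i(a_i^*,x_{-i})\bigr)x_{i,a_i}$ is correct (using linearity in $x_i$ and $\sum_{a_i} x_{i,a_i}=1$), the continuity argument giving the uniform gap $-\delta/2$ on a neighborhood is sound, and your bound $\sum_i(1 - x_{i,a_i^*}) \geq \tfrac12\Vert x - x^*\Vert_2^2$ is exactly the same $\ell_1$-to-$\ell_2^2$ conversion the paper performs (note $\Vert x_i - x_i^*\Vert_1 = 2(1-x_{i,a_i^*})$, so your intermediate estimate \emph{is} the Cohen et al.\ inequality). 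What your version buys is self-containedness and an explicit constant $\alpha = \delta/4$ in terms of the minimal payoff gap $\delta$, hence an explicit admissible step-size range $\eta < \delta/(2L^2)$; what the paper's version buys is brevity. No gap either way.
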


\begin{proof}
    Proposition~1 in \citet{cohen2017hedge} states that for every strict equilibrium $x^*$ in a finite normal-form game, there exists a neighborhood $\Ucal$ of $x^*$ such that $\langle F(x), x-x^* \rangle \leq - \alpha \Vert x - x^* \Vert_1 $ for some $\alpha > 0$. 
    Since all components of $x-x^*$ are in $[-1,1]$ we also have $ \Vert x - x^* \Vert \geq \Vert x - x^* \Vert_2^2 $ which makes the equilibrium (locally) strongly variationally stable (cf. Equation~\eqref{eq:strong_vs}) and allows us to apply Proposition~\ref{prop:svs_implies_as} giving us the result.
    Note that the game gradient in mixed extensions of finite normal-form games is always Lipschitz continuous.
\end{proof}

Consider the "Battle of Sexes" game, which is a two-player game with the actions "opera" and "football", and which has the following payoff matrices:

\begin{equation*}
    \begin{aligned}
        A_1 = \begin{bmatrix}
            3 & 0 \\ 0 & 2
        \end{bmatrix},
        &&
        A_2 = \begin{bmatrix}
            2 & 0 \\ 0 & 3
        \end{bmatrix}
    \end{aligned}
\end{equation*}

This game has three equilibria, two of them in pure strategies: 
\begin{equation*}
    \left(\begin{bmatrix} 0 \\ 1 \end{bmatrix}, \begin{bmatrix} 0 \\ 1 \end{bmatrix}\right), \quad
    \left(\begin{bmatrix} 1 \\ 0 \end{bmatrix}, \begin{bmatrix} 1 \\ 0 \end{bmatrix}\right), \quad 
    \left(\begin{bmatrix} \nicefrac{2}{5} \\ \nicefrac{3}{5} \end{bmatrix}, \begin{bmatrix} \nicefrac{3}{5} \\ \nicefrac{2}{5} \end{bmatrix}\right)
\end{equation*}

Both pure-strategy equilibria are strict, i.e., unilateral deviations strictly decrease the utility, and, by Proposition \ref{prop:sne_implies_as}, we get that they must be (locally) asymptotically stable under the dynamics (\ref{eq:projected-dynamical-system-discrete-time}) for a sufficiently small $\eta$. This statement also extends to other gradient-based learning schemes, such as dual averaging \citep{mertikopoulosLearningGamesContinuous2019}. 

\subsection{Basin of Attraction}

Both pure equilibria are asymptotically stable in the "Battle of Sexes" game, so neither can be globally asymptotically stable. However, the local nature of this statement does not tell us anything about the regions from which the dynamics will converge to either of these equilibria. In this simple game, we can visually determine the basin of attraction, which is shown in Figure \ref{fig:battle-of-sexes_basin-of-attraction}\footnote{The figures actually display only two dimensions of the four-dimensional state space $\Xcal$. However, since the simplex in $\R^2$ is a one-dimensional manifold, this is enough to fully represent the dynamics here.}. We have already noted the relation between variational stability and asymptotic stability for continuous-time dynamics in Remark \ref{rem:vs_lyap}, which is why we display the region where variational stability holds in Figure \ref{fig:battle-of-sexes_variational-stability}. Below, we will make efforts to estimate parts of the basin of attraction by means of Corollary \ref{cor:basin_attraction} for the discrete-time dynamical system (\ref{eq:projected-dynamical-system-discrete-time}) with a learning rate $\eta = 0.05$. This procedure can also be used in games where a visual determination of the basin of attraction is infeasible.

\begin{figure}[h]
    \centering
    \begin{subfigure}{0.45\textwidth}
        \includegraphics[width=\textwidth]{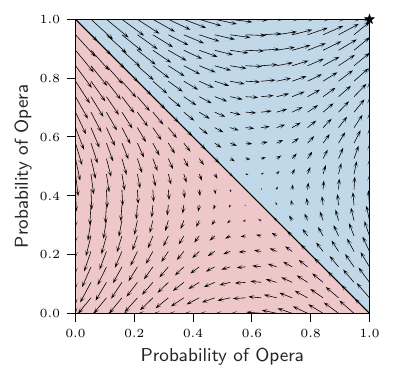}
        \subcaption{Basin of attraction with discrete dynamics.}
        \label{fig:battle-of-sexes_basin-of-attraction}
    \end{subfigure}
    \hfill
    \begin{subfigure}{0.45\textwidth}
        \includegraphics[width=\textwidth]{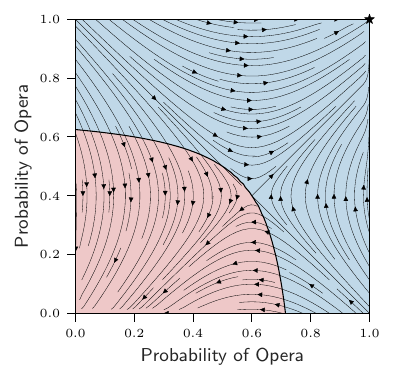}
        \subcaption{Variational stability with continuous dynamics.}
        \label{fig:battle-of-sexes_variational-stability}
    \end{subfigure}
    \caption{Basin of attraction and variational stability in the "Battle of Sexes" game in two dimensions.
    The figures display the basin of attraction and the region where variational stability holds for the equilibrium $x_1 = (1, 0)$, $x_2 = (1, 0)$ in blue. The left picture uses discrete-time dynamics, while the right picture displays the gradient field of the continuous-time dynamics. For the discrete-time dynamics, the arrows indicate real movements in the state space, $x_t \to x_{t+1}$, while the continuous-time dynamics are represented by their time-derivatives $\dot{x}$.
    }
    \label{fig:battle-of-sexes_boa-and-vs}
\end{figure}

Following Propositions \ref{prop:svs_implies_as} and \ref{prop:sne_implies_as}, we know that $V(x) = \norm{x - x^*}_2^2$ is a Lyapunov function that allows us to establish (local) asymptotic stability. The condition $\Delta V < 0$ is implied by strong variational stability, $\langle F(x) , x - x^* \rangle \leq - \alpha \norm{x - x^*}_2^2$, on a (potentially small) neighborhood of the equilibrium. Following Corollary \ref{cor:basin_attraction}, we determine the set
\begin{equation*}
    \Vcal = \{x^*\} \cup \{x : V(x) > 0, ~ \Delta V(x) = V(\proj_{\Xcal} (x + \eta F(x))) - V(x) < 0 \}.
\end{equation*}
Figure \ref{fig:battle-of-sexes_region-V} shows a numerically obtained visualization of $\Vcal$. Observe that this region, just like the variationally stable set, is not a subset of the basin of attraction! Due to the projections in the definition of $\Vcal$, an analytical analysis is difficult. We can instead use the non-expansiveness of the projection operator to determine an upper bound like we did in the proof of Proposition \ref{prop:svs_implies_as}:
\begin{align*}
    \overline{\Delta V}(x) &:= \eta^2 \lVert F(x) - F(x^*) \rVert_2^2 - 2 \eta \langle F(x) , x - x^* \rangle \geq \Delta V(x)  \\
    \bar{\Vcal} &:= \{x^*\} \cup \{x : V(x) > 0, ~ \tilde{\Delta V}(x) < 0 \} \subseteq \Vcal
\end{align*}
Now, $\bar{\Vcal}$ is a subset of $\Vcal$, but the projection is no longer needed to compute the set. For two-player games, $\overline{\Delta V}$ is simply a quadratic function, which makes the analysis easier. We visualize $\Vcal$ in Figure \ref{fig:battle-of-sexes_region-V} and $\bar{\Vcal}$ in Figure \ref{fig:battle-of-sexes_region-V-bar}. Figure \ref{fig:battle-of-sexes_region-V-bar} also shows the differences between $\bar{\Vcal}$, $\Vcal$, and the variationally stable set. For $\eta \to 0$, $\Vcal$ will approach the variationally stable set and $\bar{\Vcal}$ will approach $\Vcal$. We choose a constant $c = 0.8$ for which $\Ucal_c = \{x : V(x) \leq c\} \subset \R^4$ is a subset of $\bar{\Vcal}$ (and thus of $\Vcal$). Due to Corollary \ref{cor:basin_attraction}, $\Ucal_c$ is a subset of the basin of attraction, which we can also verify visually in this case.

\begin{figure}[h]
    \centering
    \begin{subfigure}{0.45\textwidth}
        \includegraphics[width=\textwidth]{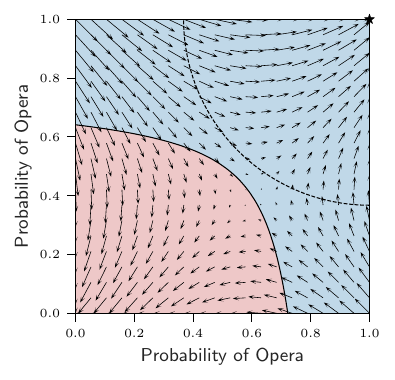}
        \subcaption{Sets $\Vcal$ and $\Ucal_c$.}
        \label{fig:battle-of-sexes_region-V}
    \end{subfigure}
    \hfill
    \begin{subfigure}{0.45\textwidth}
        \includegraphics[width=\textwidth]{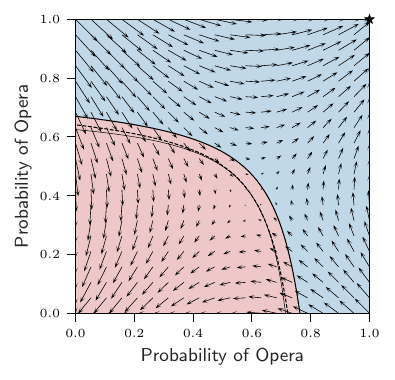}
        \subcaption{Set $\bar{\Vcal}$.}
        \label{fig:battle-of-sexes_region-V-bar}
    \end{subfigure}
    \caption{Estimating the basin of attraction with Corollary \ref{cor:basin_attraction}. Figure a) displays the set $\Vcal = \{x^*\} \cup \{x: V(x) > 0, \Delta V(x) < 0\}$ for the equilibrium $x_1 = (1, 0)$, $x_2 = (1, 0)$ in blue and the dashed boundary of $\Ucal_c$. Figure b) shows the set $\bar{\Vcal}$ where the upper bound $\overline{\Delta V}$, derived by using the non-expansiveness of the projection operator, is lower than zero. Figure b) also shows the exact boundary of $\Vcal$ (dashed line) and the boundary for variational stability (thin line).}
    \label{fig:battle-of-sexes_subsets-of-boa}    
\end{figure}

\subsection{Mixed and Weak Nash Equilibria} 

So far, we have assumed that an equilibrium of interest is strict. In this section, we will provide examples showcasing that all kinds of dynamics can evolve around mixed and weak Nash equilibria.

We already discussed that games with only mixed Nash equilibria cannot be stable and attracting under certain no-regret dynamics \citep{flokas2020no}. A classical example of a bimatrix game is the matching pennies game, where two players can either choose heads ($H$) or tails ($T$). One player wins \$1 if both choose the same action simultaneously, the other wins if they do not. 
While \textit{strict} monotonicity is only a sufficient condition for convergence and a violation is no proof for non-convergence, it is interesting to see that Matching Pennies is a monotonous, but not a strictly monotonous game, and that this strictness in the inequality makes a big difference. 

\[
A_1 = \begin{bmatrix}
        1 & -1 \\
        -1 & 1 \\
    \end{bmatrix}, \qquad A_2 = -A_1^T
\]
It is easy to see that the mixed strategy $x^*=(x_1^*, x_2^*)$ with $x_1^* = x_2^* = (\frac 1 2, \tfrac 1 2)$ is a Nash equilibrium, since $F(x^*) = 0$.
If we check for (strict) monotonicity, we can see that the inequality is zero for all pairs of strategy profiles, which implies monotonicity but not strict monotonicity:
\[
\langle F(x) - F(y), x-y \rangle = \langle 
\begin{pmatrix} A_1 x_2 \\ A_2^T x_1 \end{pmatrix} - \begin{pmatrix} A_1 y_2 \\ A_2^T y_1 \end{pmatrix},
\begin{pmatrix} x_1 \\ x_2\end{pmatrix} - \begin{pmatrix} y_1 \\ y_2\end{pmatrix} = (x_1 - y_1)^T\underbrace{(A_1 - A_2^T)}_{=0}(x_2-y_2) = 0.
\]
With gradient dynamics (\ref{eq:projected-dynamical-system-discrete-time}), the equilibrium is not stable, and trajectories starting nearby will spiral away from $x^*$; see Figure \ref{fig:matching-pennies}. This indicates that, contrary to strong monotonicity, monotonicity is insufficient to guarantee asymptotic stability.

\begin{figure}[h]
    \centering
    \includegraphics[width=0.5\linewidth]{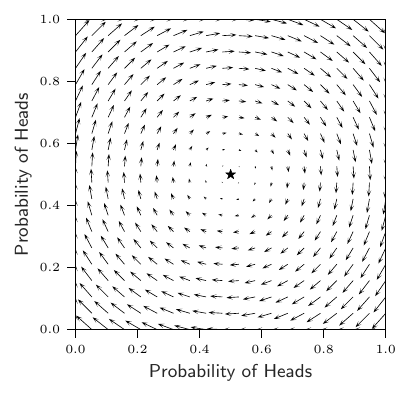}
    \caption{Discrete-time dynamics of "Matching Pennies" ($\eta = 0.05$).}
    \label{fig:matching-pennies}
\end{figure}

The "Matching Pennies" game is not a singular example. Other games with only mixed Nash equilibria, such as the Shapley game \citep{shapley1963some}, can also not be learned with standard gradient dynamics, and algorithms cycle. 

This observation is not limited to truly mixed equilibria. \textit{Weak} pure-strategy Nash equilibria are not necessarily sufficient for convergence \citep{mertikopoulosLearningGamesContinuous2019}, too. For example, the following 2-player game from \citep{milionis2022nash} has several pure Nash equilibria (and a continuum of mixed equilibria), and it cycles:

\[
    A_1 = \begin{bmatrix}
        1 & 0 & -1 \\
        -1 & 0 & -1 \\
        1 & 0 & -2
    \end{bmatrix}, \qquad
    A_2 = \begin{bmatrix}
        1 & -1 & 1 \\
        0 & 0 & 0 \\
        -1 & -1 & 2
    \end{bmatrix}
\]

Already \citet{hart2003uncoupled} showed that there are games where no uncoupled learning dynamics converge to the Nash equilibrium. 
The next game is a two-player game which has only one weak PNE and it doesn't converge with gradient dynamics as well.

\[
    A_1 = \begin{bmatrix}
        1 & 2 & 3 \\
        0 & 2 & 0 \\
        3 & 2 & 1 
    \end{bmatrix}, \qquad
    A_2 = \begin{bmatrix}
        3 & 0 & 1 \\
        1 & 2 & 2 \\
        1 & 0 & 3 
    \end{bmatrix}
\]


In the examples above, mixed and weak Nash equilibria were not asymptotically stable. One might think that a game with a single strict Nash equilibrium $x^*$ (and potentially other mixed Nash equilibria that are not asymptotically stable) will thus converge to $x^*$ for almost all initializations. This, however, is not the case. In the following, we provide an example that shows that, while strictness is a necessary condition for global convergence \citep{giannou2021survival, mertikopoulosLearningGamesContinuous2019}, it is not a sufficient condition.

\begin{proposition} \label{prop:unique_spne_no_global_convergence}
Gradient-based learning algorithms do not necessarily converge to a unique strict Nash equilibrium for all initial conditions.
\end{proposition}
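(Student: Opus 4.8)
The plan is to prove this by exhibiting a single finite game whose mixed extension has a unique strict Nash equilibrium, together with an open set of initial conditions from which the dynamics \eqref{eq:projected-dynamical-system-discrete-time} never approach it. The construction glues a cycling subgame onto an isolated strict equilibrium. I would take a two-player game in which each player has three actions and, for a large constant $M>0$,
\begin{equation*}
  A_1 = \begin{bmatrix} 1 & -1 & -M \\ -1 & 1 & -M \\ -M & -M & 1 \end{bmatrix}, \qquad
  A_2 = \begin{bmatrix} -1 & 1 & -M \\ 1 & -1 & -M \\ -M & -M & 1 \end{bmatrix}.
\end{equation*}
The top-left $2\times 2$ blocks form a Matching Pennies game; action $3$ is made very costly for a player whenever the opponent plays an action in $\{1,2\}$; and $(3,3)$ is arranged so that, once both play action $3$, action $3$ is each player's unique best response.

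First I would pin down the equilibrium structure. Against action $3$, action $3$ is each player's strict best response ($1>-M$), so $(3,3)$ is a strict Nash equilibrium; it is the only pure equilibrium, since the Matching Pennies block has none and any pure profile in which one player mixes over (or plays) $\{1,2\}$ against the opponent's action $3$ admits a profitable deviation to $3$. Because strict equilibria are pure, $(3,3)$ is then the \emph{unique} strict equilibrium. (The profile in which both players play $(\tfrac12,\tfrac12,0)$ is an additional, non-strict, Nash equilibrium of the full game once $M$ is large enough that action $3$ is not profitable there, which is consistent with the setting of the proposition.)

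The core step is to show that the face $\mathcal{F}=\{x:x_{1,3}=x_{2,3}=0\}$ is forward-invariant, and indeed locally attracting, under \eqref{eq:projected-dynamical-system-discrete-time}. On $\mathcal{F}$ the third coordinate of player $1$'s game gradient $A_1x_2$ equals $-M(x_{2,1}+x_{2,2})=-M$ (and symmetrically for player $2$ via $A_2^\top x_1$), which is far below the other two coordinates, bounded in absolute value by the $\pm1$ entries of the Matching Pennies block. For $M$ large relative to the step size $\eta$, the third coordinate of $x_i+\eta F_i(x)$ therefore lies below the thresholding level of the Euclidean projection onto the simplex, so $\proj_{\Xcal}(x_i+\eta F_i(x))$ again puts zero weight on action $3$, and the residual projection of the first two coordinates coincides with the projection onto the $2$-simplex. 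Hence on $\mathcal{F}$ the dynamics reduce exactly to the projected gradient dynamics of $2\times2$ Matching Pennies, which spiral outward and never converge (cf. Figure~\ref{fig:matching-pennies}); in particular they never approach $(3,3)$, which does not lie in $\mathcal{F}$. The same estimate shows that when $x_{i,3}$ is small and positive the pull on it is still $\approx-\eta M<0$, so it is clipped to $0$ in finitely many steps; thus an open subset of $\Xcal$ surrounding $\mathcal{F}$ flows into $\mathcal{F}$ and then cycles. This produces an open set of initializations that does not converge to the unique strict equilibrium, which proves the proposition; the phenomenon is robust to replacing \eqref{eq:projected-dynamical-system-discrete-time} by other gradient-based schemes, since no reasonable learner places weight on an action with uniformly hugely negative payoff.

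I expect the main obstacle to be the rigorous handling of the projection in the invariance/attractiveness argument: one must argue — via the water-filling formula for the simplex projection or a KKT characterization — that a sufficiently negative gradient coordinate is clipped to zero and that the remaining projection agrees with the lower-dimensional one, and one must verify that the required magnitude of $M$ can be chosen uniformly over $\Xcal$ for the fixed step size $\eta$. A secondary, routine point is the complete enumeration of pure profiles to confirm that $(3,3)$ is the only strict equilibrium.
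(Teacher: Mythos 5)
Your proposal is correct and follows the same high-level strategy as the paper's proof — glue a Matching Pennies block onto a third action that carries the unique strict equilibrium, and exhibit a positive-measure set of initializations from which the weight on that action never grows — but the technical execution differs in a way worth noting. The paper uses "soft" penalties ($-q$, $-\nicefrac{q}{2}$) and analyzes the \emph{continuous-time} tangent-cone dynamics, explicitly conceding that the discrete-time analysis is challenging; the trapping region is the open set where both players' weights on the equilibrium action lie below the threshold $\nicefrac{\frac{2}{3}q}{\frac{2}{3}r+q}$, on which that coordinate of the projected gradient is nonpositive. You instead use a "hard" penalty $-M$ and argue directly about the discrete-time update \eqref{eq:projected-dynamical-system-discrete-time}: the water-filling characterization of the simplex projection clips the third coordinate to zero (indeed, on the face $\mathcal{F}$ the first two components of $x_i+\eta F_i(x)$ already sum to one because the Matching Pennies gradient components cancel, so the threshold is $\tau=0$ and any $M>0$ suffices for forward invariance; $M$ only needs to be large to absorb an open neighborhood of $\mathcal{F}$ in one step). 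Your route buys a proof for the discrete-time dynamics the paper actually studies, at the cost of the projection bookkeeping you correctly flag as the main obstacle; the paper's route avoids the projection combinatorics but only establishes the continuous-time statement. One small simplification available to you: the cycling of the Matching Pennies dynamics on $\mathcal{F}$ is not load-bearing — once trajectories are trapped in $\mathcal{F}$, they are bounded away from $(3,3)\notin\mathcal{F}$ regardless of what they do there.
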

The proof can be found in Appendix~\ref{app:proofs}. 
It is based on the analysis of the dynamics in a specific game, namely an extended version of matching pennies with parameters $r, q \in \mathbb{R}^+$, $q > 1$, and payoff matrices
\begin{equation*}
	\begin{aligned}
		A_1 = \begin{bmatrix}
		r & -q & -q \\
		- \nicefrac{q}{2} & 1 & -1 \\
		- \nicefrac{q}{2} & -1 & 1
	\end{bmatrix},
	& &
	A_2 = \begin{bmatrix}
		r & - \nicefrac{q}{2} & - \nicefrac{q}{2} \\
		- q & -1 & 1 \\
		- q & 1 & -1
	\end{bmatrix}
	\end{aligned}.
\end{equation*}
Since the analysis in discrete-time is challenging, we prove the result for continuous-time dynamics.

\subsection{Limitations and Research Challenges}
Let us now examine an example that illustrates the limitations of using Lyapunov functions, as discussed in this paper.

We consider a two-player Cournot oligopoly model, where each agent (firm) \(i \in \{1, 2\}\) chooses a quantity $a_i$ to supply to the market. The market price $p(a)$ depends on the total supply, implying that each firm's decision affects the price faced by all participants. The utility of firm $i$ is given by 
\begin{equation}
    u_i(a) = a_i \, p(a) - a_i \, c_i = a_i \,  (b_0 - \sum \limits_{i \in \Ical} b_i a_i ) - a_i \, c_i,
\end{equation}
where $c_i \geq 0$ represents the marginal production costs, $b_i > 0$ captures the firms price-setting power, and $b_0$ is some constant.
We focus on a simple symmetric case with parameters $b_0 = 1$, $b_1 = b_2 = \tfrac 1 2$, $c_1 = c_2 = 0$.

\paragraph{The Problem of Discretization} 
When the actions are continuous, i.e., $\Acal_i = [0, 1]$, the resulting game is strongly monotone and has a unique equilibrium at $a^* = (a_1^*, a_2^*) = (\tfrac 2 3, \tfrac 2 3)$. 
The equilibrium is thereby exponentially stable under the locally projected dynamical system (cf. Theorem~\ref{thm:nagurney}). 
Furthermore, from the theory on variational inequalities, it is also well known that fixed-point iterations - corresponding to the discrete-time dynamics \eqref{eq:projected-dynamical-system-discrete-time} - converge to the equilibrium in strongly monotone settings.
However, if we instead consider discrete actions, such as $ \Acal_i = \{ \tfrac 0 7, \tfrac 1 7, \dots, 1 \}$ and analyze the mixed extension of this game, this property no longer holds.
In this discretized setting, the game has three pure equilibria $a^* \in \{(\tfrac 4 7, \tfrac 5 7), (\tfrac 5 7, \tfrac 4 7), (\tfrac 5 7, \tfrac 5 7)\}$. 
The presence of multiple equilibria implies that the game cannot be monotone, as monotonicity guarantees uniqueness. 
Moreover, all three pure equilibria are weak, which rules out (strong) variational stability.
Consequently, the local convergence guarantees provided by our previously introduced tools no longer apply in this setting.

\paragraph{Stability of Sets}
This issue is characteristic of many discretized games—including models of oligopolies, contests, and auctions -- where multiple, potentially non-strict and neighboring equilibria arise.
Classical Lyapunov-based asymptotic stability typically implies convergence to a single point and requires strict decrease of a Lyapunov function along the system's trajectories.
However, in settings with multiple equilibria, convergence to a set is often more appropriate and meaningful.
In our example, one could attempt to construct a Lyapunov function that demonstrates convergence to a face of the probability simplex containing all three equilibria.
This relaxation is common in control theory and systems with nonsmooth dynamics (e.g., due to projections), where the concept of stability for sets is frequently analyzed via differential inclusions, a generalization of differential equations. 
\citet{bacciotti_stability_1999} developed techniques for constructing Lyapunov functions for sets rather than points. Their solution and similar approaches are usually based on generalized versions of \citeauthor{lasalle_extensions_1960}'s invariance theorem.
The application of this theory to game dynamics describes a challenging research direction.

An alternative approach is to consider broader equilibrium concepts, similar to (coarse) correlated equilibria, such as the recently introduced semicoarse correlated equilibria (SCCEs) (\( \text{NE} \subseteq \text{CE} \subseteq \text{SCCE} \subseteq \text{CCE} \)).
Importantly, SCCEs characterize the outcome of projected gradient dynamics and can be formulated as a linear program similar to the CCE polytope, but with additional constraints  \citep{ahunbay2025semicoarsecorrelatedequilibrialpbased}.
The dual of this linear program can be interpreted as a Lyapunov function that certifies convergence of projected gradient dynamics to elements of the SCCE set.

Overall, techniques to prove convergence to a set of equilibria rather than an isolated and strict Nash equilibrium is a veritable research challenge, which is important to fully characterize the agentic markets that emerged recently.

\section{Summary and Outlook}

This survey provides an overview of the interplay between learning algorithms and game dynamics in markets populated by autonomous and learning agents. By integrating concepts from online learning, game theory, and dynamical systems, we lay out a framework for analyzing how self-interested agents interact and adapt over time in complex market environments with learning agents.

In this survey, we assume a static environment where agents play the same game repeatedly. Think about thousands of customers who are served by two sellers of luxury cars each day. In such markets, the same suppliers compete repeatedly over time. If inefficient price cycles arise in such envrionments, this phenomenon can also arise in a more dynamic scenario where supply and demand changes over time. On the other hand, if learning algorithms implement an efficient equilibrium outcome in such static models, then price cycles or algorithmic collusion might be less of a concern in the field. 

The analysis of game dynamics with learning agents is still a research area with many open questions. New ideas are required to characterize games that do not exhibit strict Nash equilibria. Future developments will also address dynamic market environments and real-world environments as they can be found in display advertising auctions or on retail platforms. 


\section*{Acknowledgments}
This project was funded by the Deutsche Forschungsgemeinschaft (DFG, German Research Foundation) - GRK 2201/2 - Project Number 277991500 and BI 1057/9.


\bibliographystyle{informs2014} 



\begin{thebibliography}{53}
\providecommand{\natexlab}[1]{#1}
\providecommand{\url}[1]{\texttt{#1}}
\providecommand{\urlprefix}{URL }

\bibitem[{Acharya et~al.(2025)Acharya, Kuppan, \protect\BIBand{}
  Divya}]{acharya2025agentic}
Acharya DB, Kuppan K, Divya B (2025) Agentic ai: Autonomous intelligence for
  complex goals--a comprehensive survey. \emph{IEEE Access} .

\bibitem[{Bacciotti \protect\BIBand{}
  Ceragioli(1999)}]{bacciotti_stability_1999}
Bacciotti A, Ceragioli F (1999) Stability and {Stabilization} of
  {Discontinuous} {Systems} and {Nonsmooth} {Lyapunov} {Functions}.
  \emph{ESAIM: Control, Optimisation and Calculus of Variations} 4:361--376.

\bibitem[{Bailey \protect\BIBand{} Piliouras(2018)}]{bailey2018multiplicative}
Bailey JP, Piliouras G (2018) Multiplicative weights update in zero-sum games.
  \emph{Proceedings of the 2018 ACM Conference on Economics and Computation},
  321--338.

\bibitem[{Bauer \protect\BIBand{} Jannach(2018)}]{bauer2018optimal}
Bauer J, Jannach D (2018) Optimal pricing in e-commerce based on sparse and
  noisy data. \emph{Decision support systems} 106:53--63.

\bibitem[{Bichler et~al.(2025)Bichler, Lunowa, Oberlechner, Pieroth,
  \protect\BIBand{} Wohlmuth}]{bichler2023convergence}
Bichler M, Lunowa SB, Oberlechner M, Pieroth FR, Wohlmuth B (2025) Beyond
  monotonicity: {{On}} the convergence of learning algorithms in standard
  auction games. \emph{Proceedings of the AAAI Conference on Artificial
  Intelligence} 39(13):13649--13657.

\bibitem[{Brogliato et~al.(2006)Brogliato, Daniilidis, Lemarechal,
  \protect\BIBand{} Acary}]{brogliato2006equivalence}
Brogliato B, Daniilidis A, Lemarechal C, Acary V (2006) On the equivalence
  between complementarity systems, projected systems and differential
  inclusions. \emph{Systems \& Control Letters} 55(1):45--51.

\bibitem[{Bubeck(2011)}]{bubeck2011introduction}
Bubeck S (2011) Introduction to {Online} {Optimization}.

\bibitem[{Cavazzuti et~al.(2002)Cavazzuti, Pappalardo, \protect\BIBand{}
  Passacantando}]{cavazzutiNashEquilibriaVariational2002}
Cavazzuti E, Pappalardo M, Passacantando M (2002) Nash {{Equilibria}},
  {{Variational Inequalities}}, and {{Dynamical Systems}}. \emph{Journal of
  Optimization Theory and Applications} 114(3):491--506.

\bibitem[{Cheung \protect\BIBand{} Piliouras(2020)}]{cheung2020chaos}
Cheung YK, Piliouras G (2020) Chaos, extremism and optimism: Volume analysis of
  learning in games. \emph{Advances in Neural Information Processing Systems}
  33:9039--9049.

\bibitem[{Cohen et~al.(2017)Cohen, H\'{e}liou, \protect\BIBand{}
  Mertikopoulos}]{cohen2017hedge}
Cohen J, H\'{e}liou A, Mertikopoulos P (2017) Hedging under uncertainty: Regret
  minimization meets exponentially fast convergence. \emph{Algorithmic Game
  Theory: 10th International Symposium, SAGT 2017, L’Aquila, Italy, September
  12–14, 2017, Proceedings}, 252–263 (Berlin, Heidelberg: Springer-Verlag),
  ISBN 978-3-319-66699-0.

\bibitem[{Cortes(2008)}]{cortes_discontinuous_2008}
Cortes J (2008) Discontinuous dynamical systems. \emph{IEEE Control Systems
  Magazine} 28(3):36--73.

\bibitem[{Daskalakis et~al.(2010)Daskalakis, Frongillo, Papadimitriou,
  Pierrakos, \protect\BIBand{} Valiant}]{daskalakis2010learning}
Daskalakis C, Frongillo R, Papadimitriou CH, Pierrakos G, Valiant G (2010) On
  learning algorithms for {N}ash equilibria. \emph{International Symposium on
  Algorithmic Game Theory}, 114--125.

\bibitem[{Daskalakis et~al.(2009)Daskalakis, Goldberg, \protect\BIBand{}
  Papadimitriou}]{daskalakis2009complexity}
Daskalakis C, Goldberg PW, Papadimitriou CH (2009) The complexity of computing
  a nash equilibrium. \emph{SIAM Journal on Computing} 39(1):195--259.

\bibitem[{den Boer(2015)}]{den2015dynamic}
den Boer AV (2015) Dynamic pricing and learning: historical origins, current
  research, and new directions. \emph{Surveys in operations research and
  management science} 20(1):1--18.

\bibitem[{Dupuis \protect\BIBand{} Nagurney(1993)}]{dupuis1993dynamical}
Dupuis P, Nagurney A (1993) Dynamical systems and variational inequalities.
  \emph{Annals of Operations Research} 44:7--42.

\bibitem[{Economist(2019)}]{economist2018}
Economist (2019) How ai is spreading throughout the supply chain.
  \emph{Economist} 3.

\bibitem[{Elreedy et~al.(2021)Elreedy, Atiya, \protect\BIBand{}
  Shaheen}]{elreedy2021novel}
Elreedy D, Atiya AF, Shaheen SI (2021) Novel pricing strategies for revenue
  maximization and demand learning using an exploration--exploitation
  framework. \emph{Soft Computing} 25(17):11711--11733.

\bibitem[{Filippov(1988)}]{filippov_differential_1988}
Filippov AF (1988) \emph{Differential {Equations} with {Discontinuous}
  {Righthand} {Sides}}, volume~18 (Springer Netherlands).

\bibitem[{Foster \protect\BIBand{}
  Vohra(1997)}]{fosterCalibratedLearningCorrelated1997}
Foster DP, Vohra RV (1997) Calibrated {{Learning}} and {{Correlated
  Equilibrium}}. \emph{Games and Economic Behavior} 21(1-2):40--55.

\bibitem[{Fountaine et~al.(2019)Fountaine, McCarthy, \protect\BIBand{}
  Saleh}]{fountaine2019building}
Fountaine T, McCarthy B, Saleh T (2019) Building the ai-powered organization.
  \emph{Harvard Business Review} 97(4):62--73.

\bibitem[{Friesz et~al.(1994)Friesz, Bernstein, Mehta, Tobin, \protect\BIBand{}
  Ganjalizadeh}]{friesz1994day}
Friesz TL, Bernstein D, Mehta NJ, Tobin RL, Ganjalizadeh S (1994) Day-to-day
  dynamic network disequilibria and idealized traveler information systems.
  \emph{Operations Research} 42(6):1120--1136.

\bibitem[{Geiger \protect\BIBand{}
  Kanzow(2002)}]{geigerVariationsungleichungen2002}
Geiger C, Kanzow C (2002) {Variationsungleichungen}. \emph{{Theorie und Numerik
  restringierter Optimierungsaufgaben}}, 411--472 ({Springer}).

\bibitem[{Giannou et~al.(2021)Giannou, Vlatakis-Gkaragkounis, \protect\BIBand{}
  Mertikopoulos}]{giannou2021survival}
Giannou A, Vlatakis-Gkaragkounis EV, Mertikopoulos P (2021) Survival of the
  strictest: Stable and unstable equilibria under regularized learning with
  partial information. \emph{Conference on Learning Theory}, 2147--2148.

\bibitem[{Hart \protect\BIBand{} Mas-Colell(2003)}]{hart2003uncoupled}
Hart S, Mas-Colell A (2003) Uncoupled dynamics do not lead to nash equilibrium.
  \emph{American Economic Review} 93(5):1830--1836.

\bibitem[{Hauswirth et~al.(2021)Hauswirth, Bolognani, \protect\BIBand{}
  Dörfler}]{hauswirth_projected_2021}
Hauswirth A, Bolognani S, Dörfler F (2021) Projected {Dynamical} {Systems} on
  {Irregular}, {Non}-{Euclidean} {Domains} for {Nonlinear} {Optimization}.
  \emph{SIAM Journal on Control and Optimization} 59(1):635--668.

\bibitem[{Julian \protect\BIBand{} Botti(2019)}]{julian2019multi}
Julian V, Botti V (2019) Multi-agent systems.

\bibitem[{Khalil(2002)}]{khalil_nonlinear_2002}
Khalil HK (2002) \emph{Nonlinear systems} (Prentice Hall).

\bibitem[{LaSalle(1960)}]{lasalle_extensions_1960}
LaSalle J (1960) Some {Extensions} of {Liapunov}'s {Second} {Method}. \emph{IRE
  Transactions on Circuit Theory} 7(4):520--527.

\bibitem[{LaSalle(1986)}]{lasalle_stability_1986}
LaSalle JP (1986) \emph{The {Stability} and {Control} of {Discrete}
  {Processes}}, volume~62 (Springer).

\bibitem[{Meiss(2017)}]{meiss_differential_2017}
Meiss JD (2017) \emph{Differential {Dynamical} {Systems}, {Revised} {Edition}}
  (Society for Industrial and Applied Mathematics).

\bibitem[{Mertikopoulos et~al.(2018)Mertikopoulos, Papadimitriou,
  \protect\BIBand{} Piliouras}]{mertikopoulos2018cycles}
Mertikopoulos P, Papadimitriou C, Piliouras G (2018) Cycles in adversarial
  regularized learning. \emph{Proceedings of the Twenty-Ninth Annual ACM-SIAM
  Symposium on Discrete Algorithms}, 2703--2717.

\bibitem[{Mertikopoulos \protect\BIBand{}
  Sandholm(2016)}]{mertikopoulos2016learning}
Mertikopoulos P, Sandholm WH (2016) Learning in games via reinforcement and
  regularization. \emph{Mathematics of Operations Research} 41(4):1297--1324.

\bibitem[{Mertikopoulos \protect\BIBand{}
  Sandholm(2018)}]{mertikopoulos2018riemannian}
Mertikopoulos P, Sandholm WH (2018) Riemannian game dynamics. \emph{Journal of
  Economic Theory} 177:315--364.

\bibitem[{Mertikopoulos \protect\BIBand{}
  Zhou(2019)}]{mertikopoulosLearningGamesContinuous2019}
Mertikopoulos P, Zhou Z (2019) Learning in games with continuous action sets
  and unknown payoff functions. \emph{Mathematical Programming}
  173(1-2):465--507.

\bibitem[{Milgrom \protect\BIBand{} Roberts(1991)}]{milgrom1991adaptive}
Milgrom P, Roberts J (1991) Adaptive and sophisticated learning in normal form
  games. \emph{Games and economic Behavior} 3(1):82--100.

\bibitem[{Milionis et~al.(2024)Milionis, Papadimitriou, Piliouras,
  \protect\BIBand{} Spendlove}]{milionis2022nash}
Milionis J, Papadimitriou C, Piliouras G, Spendlove K (2024) Nash, conley, and
  computation: Impossibility and incompleteness in game dynamics. \emph{PNAS}
  120(41):e2305349120.

\bibitem[{Monderer \protect\BIBand{}
  Shapley(1996)}]{mondererPotentialGames1996a}
Monderer D, Shapley LS (1996) Potential {{Games}}. \emph{Games and Economic
  Behavior} 14(1):124--143.

\bibitem[{Mueller et~al.(2019)Mueller, Syrgkanis, \protect\BIBand{}
  Taddy}]{mueller2019low}
Mueller JW, Syrgkanis V, Taddy M (2019) Low-rank bandit methods for
  high-dimensional dynamic pricing. \emph{Advances in Neural Information
  Processing Systems} 32.

\bibitem[{Nagurney \protect\BIBand{} Zhang(2012)}]{nagurney2012projected}
Nagurney A, Zhang D (2012) \emph{Projected dynamical systems and variational
  inequalities with applications}, volume~2 (Springer Science \& Business
  Media).

\bibitem[{Nash et~al.(1950)}]{nash1950equilibrium}
Nash JF, et~al. (1950) Equilibrium points in n-person games. \emph{Proceedings
  of the National Academy of Sciences} 36(1):48--49.

\bibitem[{Pappalardo et~al.(2002)Pappalardo, Passacantando
  et~al.}]{pappalardo2002stability}
Pappalardo M, Passacantando M, et~al. (2002) Stability for equilibrium
  problems: from variational inequalities to dynamical systems. \emph{Journal
  of Optimization Theory and Applications} 113(3):567--582.

\bibitem[{Qu(2024)}]{qu24}
Qu J (2024) Survey of dynamic pricing based on multi-armed bandit algorithms.
  \emph{Applied and Computational Engineering} 37:160--165.

\bibitem[{Rosen(1965)}]{rosenExistenceUniquenessEquilibrium1965}
Rosen JB (1965) Existence and {{Uniqueness}} of {{Equilibrium Points}} for
  {{Concave N-Person Games}}. \emph{Econometrica} 33(3):520--534.

\bibitem[{Rothschild(1974)}]{rothschild1974two}
Rothschild M (1974) A two-armed bandit theory of market pricing. \emph{Journal
  of Economic Theory} 9(2):185--202.

\bibitem[{Sanders et~al.(2018)Sanders, Farmer, \protect\BIBand{}
  Galla}]{sanders2018prevalence}
Sanders JB, Farmer JD, Galla T (2018) The prevalence of chaotic dynamics in
  games with many players. \emph{Scientific {R}eports} 8(1):1--13.

\bibitem[{{Shalev-Shwartz} \protect\BIBand{}
  Singer(2007)}]{shalev-shwartz2007OnlineLearningTheory}
{Shalev-Shwartz} S, Singer Y (2007) Online learning: {{Theory}}, algorithms,
  and applications. \emph{Ph.D. thesis} .

\bibitem[{Shapley et~al.(1963)}]{shapley1963some}
Shapley LS, et~al. (1963) \emph{Some topics in two-person games} (Rand
  Corporation).

\bibitem[{Sharma et~al.(2022)Sharma, Shishodia, Gunasekaran, Min,
  \protect\BIBand{} Munim}]{sharma2022role}
Sharma R, Shishodia A, Gunasekaran A, Min H, Munim ZH (2022) The role of
  artificial intelligence in supply chain management: mapping the territory.
  \emph{International Journal of Production Research} 1--24.

\bibitem[{Taywade et~al.(2023)Taywade, Goldsmith, Harrison, \protect\BIBand{}
  Bagh}]{taywade_multi-armed_2023}
Taywade K, Goldsmith J, Harrison B, Bagh A (2023) Multi-armed {Bandit}
  {Algorithms} for {Cournot} {Games}. Technical report, In Review.

\bibitem[{Trovo et~al.(2015)Trovo, Paladino, Restelli, Gatti
  et~al.}]{trovo2015multi}
Trovo F, Paladino S, Restelli M, Gatti N, et~al. (2015) Multi-armed bandit for
  pricing. \emph{Proceedings of the 12th European Workshop on Reinforcement
  Learning}, 1--9.

\bibitem[{Vlatakis-Gkaragkounis et~al.(2020)Vlatakis-Gkaragkounis, Flokas,
  Lianeas, Mertikopoulos, \protect\BIBand{} Piliouras}]{flokas2020no}
Vlatakis-Gkaragkounis EV, Flokas L, Lianeas T, Mertikopoulos P, Piliouras G
  (2020) No-regret learning and mixed nash equilibria: They do not mix.
  \emph{Advances in Neural Information Processing Systems} 33:1380--1391.

\bibitem[{Zhang \protect\BIBand{} Nagurney(1996)}]{zhang1996stability}
Zhang D, Nagurney A (1996) Stability analysis of an adjustment process for
  oligopolistic market equilibrium modeld as a projected dynamical system.
  \emph{Optimization} 36(3):263--285.

\bibitem[{Şeref Ahunbay \protect\BIBand{}
  Bichler(2025)}]{ahunbay2025semicoarsecorrelatedequilibrialpbased}
Şeref Ahunbay M, Bichler M (2025) Semicoarse correlated equilibria and
  lp-based guarantees for gradient dynamics in normal-form games.

\end{thebibliography}
\begin{appendix}


\section{Stability of Discrete-Time Systems with Subspace Topologies}
\label{sec:proofs-stability-discrete-subspace-dynamics}

Most versions of Lyapunov theorems and LaSalle's invariance principle are formulated on $\R^n$. 
This section establishes results for dynamics
\begin{equation}
    \label{eq:discrete-dynamical-system-on-domain-D}
    x(t+1) = f(x(t)), \quad x(0) = x_0 \in \Xcal
\end{equation}
with an arbitrary closed domain $\Xcal \subseteq \R^n$ and a continuous function $f: \Xcal \to \Xcal $. Note that $\Xcal$ can also be a measure zero set w.r.t. the Lebesgue measure on $\R^n$, e.g., the simplex on $\R^n$. We will build on the theorems provided by \citet{lasalle_stability_1986} and modify them slightly to our needs.

Let us briefly recall the definitions of a subspace topology and a neighborhood, as this provides the context for our discussion on stability. Note that an open neighborhood in the subspace topology (based on $\Xcal$) may not be open in the original topology (based on $\R^n$). 

\begin{definition}[Subspace topology\footnote{from \href{https://en.wikipedia.org/wiki/Subspace_topology}{Wikipedia}}]
    Given a topological space $(X, \tau)$ and a subset $S$ of $X$, the subspace topology on $S$ is defined by
    \begin{equation*}
        \tau_S = \{S \cap U \vert U \in \tau \}.
    \end{equation*}
    That is, a subset of $S$ is open in the subspace topology if and only if it is the intersection of $S$ with an open set in $(X, \tau)$. 
\end{definition}

\begin{definition}[Neighborhood\footnote{from \href{https://en.wikipedia.org/wiki/Neighbourhood_(mathematics)}{Wikipedia}}]
    If $X$ is a topological space and $p$ is a point in $X$, then a neighborhood of $p$ is a subset $V$ of $X$ that includes an open set $U$ containing $p$:
    \begin{equation*}
        p \in U \subseteq V \subseteq X.
    \end{equation*}
\end{definition}

In the following, we consider the subspace topology $(\Xcal, \tau_{\Xcal})$ with $\tau_{\Xcal} = \{\Xcal \cap U \vert U \in \tau\}$ being the collection of open sets.
Based on subspace topologies, balls are sets of the form $B_\delta(x) = \{y \in \Xcal: \norm{y - x} \leq \delta\}$. 

We now continue with Lyapunov theorems and LaSalle's invariance principle, starting with a relaxed definition of a Lyapunov function that we already provided in Definition \ref{def:discrete-time-Lyapunov-function}.

\begin{definition}[Lyapunov function \citep{lasalle_stability_1986}]
    Let $\Ucal$ be any set in $\Xcal$. A function $V$ is a Lyapunov function for the discrete-time dynamical system (\ref{eq:discrete-dynamical-system-on-domain-D}) on $\Ucal$ if
    \begin{enumerate}[label=\roman*)]
        \item $V$ is continuous on $\Xcal$ and
        \item $\Delta V(x) := V(f(x)) - V(x) \leq 0$ for all $x \in \Ucal$.
    \end{enumerate}
\end{definition}

This definition essentially drops the assumption of positive definiteness, which is not required for LaSalle's invariance principle (see below). The notions of Lyapunov stability, however, will explicitly state this assumption to complement their versions from the main part of the paper.

Using Lyapunov functions, we restate Theorem \ref{thm:Lyapunov-stability-discrete-time-on-domain-D}, which a version of Lyapunov's well-known second method.

\begin{theorem}[Lyapunov stability for discrete-time systems \citep{lasalle_stability_1986}]
    Consider the discrete-time dynamical system (\ref{eq:discrete-dynamical-system-on-domain-D}) with continuous function $f$.
    If $V$ is a Lyapunov function for some neighborhood of $x^*$ and $V$ is positive definite with respect to $x^*$, that is
    \begin{enumerate}[label=\roman*)]
        \item $V(x^*) = 0$, and
        \item $V(x) > 0$ on $x \in B_\delta(x^*) \setminus \{x^*\}$ for some $\delta > 0$,
    \end{enumerate}
    then $x^*$ is a stable equilibrium.
\end{theorem}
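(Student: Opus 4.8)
The plan is to run the classical Lyapunov ``trapping region'' argument, adapted to discrete time and to the subspace topology on $\Xcal$. In continuous time one shows that a trajectory cannot escape a small ball $B_r(x^*)$ because it would have to cross the sphere $\{\|x-x^*\|=r\}$, on which $V$ attains a positive minimum, contradicting that $V$ is non-increasing and started strictly below that minimum. In discrete time this fails verbatim, since a single step $x\mapsto f(x)$ can jump over the sphere. The fix is to replace the sphere by a \emph{thickened shell} $\Gamma=\{x\in\Xcal:\rho\le\|x-x^*\|\le r\}$ and to use continuity of $f$ at the fixed point to guarantee that from the inner ball $B_\rho(x^*)$ one cannot even reach outside $B_r(x^*)$ in one step, so that the shell genuinely acts as a barrier.

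Concretely, I would fix an arbitrary neighborhood $\Ucal$ of $x^*$ and let $W$ denote the neighborhood on which $V$ is a Lyapunov function, so $\Delta V\le 0$ on $W$. Choose $r>0$ small enough that the closed ball $B_r(x^*)=\Xcal\cap\{y\in\R^n:\|y-x^*\|\le r\}$ is contained in $\Ucal\cap W$ and $r\le\delta$, with $\delta$ the positive-definiteness radius from the hypothesis. Since $\Xcal$ is closed in $\R^n$, the sets $B_r(x^*)$ and $\Gamma$ are compact. Because $x^*$ is an equilibrium, $f(x^*)=x^*$, so continuity of $f$ yields $\rho\in(0,r]$ with $f\bigl(B_\rho(x^*)\bigr)\subseteq B_r(x^*)$. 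Assuming first that $\Gamma$ is non-empty, set $c:=\min_{x\in\Gamma}V(x)$; positive definiteness of $V$ (so $V>0$ on $B_\delta(x^*)\setminus\{x^*\}\supseteq\Gamma$), continuity of $V$, and compactness of $\Gamma$ give $c>0$.

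The candidate neighborhood is $\Vcal:=\{x\in\Xcal:\|x-x^*\|<\rho\}\cap\{x\in\Xcal:V(x)<c\}$, which contains $x^*$ (as $V(x^*)=0<c$) and is open in the subspace topology, being the intersection of $\Xcal$ with an open Euclidean ball and the preimage under the continuous $V$ of an open set. I would then prove by induction on $t$ that $x_t\in B_r(x^*)$ and $V(x_t)<c$ whenever $x_0\in\Vcal$: the base case is immediate, and for the step, $x_t\in B_r(x^*)$ together with $V(x_t)<c$ forces $\|x_t-x^*\|<\rho$ (else $x_t\in\Gamma$, so $V(x_t)\ge c$); hence $x_{t+1}=f(x_t)\in B_r(x^*)$ by the choice of $\rho$, and $\Delta V(x_t)\le 0$ on $W\supseteq B_r(x^*)$ gives $V(x_{t+1})\le V(x_t)<c$. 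Thus the whole orbit remains in $B_r(x^*)\subseteq\Ucal$, which is Lyapunov stability. The degenerate case $\Gamma=\emptyset$ is easier: then $B_\rho(x^*)=B_r(x^*)=\{x\in\Xcal:\|x-x^*\|<\rho\}$ as sets, $f(B_\rho(x^*))\subseteq B_r(x^*)=B_\rho(x^*)$, so $B_\rho(x^*)$ is itself an open, forward-invariant neighborhood contained in $\Ucal$.

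The main obstacle is not the estimates, which are a few lines, but the bookkeeping: (i) handling the discrete jump-over phenomenon via the thickened shell rather than a sphere; (ii) keeping every ``ball'' and ``neighborhood'' in the subspace topology of $\Xcal$ (so that $B_r(x^*)$ is closed while $\{\|x-x^*\|<\rho\}$ is the genuinely open set used to build $\Vcal$), which is what makes the statement apply to measure-zero domains such as the simplex; and (iii) disposing of the edge case where no point of $\Xcal$ lies in the shell. Everything else follows the template of LaSalle's discrete-time second method \citep{lasalle_stability_1986}.
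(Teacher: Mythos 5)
Your proof is correct, and it follows the same overall template as the paper's (a sublevel-set trapping argument around $x^*$), but it resolves the crucial discrete-time difficulty --- that one iteration of $f$ could jump across the level surface --- by a genuinely different mechanism. The paper works with the sphere $\{x\in\Xcal:\norm{x-x^*}=\eta\}$, sets $m=\min V$ there, and then controls the orbit through \emph{connectedness}: it takes the connected component $\Ucal_0$ of $\{V<m/2\}$ containing $x^*$, notes that $f(\Ucal_0)$ is a connected subset of $\{V<m/2\}$ containing $f(x^*)=x^*$, and concludes $f(\Ucal_0)\subseteq\Ucal_0$, so $\Ucal_0$ is an open positively invariant neighborhood. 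You instead thicken the sphere to the annulus $\Gamma=\{\rho\le\norm{x-x^*}\le r\}$, use continuity of $f$ at the fixed point to force $f(B_\rho(x^*))\subseteq B_r(x^*)$, and run an induction showing the orbit never enters $\Gamma$. Each choice buys something: the paper's construction hands you an explicitly open, positively invariant neighborhood $\Ucal_0$, which it then reuses in the proof of the asymptotic-stability theorem; your argument avoids connectedness entirely, and in particular does not need components of open subsets of $\Xcal$ to be open (i.e., local connectedness of $\Xcal$), a point the paper's proof implicitly relies on and which is automatic for the simplex but not for arbitrary closed $\Xcal\subseteq\R^n$. Your handling of the subspace topology (closed balls $B_r(x^*)=\Xcal\cap\{\norm{y-x^*}\le r\}$ are compact since $\Xcal$ is closed, openness of $\Vcal$ as an intersection of $\Xcal$ with open Euclidean sets and a preimage under the continuous $V$) and of the degenerate case $\Gamma=\emptyset$ is careful and complete; note also that your invariant set $\{x\in B_r(x^*):V(x)<c\}$ would serve equally well as the small positively invariant neighborhood needed later for asymptotic stability.
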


Note that, in the main part of this article, we assumed that any Lyapunov function is positive definite, which is a more restrictive definition that allows us to make the same statements.

\begin{proof}[Proof of Theorem \ref{thm:Lyapunov-stability-discrete-time-on-domain-D}]
    We may take $\eta$ so small that $V(x) > 0$ and $\Delta V \leq 0$ for $x \in B_\eta(x^*)$. Let $\varepsilon > 0$ be given; there is no loss in generality in taking $0 < \eta < \varepsilon$. 
    Let $m = \min \{V(x): \norm{x - x^*} = \eta, ~ x \in \Xcal \}$ be the minimum value of $V$ on the circle or arc drawn around $x^*$ with radius $\eta$. 
    $m$ is positive since we are taking the minimum of a positive continuous function over a compact set. \\
    Let $\Ucal = \{x : V(x) < m/2\}$ and $\Ucal_0$ the connected component of $\Ucal$ which contains $x^*$. Both $\Ucal$ and $\Ucal_0$ are open, and $\Ucal_0 \subset B_\eta(x^*)$ due to continuity of $V$. If $x_0 \in \Ucal_0$, then $\Delta V(x_0) \leq 0$, so $V(f(x_0)) \leq V(x_0) < m/2$, and thus $x_0 \in \Ucal$. 
    Since $x_0$ and $x^*$ are in the same component of $\Ucal$, so are $f(x^*) = x^*$ and $f(x_0)$ due to continuity of $f$. 
    Thus $\Ucal_0$ is an open positively invariant set containing $x^*$ and contained in $B_{\varepsilon}(x^*)$.  \\
    Since $V$ is continuous, there is a $\delta > 0$ such that $B_\delta(x^*) \subset \Ucal_0$. 
    So if $x_0 \in B_\delta$, then $x(t) \in \Ucal_0 \subset B_{\varepsilon}(x^*)$.
\end{proof}

We proceed with a variant of LaSalle's invariance principle. We later use this theorem to prove asymptotic stability by Lyapunov functions and our corollary characterizing the basin of attraction.

\begin{theorem}[LaSalle's invariance principle for discrete-time systems]
    \label{thm:LaSalle-invariance-principle-discrete-on-domain-D}
    If $V$ is a Lyapunov function in the positively invariant, compact set $\Ucal \subseteq \Xcal$ for the discrete-time system (\ref{eq:discrete-dynamical-system-on-domain-D}) with a continuous function $f$, then $x(n) \to \Mcal$ where $\Mcal$ is the largest positively invariant set contained in the set $\Ecal = \{x \in \R^n: V(f(x)) - V(x) = 0\} \cap \Ucal$.
\end{theorem}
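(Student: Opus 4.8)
The plan is to run the standard $\omega$-limit set argument, adapted to the discrete-time, subspace-topology setting. Fix any $x_0 \in \Ucal$. Since $\Ucal$ is positively invariant for the dynamics, the whole orbit $\{x(n)\}_{n \geq 0}$ stays inside the compact set $\Ucal$. First I would note that the scalar sequence $V(x(n))$ is nonincreasing, because $\Delta V(x(n)) = V(f(x(n))) - V(x(n)) \leq 0$ on $\Ucal$, and bounded below, because $V$ is continuous on the compact set $\Ucal$; hence $V(x(n)) \to c$ for some $c \in \R$.

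Next I would introduce the $\omega$-limit set $\Omega := \{\, y \in \Ucal : x(n_k) \to y \text{ for some } n_k \to \infty \,\}$. Because the orbit lives in the compact set $\Ucal$, $\Omega$ is nonempty and closed, hence compact. Two facts carry the proof. (i) $\Omega$ is positively invariant: if $x(n_k) \to y$, then by continuity of $f$ we have $x(n_k + 1) = f(x(n_k)) \to f(y)$, so $f(y) \in \Omega$. (ii) $V \equiv c$ on $\Omega$: for $y \in \Omega$, picking $n_k$ with $x(n_k) \to y$ and using continuity of $V$ gives $V(y) = \lim_k V(x(n_k)) = c$. Combining (i) and (ii): for any $y \in \Omega$, both $y$ and $f(y)$ lie in $\Omega$, so $V(f(y)) - V(y) = 0$ and $y \in \Ecal = \{x : V(f(x)) - V(x) = 0\} \cap \Ucal$ (recall $\Omega \subseteq \Ucal$). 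Thus $\Omega$ is a positively invariant subset of $\Ecal$, and since $\Mcal$ is by definition the largest such set, $\Omega \subseteq \Mcal$.

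It remains to show $\mathrm{dist}(x(n), \Omega) \to 0$; together with $\Omega \subseteq \Mcal$ this gives $\mathrm{dist}(x(n), \Mcal) \to 0$, i.e.\ $x(n) \to \Mcal$. For this I would argue by contradiction: if $\mathrm{dist}(x(n), \Omega) \not\to 0$, there exist $\varepsilon > 0$ and a subsequence with $\mathrm{dist}(x(n_k), \Omega) \geq \varepsilon$; by compactness of $\Ucal$ pass to a further subsequence converging to some $z$, which then belongs to $\Omega$ by definition of the $\omega$-limit set, yet satisfies $\mathrm{dist}(z, \Omega) \geq \varepsilon$ --- a contradiction.

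The main obstacle is not a deep step but the care needed to make the topological bookkeeping intrinsic to $\Xcal$: ``compact'', ``closed'', nonemptiness of the $\omega$-limit set, and the distance-to-a-set arguments must all be read in the subspace topology on $\Xcal$, which matters because $\Xcal$ (e.g.\ the simplex) may be Lebesgue-null in $\R^n$; here it helps that for a subset of a metric space the subspace topology is just the metric topology, so compactness is unambiguous. The genuinely content-bearing ingredients are the positive invariance of $\Omega$ obtained from continuity of $f$ and the constancy of $V$ on $\Omega$; once these are in place, the convergence $x(n) \to \Mcal$ follows from routine compactness arguments.
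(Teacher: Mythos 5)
Your proof is correct and follows essentially the same route as the paper's: the $\omega$-limit set argument with $V(x(n)) \to c$, constancy of $V$ on the limit set, positive invariance, and hence containment in $\Mcal$. The only difference is that the paper delegates the basic facts about $\omega(x_0)$ (nonemptiness, invariance, and $x(t) \to \omega$) to Proposition~1.10 of \citet{lasalle_stability_1986}, whereas you prove them directly from compactness and continuity of $f$.
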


\begin{proof}
    If $x_0 \in \Ucal$, we have $x(t) \in \Ucal$, so $\omega = \omega(x_0) \neq \emptyset$, $\omega \subseteq \Ucal$, and $x(t)$ tends to $\omega$ by Proposition 1.10 in \citep{lasalle_stability_1986}. 
    Now $V(x(t))$ is non-increasing and bounded below, so $V(x(t)) \to c$. 
    If $y \in \omega$, then there is a subsequence $t_k$ such that $x(t_k) \to y$, so $V(x(t_k)) \to V(y)$ (since $V$ is continuous) or $V(y) = c$. 
    Thus, $V(\omega) = c$ or $\omega \subseteq V^{-1}(c)$.
     Also, since $V(\omega) = c$ and $\omega$ is positively invariant, $\Delta V(\omega) = 0$. So $x(t) \to \omega \subseteq \{x \in \R^n: \Delta V = 0\} \cap \bar{\Ucal} \cap V^{-1}(c)$. 
     Since $\omega$ is positively invariant, we have $\omega \subseteq \Mcal$. 
\end{proof}

Now, we restate and proof Theorem \ref{thm:asymptotic-stability-on-domain-D}.

\begin{theorem}[Asymptotic stability for discrete-time systems \citep{lasalle_stability_1986}]
    If $V$ is a Lyapunov function such that
    \begin{enumerate}[label=\roman*)]
        \item $V(x^*) = 0$, 
        \item $V(x) > 0$ on $x \in B_\delta(x^*) \setminus \{x^*\}$ for some $\delta > 0$, and
        \item $V(f(x)) - V(x) < 0$ for all $x \in \Ucal \setminus \{x^*\}$,
    \end{enumerate}
    then $x^*$ is asymptotically stable.
\end{theorem}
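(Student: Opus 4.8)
The plan is to establish the two ingredients of asymptotic stability separately: Lyapunov stability, which is immediate, and local attractivity, which I would obtain from LaSalle's invariance principle (Theorem \ref{thm:LaSalle-invariance-principle-discrete-on-domain-D}). Since the hypotheses here are strictly stronger than those of Theorem \ref{thm:Lyapunov-stability-discrete-time-on-domain-D} (condition iii in particular implies $\Delta V \leq 0$ on a neighborhood of $x^*$), that theorem already gives that $x^*$ is a stable equilibrium. It remains to exhibit a $\delta' > 0$ such that $\norm{x_0 - x^*} < \delta'$ implies $x(t) \to x^*$.

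The core step is to construct a compact, positively invariant neighborhood of $x^*$ on which LaSalle applies and on which the ``invariant set inside $\{\Delta V = 0\}$'' collapses to the single point $x^*$. Concretely, I would first shrink $\Ucal$ if necessary so that $\Ucal \subseteq B_\delta(x^*)$ with $B_\delta(x^*)$ compact (balls here are closed by the convention $B_\delta(x) = \{y \in \Xcal : \norm{y-x} \leq \delta\}$), and so that $\Delta V(x) \leq 0$ holds throughout. By the stability just established, there is a neighborhood $\Vcal$ of $x^*$ whose forward orbit stays in the interior of $\Ucal$; using continuity and positive definiteness of $V$, pick $c > 0$ small enough that the sublevel set $\Ucal_c = \{x \in \Xcal : V(x) \leq c\}$ has a connected component $\Ucal_c^0$ containing $x^*$ with $\Ucal_c^0 \subseteq \Vcal \cap \Ucal$. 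Then $\Ucal_c^0$ is closed (continuity of $V$), bounded (contained in $B_\delta(x^*)$), hence compact; it is positively invariant because $\Delta V \leq 0$ forces $V(f(x)) \leq V(x) \leq c$ and continuity of $f$ keeps the iterate in the same connected component, as in the proof of Theorem \ref{thm:Lyapunov-stability-discrete-time-on-domain-D}.

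With $\Ucal_c^0$ in hand, apply Theorem \ref{thm:LaSalle-invariance-principle-discrete-on-domain-D} with this set as the compact positively invariant domain: every trajectory starting in $\Ucal_c^0$ approaches the largest positively invariant subset $\Mcal$ of $\Ecal = \{x \in \Ucal_c^0 : V(f(x)) - V(x) = 0\}$. By hypothesis (iii), $\Delta V(x) < 0$ for every $x \in \Ucal_c^0 \setminus \{x^*\}$, so $\Ecal = \{x^*\}$, and therefore $\Mcal = \{x^*\}$. Hence $x(t) \to x^*$ for all $x_0 \in \Ucal_c^0$, and since $\Ucal_c^0$ contains a ball $B_{\delta'}(x^*)$ for some $\delta' > 0$, this is exactly local attractivity. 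Combined with Lyapunov stability, $x^*$ is asymptotically stable.

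The main obstacle I anticipate is purely the set-topological bookkeeping around the sublevel set: ensuring that $\Ucal_c^0$ is simultaneously (a) a genuine neighborhood of $x^*$ in the subspace topology, (b) compact, and (c) positively invariant, all while living inside the region where conditions (ii) and (iii) are valid. The positivity of $V$ only on $B_\delta(x^*) \setminus \{x^*\}$ (rather than globally) is what forces the passage to a connected component and the use of the already-proved stability statement to pin the orbit down; getting the order of these quantifiers right is the delicate part, whereas the LaSalle step itself is then routine.
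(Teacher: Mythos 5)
Your proposal is correct and follows essentially the same route as the paper's own proof: Lyapunov stability from Theorem \ref{thm:Lyapunov-stability-discrete-time-on-domain-D}, a small positively invariant neighborhood of $x^*$, and then LaSalle's invariance principle with the largest invariant set in $\{\Delta V = 0\}$ collapsing to $\{x^*\}$. Your additional care in taking a \emph{compact} (closed sublevel-set component) invariant neighborhood is a welcome refinement rather than a departure, since Theorem \ref{thm:LaSalle-invariance-principle-discrete-on-domain-D} requires compactness and the paper's proof only invokes an open invariant neighborhood.
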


\begin{proof}[Proof of Theorem \ref{thm:asymptotic-stability-on-domain-D}]
    Since $\Delta V > 0$ on a neighborhood of $x^*$, the point $x^*$ is stable by Theorem \ref{thm:Lyapunov-stability-discrete-time-on-domain-D}.
    From the proof of Theorem \ref{thm:Lyapunov-stability-discrete-time-on-domain-D}, there is an arbitrary small neighborhood $\Ucal_0$ of $x^*$ which is positively invariant. We can make $\Ucal_0$ so small that $V(x) > 0$ and $\Delta V(x) < 0$ for $x \in \Ucal \setminus \{x^*\}$. Given any $x_0 \in \Ucal_0$, we have, by the invariance principle, that $x(t)$ tends to the largest invariant set in $\Ucal_0 \cap \{\Delta V (x) = 0\} = \{x^*\}$ since $- \Delta V$ is positive definite. Thus, $x^*$ is asymptotically stable. 
\end{proof}

LaSalle's invariance theorem can again be used to approximate the basin of attraction, as stated in Corollary \ref{cor:basin-of-attraction-discrete}:
\begin{corollary}[Basin of attraction for discrete-time systems]
    For an equilibrium $x^*$ that is asymptotically stable according to the Lyapunov function $V$, let 
    \begin{equation*}
        \Vcal = \{x^*\} \cup \{x : V(x) > 0, \Delta V(x) < 0\} \quad \text{and} \quad \Ucal_{c} = \{x: V(x) \leq c\}
    \end{equation*}
    If $\Ucal_{c}l$ and $\Ucal_{c}$ is bounded, $\Ucal_{c}$ is a subset of the basin of attraction.
\end{corollary}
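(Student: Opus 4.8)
The plan is to obtain the corollary as a direct consequence of LaSalle's invariance principle for discrete-time systems (Theorem~\ref{thm:LaSalle-invariance-principle-discrete-on-domain-D}), invoked with the set $\Ucal_c$ as the compact, positively invariant domain. The reading of the hypothesis is the natural one: $\Ucal_c \subseteq \Vcal$ and $\Ucal_c$ bounded.

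First I would check that $\Ucal_c$ is an admissible domain for LaSalle's theorem. Closedness is immediate, since $V$ is continuous and $\Ucal_c = V^{-1}((-\infty,c])$; combined with the assumed boundedness this makes $\Ucal_c$ compact. Next I would verify positive invariance. Take any $x \in \Ucal_c$. By hypothesis $x \in \Vcal$, so either $x = x^*$, in which case $f(x) = x^* \in \Ucal_c$, or else $V(x) > 0$ and $\Delta V(x) < 0$; in the latter case $V(f(x)) = V(x) + \Delta V(x) < V(x) \le c$, so $f(x) \in \Ucal_c$. In both cases $\Delta V(x) \le 0$ on all of $\Ucal_c$, so $V$ is a Lyapunov function on $\Ucal_c$ in the sense of the appendix definition, and $\Ucal_c$ is positively invariant by induction on $t$.

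Then I would apply Theorem~\ref{thm:LaSalle-invariance-principle-discrete-on-domain-D}: every trajectory with $x(0) \in \Ucal_c$ satisfies $x(t) \to \Mcal$, where $\Mcal$ is the largest positively invariant set contained in $\Ecal = \{x : \Delta V(x) = 0\} \cap \Ucal_c$. The decisive step is to identify $\Ecal = \{x^*\}$: if $x \in \Ucal_c$ with $x \ne x^*$ then $x \in \Vcal \setminus \{x^*\}$ forces $\Delta V(x) < 0$, hence $x \notin \Ecal$; and $x^* \in \Ecal$ because $f(x^*) = x^*$. Since $\{x^*\}$ is trivially positively invariant, $\Mcal = \{x^*\}$, so $x(t) \to x^*$ for every initialization in $\Ucal_c$, which is precisely the statement that $\Ucal_c$ lies in the basin of attraction of $x^*$.

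The only genuinely delicate point is the positive-invariance argument: because $\Vcal$ only certifies \emph{strict} decrease away from $x^*$, the equilibrium has to be treated as a separate (trivial) case, and one relies on the strict inequality $V(f(x)) < V(x)$ to keep the next iterate inside the sublevel set even when $x$ sits on the boundary $\{V = c\}$. Everything else — compactness of $\Ucal_c$, and reading $\Ecal = \{x^*\}$ off $\Ucal_c \subseteq \Vcal$ — is bookkeeping. I would also note that this is the exact discrete-time analogue of Corollary~\ref{cor:basin_attraction}, with $\Delta V$ playing the role of $\dot V$.
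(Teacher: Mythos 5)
Your proof is correct and follows essentially the same route as the paper's: establish that $\Ucal_c \subseteq \Vcal$ makes $\Ucal_c$ positively invariant, then apply the discrete-time LaSalle invariance principle (Theorem~\ref{thm:LaSalle-invariance-principle-discrete-on-domain-D}) and identify $\Ecal \cap \Ucal_c = \{x^*\}$. Your writeup is in fact more careful than the paper's one-line argument, in particular in verifying compactness and the positive-invariance step at the boundary $\{V = c\}$.
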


\begin{proof}[Proof of Corollary \ref{cor:basin-of-attraction-discrete}]
    If $\Ucal_{c} \subseteq \Vcal$, $\Ucal_{c}$ is invariant. LaSalle's invariance principle (Theorem \ref{thm:LaSalle-invariance-principle-discrete-on-domain-D}) tells us that $x(n) \to \Mcal$ with $\Mcal \subseteq \Ecal = \{x \in \R^n: V(f(x)) - V(x) = 0\} = \{x^*\}$.
\end{proof}

\section{Additional Proofs} \label{app:proofs}

\begin{proof}[Proof of Proposition~\ref{prop:variational-stability-by-pure-deviations}]
	Using that the expected utility is linear in the agent's own argument (cf. Definition~\ref{def:mixed_extension}, it is easy to see that 
	\begin{equation*}
		\langle F(x), x - x' \rangle = \sum_i \langle F_i(x), x_i \rangle - \langle F_i(x), x'_i \rangle = \sum_{i = 1}^{n} \left(u_i(x) - u_i(x_i', x_{-i})\right).
	\end{equation*}
	The "implies" direction of our theorem simply follows by the fact that all pure strategies are also mixed strategies. If the variational stability condition holds for the latter, it must also hold for the former.\\
	Next, let us show the "implied by" direction of our theorem. To that end, assume that $\langle F(x), x - x^* \rangle < 0$ for all \emph{pure} $x \neq x^*$. We want to establish that $\langle v(x), x - x^* \rangle < 0$ for \emph{all} (potentially non-pure) $x \neq x^*$.
	\begin{equation*}
		\begin{split}
			\langle F(x), x - x^* \rangle 
			&= \sum_{i = 1}^{n} u_i(x) - u_i(x_i^*, x_{-i}) \\
			&= \sum_{i = 1}^{n} \E_{a \sim x}\left[u_i(a) - u_i(a_i^*, a_{-i}) \right]\\
			&= \sum_{i = 1}^{n} \sum_{a \in \Acal} x(a) \cdot \left(u_i(a) - u_i(a^*, a_{-i})\right) \\
			&= \sum_{a \in \Acal} x(a) \underbrace{\sum_{i = 1}^{n} \left(u_i(a) - u_i(a^*, a_{-i})\right)}_{< 0} < 0
		\end{split}
	\end{equation*}
\end{proof}

\begin{proof}[Proof of Proposition~\ref{prop:unique_spne_no_global_convergence}]
We proof the statement by a counterexample. Consider the following game, which we call "Extended Matching Pennies", with parameters $r, q \in \mathbb{R}^+$, $q > 1$, and payoff matrices
\begin{equation*}
	\begin{aligned}
		A_1 = \begin{bmatrix}
		r & -q & -q \\
		- \nicefrac{q}{2} & 1 & -1 \\
		- \nicefrac{q}{2} & -1 & 1
	\end{bmatrix},
	& &
	A_2 = \begin{bmatrix}
		r & - \nicefrac{q}{2} & - \nicefrac{q}{2} \\
		- q & -1 & 1 \\
		- q & 1 & -1
	\end{bmatrix}
	\end{aligned}.
\end{equation*}
We denote the strategy components of player 1 and player 2 by $x_1 = (x_1^1, x_1^2, x_1^3)$ and $x_2 = (x_2^1, x_2^2, x_2^3)$.
This game is a matching pennies game for the last two actions of each player, but it also allows them to play a third action. The strict equilibrium of this game is achieved when both players play the first action and receive a reward of $r$. We want to analyze the convergence behavior of the game towards or away from this equilibrium. For this, we consider the (projected) gradients of the game and how they evolve for the component $x_1^1$.

The game gradient of player 1 is
\begin{equation*}
	F_1 = \nabla_{x_1} u_1 = A_1 x_2 = \begin{bmatrix}
		r x_2^1 - q(x_2^2 + x_2^3) \\
		- \nicefrac{q}{2} \cdot x_2^1 + x_2^2 - x_2^3 \\
		- \nicefrac{q}{2} \cdot x_2^1 - x_2^2 + x_2^3
	\end{bmatrix}
\end{equation*}

Assume that $x_2^1 = 0$. In this case, player 1 will not have an incentive to increase $x_1^1$. If randomly initialized, this scenario happens with probability zero. Next, we analyze the setting if $x_2^1 = \epsilon > 0$. Assuming all strategy components are non-zero, the LPDS projection of the gradient $F$ to the simplex can be described by a matrix product. Let $n = (1, 1, 1)^T$ be the normal vector of the simplex. Then the projected gradient is
\begin{equation*}
	\Pi_{T_{\Xcal}}(F_1) = (I - n (n^T n)^{-1} n^T) v_1 = \begin{bmatrix}
		\nicefrac{2}{3} r x_2^1 - q(x_2^2 + x_2^3) + \frac{q}{3}\\
		(\nicefrac{r}{3} - \nicefrac{q}{2}) x_2^1 + x_2^2 - x_2^3 + \frac{q}{3} \\
		(\nicefrac{r}{3} - \nicefrac{q}{2}) x_2^1 - x_2^2 + x_2^3 + \frac{q}{3}
	\end{bmatrix}
\end{equation*}

Let us analyze under which conditions the first component, $x_1^1$, increases.
\begin{equation*}
	\begin{split}
		\frac{2}{3} r x_2^1 - q(x_2^2 + x_2^3) + \frac{q}{3} &> 0 \\
		\frac{2}{3} r \epsilon - q(1 - \epsilon) + \frac{q}{3} &> 0 \\
		(\frac{2}{3} r + q) \epsilon &> \frac{2}{3} q \\
		\epsilon &> \frac{\frac{2}{3} q}{\frac{2}{3} r + q}
	\end{split}
\end{equation*}
A symmetric argument can be made for the utility gradient of player 2 and its change in component $x_2^1$. Thus, if the initial point is chosen such that $x_2^1 \leq \frac{\nicefrac{2}{3} q}{\nicefrac{2}{3} r + q}$ and $x_1^1 \leq \frac{\nicefrac{2}{3} q}{\nicefrac{2}{3} r + q}$, both agents will never increase their weight on action 1, meaning we will not converge to the strict Nash equilibrium. Since $\epsilon > 0$, such initializations occur with non-zero probability under most initialization methods.
\end{proof}

\end{appendix}

\end{document}